\documentclass{article}


\PassOptionsToPackage{numbers}{natbib}
\usepackage[final]{neurips_2023}




\usepackage[utf8]{inputenc} 
\usepackage[T1]{fontenc}    
\usepackage{hyperref}       
\usepackage{url}            
\usepackage{booktabs}       
\usepackage{amsfonts}       
\usepackage{nicefrac}       
\usepackage{microtype}      
\usepackage{xcolor}         
\usepackage{ulem}

\usepackage{hyperref}
\usepackage{multicol}
\usepackage{multirow}

\usepackage{algorithm}
\usepackage{algorithmic}


\usepackage{amsmath,amsfonts,bm}









\def\eqref#1{equation~\ref{#1}}









\def\1{\bm{1}}







\def\vzero{{\bm{0}}}
\def\vone{{\bm{1}}}

\def\va{{\bm{a}}}

\def\vf{{\bm{f}}}

\def\vh{{\bm{h}}}

\def\vk{{\bm{k}}}
\def\vl{{\bm{l}}}
\def\vm{{\bm{m}}}

\def\vp{{\bm{p}}}

\def\vt{{\bm{t}}}

\def\vx{{\bm{x}}}

\def\vepsilon{{\bm{\epsilon}}}


\def\mA{{\bm{A}}}

\def\mF{{\bm{F}}}

\def\mH{{\bm{H}}}
\def\mI{{\bm{I}}}

\def\mL{{\bm{L}}}

\def\mP{{\bm{P}}}
\def\mQ{{\bm{Q}}}

\def\mX{{\bm{X}}}

\def\mZ{{\bm{Z}}}

\DeclareMathAlphabet{\mathsfit}{\encodingdefault}{\sfdefault}{m}{sl}
\SetMathAlphabet{\mathsfit}{bold}{\encodingdefault}{\sfdefault}{bx}{n}


\def\gC{{\mathcal{C}}}

\def\gL{{\mathcal{L}}}
\def\gM{{\mathcal{M}}}
\def\gN{{\mathcal{N}}}

\def\gS{{\mathcal{S}}}
\def\gT{{\mathcal{T}}}
\def\gU{{\mathcal{U}}}



\def\sN{{\mathbb{N}}}

\def\sR{{\mathbb{R}}}

\def\sZ{{\mathbb{Z}}}








\newcommand{\E}{\mathbb{E}}

\newcommand{\R}{\mathbb{R}}



\usepackage{wrapfig}

\usepackage{amsmath}
\usepackage{amssymb}
\usepackage{mathtools}
\usepackage{amsthm}
\usepackage{todonotes}

\usepackage[capitalize,noabbrev]{cleveref}

\theoremstyle{plain}
\newtheorem{theorem}{Theorem}
\newtheorem{lemma}[theorem]{Lemma}
\newtheorem{definition}{Definition}

\usepackage{colortbl}
\usepackage{thmtools}
\usepackage{thm-restate}
\usepackage{cleveref}
\crefname{section}{§}{§§}
\Crefname{section}{§}{§§}

\usepackage{tablefootnote}

\renewcommand{\emph}[1]{\textit{#1}}

\title{Crystal Structure Prediction \\ by Joint Equivariant Diffusion}

%

\author{%
  Rui Jiao$^{1,2}$~~Wenbing Huang$^{3,4 \ast}$~~Peijia Lin$^5$~Jiaqi Han$^{6}$~~Pin Chen$^5$~~Yutong Lu$^5$~~Yang Liu$^{1,2}$\thanks{Wenbing Huang and Yang Liu are corresponding authors.}\\
  $^1$Dept. of Comp. Sci. \& Tech., Institute for AI, Tsinghua University \\
  $^2$Institute for AIR, Tsinghua University \\
  $^3$Gaoling School of Artificial Intelligence, Renmin University of China \\
  $^4$ Beijing Key Laboratory of Big Data Management and Analysis Methods, Beijing, China \\
  $^5$ National Supercomputer Center in Guangzhou, \\ School of Computer Science and Engineering, Sun Yat-sen University \\
  $^6$ Stanford University
}

\begin{document}

\maketitle

\begin{abstract}

Crystal Structure Prediction (CSP) is crucial in various scientific disciplines. While CSP can be addressed by employing currently-prevailing generative models (\emph{e.g.} diffusion models), this task encounters unique challenges owing to the symmetric geometry of crystal structures---the invariance of translation, rotation, and periodicity. To incorporate the above symmetries, this paper proposes DiffCSP, a novel diffusion model to learn the structure distribution from stable crystals. To be specific, DiffCSP jointly generates the lattice and atom coordinates for each crystal by employing a periodic-E(3)-equivariant denoising model, to better model the crystal geometry. Notably, different from related equivariant generative approaches, DiffCSP leverages fractional coordinates other than Cartesian coordinates to represent crystals, remarkably promoting the diffusion and the generation process of atom positions. Extensive experiments verify that our DiffCSP significantly outperforms existing CSP methods, with a much lower computation cost in contrast to DFT-based methods. Moreover, the superiority of DiffCSP is also observed when it is extended for ab initio crystal generation. Code is available at \url{https://github.com/jiaor17/DiffCSP}.

\end{abstract}

\section{Introduction}

Crystal Structure Prediction (CSP), which returns the stable 3D structure of a compound based solely on its composition, has been a goal in physical sciences since the 1950s~\citep{desiraju2002cryptic}. As crystals are the foundation of various materials, estimating their structures in 3D space determines the physical and chemical properties that greatly influence the application to various academic and industrial sciences, such as the design of drugs, batteries, and catalysts~\citep{butler2018machine}. Conventional methods towards CSP mostly apply the Density Functional Theory (DFT)~\citep{kohn1965self} to compute the energy at each iteration, guided by optimization algorithms (such as random search~\citep{pickard2011ab}, Bayesian optimization~\citep{yamashita2018crystal}, etc.) to iteratively search for the stable state corresponding to the local minima of the energy surface~\citep{oganov2019structure}.

The DFT-based approaches are computationally-intensive. Recent attention has been paid to deep generative models, which directly learn the distribution from the training data consisting of stable structures~\citep{court20203,yang2021crystal}. 
More recently, diffusion models, a special kind of deep generative models are employed for crystal generation~\citep{xie2021crystal}, encouraged by their better physical interpretability and enhanced performance than other generative models. Intuitively, by conducting diffusion on stable structures, the denoising process in diffusion models acts like a force field that drives the atom coordinates towards the energy local minimum and thus is able to increase stability. Indeed, the success of diffusion models is observed in broad scientific domains, including molecular conformation generation~\citep{xu2021geodiff}, protein structure prediction~\citep{trippe2022diffusion} and protein docking~\citep{corso2022diffdock}. 

However, designing diffusion models for CSP is challenging. From the perspective of physics, any E(3) transformation, including translation, rotation, and reflection, of the crystal coordinates does not change the physical law and thus keeps the crystal distribution invariant. In other words, the generation process we design should yield E(3) invariant samples. Moreover, in contrast to other types of structures such as small molecules~\citep{shi2021learning} and proteins~\citep{AlphaFold2021}, CSP exhibits unique challenges, mainly incurred by the periodicity of the atom arrangement in crystals. Figure~\ref{fig:sym} displays a crystal where the atoms in a unit cell are repeated infinitely in space. We identify such unique symmetry, jointly consisting of E(3) invariance and periodicity, as \emph{periodic E(3) invariance} in this paper.
Generating such type of structures requires not only modeling the distribution of the atom coordinates within every cell, but also inferring how their bases (\emph{a.k.a.} lattice vectors) are placed in 3D space. Interestingly, as we will show in~\textsection~\ref{sec:symmetry}, such view offers a natural disentanglement for fulfilling the periodic E(3) invariance by separately enforcing constraints on fractional coordinates and lattice vectors, which permits a feasible implementation to encode the crystal symmetry.


In this work, we introduce DiffCSP, an equivariant diffusion method to address CSP. Considering the specificity of the crystal geometry, our DiffCSP jointly generates the lattice vectors and the fractional coordinates of all atoms, by employing a proposed denoising model that is theoretically proved to generate periodic-E(3)-invariant samples. A preferable characteristic of DiffCSP is that it leverages the fractional coordinate system (defined in~\textsection~\ref{sec:notation}) other than the Cartesian system used in previous methods to represent crystals~\citep{xie2021crystal, PhysRevLett.120.145301}, 
which encodes periodicity intrinsically. In particular, the fractional representation not only allows us to consider Wrapped Normal (WN) distribution~\citep{jing2022torsional} to better model the periodicity, but also facilitates the design of the denoising model via the Fourier transformation, compared to the traditional multi-graph encoder in crystal modeling~\citep{PhysRevLett.120.145301}.

CDVAE~\citep{xie2021crystal} is closely related with our paper. It adopts an equivariant Variational Auto-Encoder (VAE) based framework to learn the data distribution and then generates crystals in a score-matching-based diffusion process. However, CDVAE focuses mainly on ab initio crystal generation where the composition is also randomly sampled, which is distinct from the CSP task in this paper. Moreover, while CDVAE first predicts the lattice and then updates the coordinates with the fixed lattice, we jointly update the lattice and coordinates to better model the crystal geometry. Besides, CDVAE represents crystals by Cartesian coordinates upon multi-graph modeling, whereas our DiffCSP applies fractional coordinates without multi-graph modeling as mentioned above.

To sum up, our contributions are as follows:
\begin{itemize}
    \item To the best of our knowledge, we are the first to apply equivariant diffusion-based methods to address CSP. The proposed DiffCSP is more insightful than current learning-based approaches as the periodic E(3) invariance has been delicately considered.
    \item DiffCSP conducts joint diffusion on lattices and fractional coordinates, which is capable of capturing the crystal geometry as a whole. Besides, the usage of fractional coordinates in place of Cartesian coordinates used in previous methods (\emph{e.g.} CDVAE~\citep{xie2021crystal}) remarkably promotes the diffusion and the generation process of atom positions. 
    \item We verify the efficacy of DiffCSP on the CSP task against learning-based and DFT-based methods, and sufficiently ablate each proposed component in DiffCSP. We further extend DiffCSP into ab initio generation and show its effectiveness against related methods.
\end{itemize}

\section{Related Works}

\textbf{Crystal Structure Prediction} Traditional computation methods~\citep{pickard2011ab,yamashita2018crystal,wang2010crystal, zhang2017computer} combine DFT~\citep{kohn1965self} with optimization algorithms to search for local minima in the potential energy surface. However, DFT is computationally intensive, making it dilemmatic to balance efficiency and accuracy. With the improvement of crystal databases, machine-learning methods are applied as alternative energy predictors to DFT followed by optimization steps~\citep{jacobsen2018fly, podryabinkin2019accelerating, cheng2022crystal}. Apart from the predict-optimize paradigm, another line of approaches directly learns stable structures from data by deep generative models, which represents crystals by 3D voxels~\citep{court20203, hoffmann2019data,noh2019inverse}, distance matrices~\citep{yang2021crystal,hu2020distance,hu2021contact} or 3D coordinates~\citep{nouira2018crystalgan, kim2020generative, REN2021}.
Unfortunately, these methods are unaware of the full symmetries of the crystal structure. CDVAE~\citep{xie2021crystal} has taken the required symmetries into account. However, as mentioned above, the initial version of CDVAE is for different task and utilizes different generation process.

\textbf{Equivariant Graph Neural Networks} Geometrically equivariant Graph Neural Networks (GNNs) that ensure E(3) symmetry are powerful tools to represent physical objects~\citep{schutt2018schnet, thomas2018tensor, DBLP:conf/nips/FuchsW0W20, satorras2021n, tholke2021equivariant}, and have showcased the superiority in modeling 3D structures~\citep{ocp_dataset, tran2022open}. To further model the periodic materials,~\citet{PhysRevLett.120.145301} propose the multi-graph edge construction to capture the periodicity by connecting the edges between adjacent lattices. ~\citet{yan2022periodic} further introduce periodic pattern encoding into a Transformer-based backbone. In this work, we achieve the periodic invariance by introducing the Fourier transform on fractional coordinates.

\textbf{Diffusion Generative Models} Motivated by the non-equilibrium thermodynamics~\citep{sohl2015deep}, diffusion models connect the data distribution with the prior distribution via forward and backward Markov chains~\citep{ho2020denoising}, and have made remarkable progress in the field of image generation~\citep{rombach2021highresolution, ramesh2022}. Equipped with equivariant GNNs, diffusion models are capable of generating samples from the invariant distribution, which is desirable in conformation generation~\citep{xu2021geodiff, shi2021learning}, ab initio molecule design~\citep{hoogeboom2022equivariant}, protein generation~\citep{luo2022antigen}, and so on. Recent works extend the diffusion models onto Riemann manifolds~\citep{de2022riemannian,huang2022riemannian}, and enable the generation of periodic features like torsion angles~\citep{corso2022diffdock, jing2022torsional}. 

\section{Preliminaries}
\label{sec:notation}
\textbf{Representation of crystal structures} 
A 3D crystal can be represented as the infinite periodic arrangement of atoms in 3D space, and the smallest repeating unit is called a \textit{unit cell}, as shown in Figure~\ref{fig:sym}.
A unit cell can be defined by a triplet $\gM=(\mA, \mX, \mL)$, where $\mA=[\va_1, \va_2, ..., \va_N]\in\sR^{h\times N}$ denotes the list of the one-hot representations of atom types, $\mX=[\vx_1, \vx_2,...,\vx_N]\in\sR^{3\times N}$ consists of Cartesian coordinates of the atoms, and $\mL=[\vl_1, \vl_2, \vl_3]\in\sR^{3\times 3}$ represents the lattice matrix containing three basic vectors to describe the periodicity of the crystal. The infinite periodic crystal structure is represented by
\begin{align}
    \{(\va_i',\vx_i')|\va_i'=\va_i,\vx_i'=\vx_i + \mL\vk, \forall\vk\in\sZ^{3\times 1}\},
\end{align}
where the $j$-th element of the integral vector $\vk$ denotes the integral 3D translation in units of $\vl_j$.

\textbf{Fractional coordinate system}
The Cartesian coordinate system $\mX$ leverages three standard orthogonal bases as the coordinate axes. In crystallography, the fractional coordinate system is usually applied to reflect the periodicity of the crystal structure~\citep{nouira2018crystalgan,kim2020generative,REN2021,hofmann2003crystal}, which utilizes the lattices $(\vl_1, \vl_2, \vl_3)$ as the bases. In this way, a point represented by the fractional coordinate vector $\vf=[f_1, f_2, f_3]^\top\in [0,1)^{3}$ corresponds to the Cartesian vector $\vx = \sum_{i=1}^3 f_i\vl_i$. This paper employs the fractional coordinate system, and denotes the crystal by $\gM=(\mA,\mF,\mL)$, where the fractional coordinates of all atoms in a cell compose the matrix $\mF\in[0,1)^{3\times N}$.

\textbf{Task definition}
CSP predicts for each unit cell the lattice matrix $\mL$ and the fractional matrix $\mF$ given its chemical composition $\mA$, namely, learning the conditional distribution $p(\mL,\mF\mid \mA)$.

\section{The Proposed Method: DiffCSP}

This section first presents the symmetries of the crystal geometry, and then introduces the joint equivaraint diffusion process on $\mL$ and $\mF$, followed by the architecture of the denoising function.

\subsection{Symmetries of Crystal Structure Distribution}
\label{sec:symmetry}

While various generative models can be utilized to address CSP, this task encounters particular challenges, including constraints arising from symmetries of crystal structure distribution. Here, we consider the three types of symmetries in the distribution of $p(\mL,\mF\mid \mA)$: permutation invariance, $O(3)$ invariance, and periodic translation invariance. Their detailed definitions are provided as follows.

\begin{definition}[Permutation Invariance]
\label{De:pi}
For any permutation $\mP\in\mathrm{S}_N$, $p(\mL,\mF\mid \mA)=p(\mL,\mF\mP\mid \mA\mP)$, \emph{i.e.}, changing the order of atoms will not change the distribution.  
\end{definition}

\begin{wrapfigure}[17]{r}{0.45\textwidth}
\vskip-0.3in
    \centering  \includegraphics[width=0.45\textwidth]{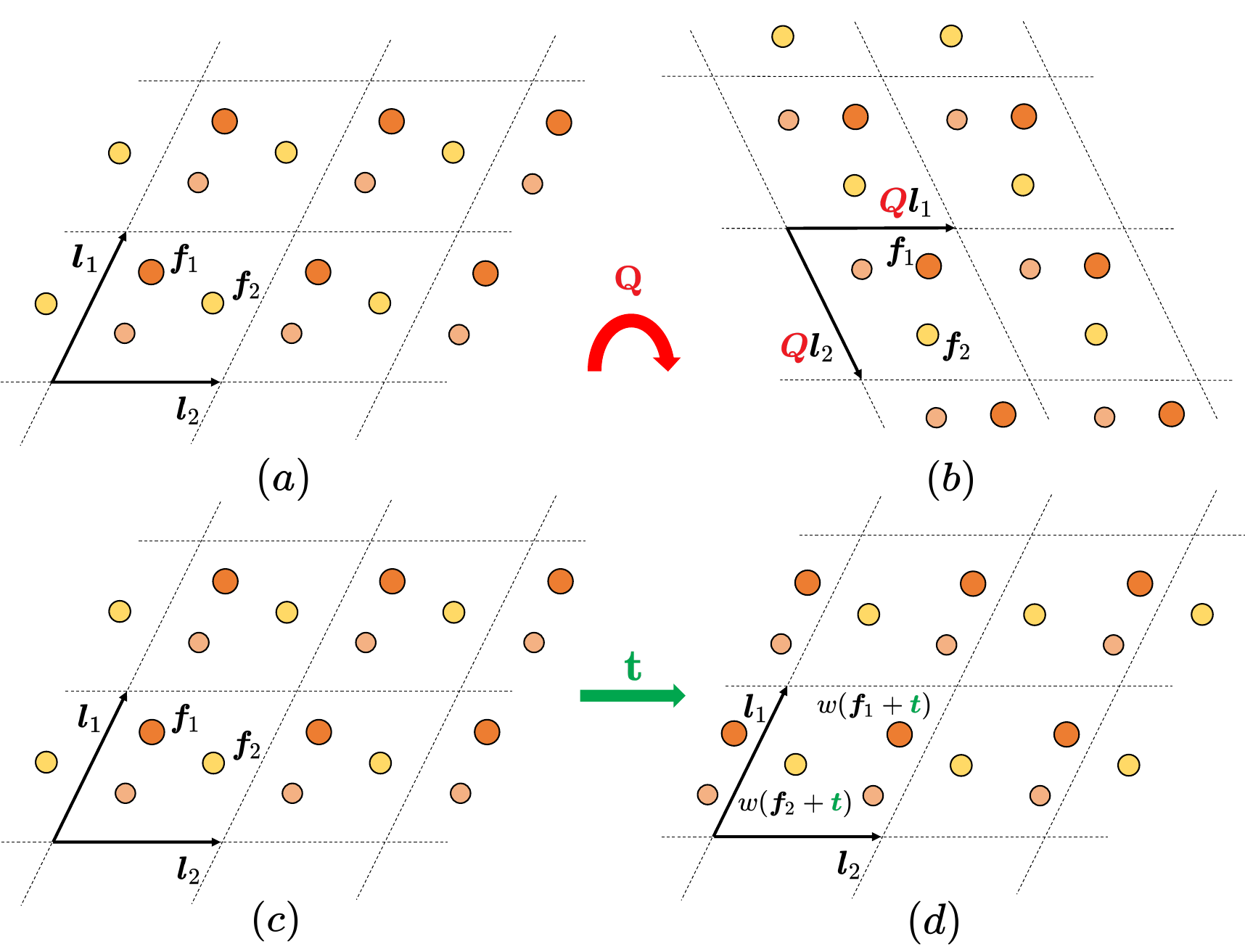}
    \vskip-0.05in
    \caption{(a)$\rightarrow$(b): The orthogonal transformation of the lattice vectors. (c)$\rightarrow$(d): The periodic translation of the fractional coordinates. Both cases do not change the structure. }
    \label{fig:sym}
\end{wrapfigure}

\begin{definition} [O(3) Invariance]
\label{De:oi}
For any orthogonal transformation $\mQ\in\sR^{3\times 3}$ satisfying $\mQ^\top\mQ=\mI$, $p(\mQ\mL,\mF\mid\mA)=p(\mL,\mF\mid\mA)$, namely, any rotation/reflection of $\mL$ keeps the distribution unchanged.   
\end{definition}
\begin{definition}[Periodic Translation Invariance]
\label{De:PTI}
For any translation $\vt\in\R^{3\times1}$, $p(\mL, w(\mF + \vt\vone^\top)\mid\mA)=p(\mL, \mF \mid\mA)$, where the function $w(\mF)=\mF - \lfloor\mF\rfloor \in [0,1)^{3\times N}$ returns the fractional part of each element in $\mF$, and $\vone\in\R^{3\times1}$ is a vector with all elements set to one. It explains that any periodic translation of $\mF$ will not change the distribution\footnote{Previous works (\emph{e.g.}~\cite{yan2022periodic}) further discuss the scaling invariance of a unit cell formed by periodic boundaries, allowing $\mL\rightarrow \alpha\mL, \forall\alpha\in\sN_{+}^3$. In this paper, the scaling invariance is unnecessary since we apply the Niggli reduction~\cite{grosse2004numerically} on the primitive cell as a canonical scale representation of the lattice vectors where we fix $\alpha=(1,1,1)^\top$. Additionally, periodic translation invariance in our paper is equivalent to the invariance of shifting periodic boundaries in~\cite{yan2022periodic}. We provide more discussions in Appendix~\ref{apd:connect}. }.  
\end{definition}

The permutation invariance is tractably encapsulated by using GNNs as the backbone for generation~\citep{kipf2016variational}. We mainly focus on the other two kinds of invariance (see Figure~\ref{fig:sym}), since GNNs are our default choices. 
For simplicity, we compactly term the $O(3)$ invariance and periodic translation invariance as \emph{periodic E(3) invariance} henceforth. 
Previous approaches (\emph{e.g.}~\citep{xie2021crystal,yan2022periodic}) 
adopt Cartesian coordinates $\mX$ other than fractional coordinates $\mF$, hence their derived forms of the symmetry are different. Particularly, in Definition~\ref{De:oi}, the orthogonal transformation additionally acts on $\mX$; in Definition~\ref{De:PTI}, the periodic translation $w(\mF+\vt\vone^\top)$ becomes the translation along the lattice bases $\mX + \mL\vt\vone^\top$; besides, $\mX$ should also maintain E(3) translation invariance, that is $p(\mL,\mX+\vt\vone^\top|\mA)=p(\mL,\mX|\mA)$.
With the help of the fractional system, the periodic E(3) invariance is made tractable by fulfilling O(3) invariance \emph{w.r.t.} the orthogonal transformations on $\mL$ and periodic translation invariance \emph{w.r.t.} the periodic translations on $\mF$, respectively. In this way, such approach, as detailed in the next section, facilitates the application of diffusion methods to the CSP task.

\begin{figure*}[t]
    \centering    
    \includegraphics[width=\textwidth]{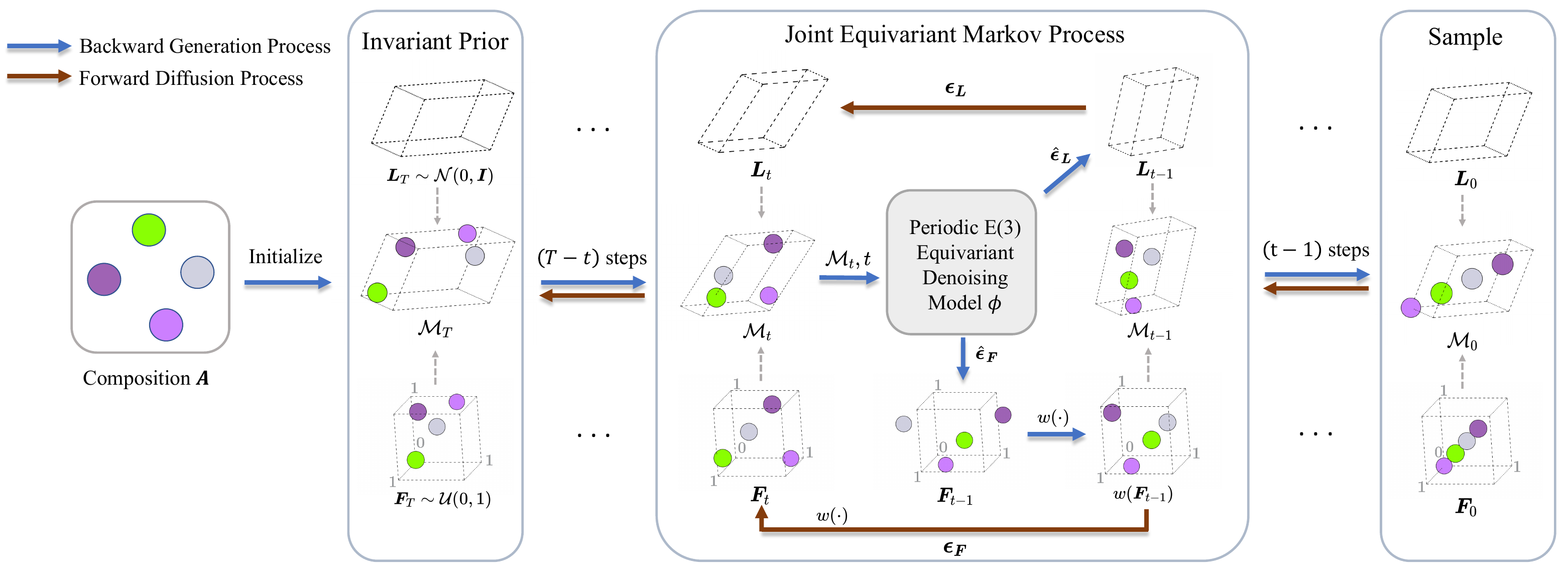}
    \vskip -0.1in
    \caption{Overview of DiffCSP. Given the composition $\mA$, we denote the crystal, its lattice and fractional coordinate matrix at time $t$ as $\gM_t$, $\mL_t$ and $\mF_t$, respectively. The terms $\vepsilon_\mL$ and $\vepsilon_\mF$ are Gaussian noises, $\hat{\vepsilon}_\mL$ and $\hat{\vepsilon}_\mF$ are predicted by the denoising model $\phi$.}
    \label{fig:vis}
    \vskip -0.2in
\end{figure*}

\subsection{Joint Equivariant Diffusion}
\label{sec:diff}

Our method DiffCSP addresses CSP by simultaneously diffusing the lattice $\mL$ and the fractional coordinate matrix $\mF$. 
Given the atom composition $\mA$, $\gM_t$ denotes the intermediate state of $\mL$ and $\mF$ at time step $t$ $(0\leq t\leq T)$. DiffCSP defines two Markov processes: the forward diffusion process gradually adds noise to $\gM_0$, and the backward generation process iteratively samples from the prior distribution $\gM_T$ to recover the origin data $\gM_0$. 

Joining the statements in~\textsection~\ref{sec:symmetry}, the recovered distribution from $\gM_T$ should meet periodic E(3) invariance. Such requirement is satisfied if the prior distribution $p(\gM_T)$ is invariant and the Markov transition $p(\gM_{t-1}\mid\gM_{t})$ is equivariant, according to the diffusion-based generation literature~\citep{xu2021geodiff}. Here, an equivariant transition is specified as $p(g\cdot\gM_{t-1}\mid g\cdot\gM_{t})=p(\gM_{t-1}\mid \gM_{t})$ where $g\cdot\gM$ refers to any orthogonal/translational transformation $g$ acts on $\gM$ in the way presented in Definitions~\ref{De:oi}-\ref{De:PTI}.    
We separately explain the derivation details of $\mL$ and $\mF$ below. The detailed flowcharts are summarized in Algorithms~\ref{alg:train} and~\ref{alg:gen} in Appendix~\ref{apd:algo}.

\textbf{Diffusion on $\mL$}
Given that $\mL$ is a continuous variable, we exploit Denoising Diffusion Probabilistic Model (DDPM)~\citep{ho2020denoising} to accomplish the generation. We define the forward process that progressively diffuses $\mL_0$ towards the Normal prior $p(\mL_T)=\gN(0,\mI)$ by $ q(\mL_t | \mL_{t-1})$ which can be  devised as the probability conditional on the initial distribution:
\begin{align}
\label{eq:lt0}
q(\mL_t | \mL_{0}) &= \gN\Big(\mL_t | \sqrt{\bar{\alpha}_t}\mL_{0}, (1 - \bar{\alpha}_t)\mI\Big),  
\end{align}
where $\beta_t\in(0,1)$ controls the variance, and $\bar{\alpha}_t = \prod_{s=1}^t \alpha_t= \prod_{s=1}^t (1-\beta_t)$ is valued in accordance to the cosine scheduler~\citep{nichol2021improved}.

The backward generation process
is given by:
\begin{align}
\label{eq:ddpm1}
    p(\mL_{t-1} | \gM_t) &= \gN(\mL_{t-1} | \mu(\gM_t),\sigma^2 (\gM_t)\mI), 
\end{align}
where $\mu(\gM_t) = \frac{1}{\sqrt{\alpha_t}}\Big(\mL_t - \frac{\beta_t}{\sqrt{1-\bar{\alpha}_t}}\hat{\vepsilon}_\mL(\gM_t,t)\Big),\sigma^2(\gM_t) $$= \beta_t \frac{1-\bar{\alpha}_{t-1}}{1-\bar{\alpha}_t}$. The denoising term $\hat{\vepsilon}_\mL(\gM_t,t)\in\R^{3\times3}$ is predicted by the model $\phi(\mL_t,\mF_t,\mA,t)$.

As the prior distribution $p(\mL_T)=\gN(0,\mI)$ is already O(3)-invariant, we require the generation process in Eq.~(\ref{eq:ddpm1}) to be O(3)-equivariant, which is formally stated below. 
\begin{restatable}{proposition}{renv}
\label{prop:renv}
The marginal distribution $p(\mL_0)$ by Eq.~(\ref{eq:ddpm1}) is O(3)-invariant if $\hat{\vepsilon}_\mL(\gM_t,t)$ is O(3)-equivariant, namely $ \hat{\vepsilon}_\mL(\mQ\mL_t,\mF_t,\mA,t)=\mQ\hat{\vepsilon}_\mL(\mL_t,\mF_t,\mA,t), \forall\mQ^\top\mQ=\mI$.
\end{restatable}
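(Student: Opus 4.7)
The plan is to argue by backward induction on the time step, showing that each reverse transition $p(\mL_{t-1}\mid\gM_t)$ in Eq.~(\ref{eq:ddpm1}) is O(3)-equivariant in $\mL$, and then appeal to the standard fact that an invariant prior composed with equivariant Markov transitions yields an invariant marginal at every intermediate and final time.

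First I would establish the base case: the prior $p(\mL_T)=\mathcal{N}(0,\mathbf{I})$ is O(3)-invariant, because the density of the isotropic standard Gaussian satisfies $\mathcal{N}(\mQ\mL_T;0,\mathbf{I})=\mathcal{N}(\mL_T;0,\mathbf{I})$ for any orthogonal $\mQ$ (using $\det(\mQ)=\pm 1$ and $\|\mQ\mL_T\|_F=\|\mL_T\|_F$). Next, I would verify the equivariance of the denoising mean: using the hypothesis $\hat{\vepsilon}_\mL(\mQ\mL_t,\mF_t,\mA,t)=\mQ\hat{\vepsilon}_\mL(\mL_t,\mF_t,\mA,t)$, a direct substitution into the definition of $\mu(\gM_t)$ gives
\begin{equation*}
\mu(\mQ\mL_t,\mF_t,\mA)=\tfrac{1}{\sqrt{\alpha_t}}\Bigl(\mQ\mL_t-\tfrac{\beta_t}{\sqrt{1-\bar{\alpha}_t}}\mQ\hat{\vepsilon}_\mL(\mL_t,\mF_t,\mA,t)\Bigr)=\mQ\,\mu(\gM_t),
\end{equation*}
while $\sigma^2(\gM_t)$ depends only on $t$ and is therefore unchanged. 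Because the covariance is a scalar multiple of the identity, the transition kernel satisfies
\begin{equation*}
p(\mQ\mL_{t-1}\mid \mQ\mL_t,\mF_t,\mA)=\mathcal{N}(\mQ\mL_{t-1};\mQ\mu,\sigma^2\mathbf{I})=\mathcal{N}(\mL_{t-1};\mu,\sigma^2\mathbf{I})=p(\mL_{t-1}\mid\gM_t),
\end{equation*}
which is exactly the O(3)-equivariance of the reverse transition in the sense of \textsection~\ref{sec:diff}.

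With these two ingredients in hand, the inductive step is routine. Assuming $p(\mL_t\mid\mF_t,\mA)$ is O(3)-invariant, I would write the marginal at time $t-1$ as $p(\mL_{t-1}\mid\mF_t,\mA)=\int p(\mL_{t-1}\mid\gM_t)\,p(\mL_t\mid\mF_t,\mA)\,d\mL_t$ and apply the change of variables $\mL_t\mapsto\mQ^\top\mL_t$ (Jacobian $1$ since $|\det\mQ|=1$). The invariance of the inner factor and the equivariance of the kernel then combine to give $p(\mQ\mL_{t-1}\mid\mF_t,\mA)=p(\mL_{t-1}\mid\mF_t,\mA)$. Iterating from $t=T$ down to $t=1$ yields the desired O(3)-invariance of the final marginal $p(\mL_0)$.

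The only subtle point to handle carefully will be the change-of-variable argument and the bookkeeping of what is conditioned upon: $\mF$ does not transform under O(3) in the fractional coordinate system, and $\mA$ is invariant by construction, so the entire orthogonal action lives on $\mL$ alone, which keeps the argument clean. I do not anticipate a real obstacle beyond making that equivariance of the kernel explicit; the rest follows from well-known properties of isotropic Gaussians and standard invariance-preservation lemmas for diffusion models.
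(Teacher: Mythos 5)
Your proposal is correct and follows essentially the same route as the paper: the paper likewise shows the reverse kernel is O(3)-equivariant by combining the equivariance of $\hat{\vepsilon}_\mL$ (hence of the mean $\mu$) with the rotation-invariance of the isotropic Gaussian, and then invokes an invariant-prior-plus-equivariant-transitions lemma (Lemma~\ref{lm:mrkv}, from \citet{xu2021geodiff}) in place of your explicit backward induction, which is just the unrolled form of the same argument.
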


To train the denoising model $\phi$, we first sample $\vepsilon_\mL\sim\gN(0,\mI)$ and reparameterize $\mL_t=\sqrt{\bar{\alpha}_t}\mL_{0} + \sqrt{1 - \bar{\alpha}_t}\vepsilon_\mL$ based on Eq.~(\ref{eq:lt0}). The training objective is defined as the $\ell_2$ loss between $\vepsilon_\mL$ and $\hat{\vepsilon}_\mL$:
\begin{align}\label{eq:ddpmloss} \gL_\mL=\E_{\vepsilon_\mL\sim\gN(0,\mI), t\sim\gU(1,T)}[\|\vepsilon_\mL - \hat{\vepsilon}_\mL(\gM_t,t)\|_2^2].
\end{align}

\textbf{Diffusion on $\mF$}
The domain of fractional coordinates $[0,1)^{3\times N}$ forms a quotient space $\sR^{3\times N} / \sZ^{3\times N}$ induced by the crystal periodicity. It is not suitable to apply the above DDPM fashion to generate $\mF$, as the normal distribution used in DDPM  is unable to model the cyclical and bounded domain of $\mF$. Instead, we leverage Score-Matching (SM) based framework~\cite{song2020improved, song2020score} along with Wrapped Normal (WN) distribution~\citep{de2022riemannian} to fit the specificity here. Note that WN distribution has been explored in generative models, such as molecular conformation generation~\citep{jing2022torsional}.  

During the forward process, we first sample each column of $\vepsilon_\mF\in\R^{3\times N}$ from $\gN(0,\mI)$, and then acquire $\mF_t = w(\mF_0 + \sigma_t \vepsilon_\mF)$ where the truncation function $w(\cdot)$ is already defined in Definition~\ref{De:PTI}. This truncated sampling implies the WN transition:
\begin{align}
\label{eq:WN}
    q(\mF_t | \mF_0) \propto \sum_{\mZ\in\sZ^{3\times N}} \exp\Big(-\frac{\|\mF_t - \mF_0 + \mZ\|_F^2}{2\sigma_t^2}\Big).
\end{align}
Basically, this process ensures the probability distribution over $[z, z+1)^{3\times N}$ for any integer $z$ to be the same to keep the crystal periodicity. Here, the noise scale $\sigma_t$ obeys the exponential scheduler: $\sigma_0=0$ and $
\sigma_t = \sigma_1(\frac{\sigma_T}{\sigma_1})^{\frac{t-1}{T-1}}$, if $t>0$. Desirably, $q(\mF_t | \mF_0)$ is periodic translation equivariant, and approaches a uniform distribution $\gU(0,1)$ if $\sigma_T$ is sufficiently large. 

For the backward process, 
we first initialize $\mF_T$ from the uniform distribution $\gU(0,1)$, which is periodic translation invariant. With the denoising term $\hat{\epsilon}_\mF$ predicted by $\phi(\mL_t,\mF_t,\mA,t)$, we combine the ancestral predictor~\citep{ho2020denoising,song2020score} with the Langevin corrector~\citep{song2020improved} to sample $\mF_{0}$. We immediately have:
\begin{restatable}{proposition}{tinv}
\label{prop:tinv}
The marginal distribution $p(\mF_0)$ is periodic translation invariant  if $\hat{\vepsilon}_\mF(\gM_t,t)$ is periodic translation invariant, namely $\hat{\vepsilon}_\mF(\mL_t, \mF_t, \mA, t)=\hat{\vepsilon}_\mF(\mL_t, w(\mF_t+\vt\vone^\top), \mA, t), \forall\vt\in\R^3$.
\end{restatable}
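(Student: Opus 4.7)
The plan is to mirror the structure of the proof of Proposition~\ref{prop:renv}, but transferred to the torus $[0,1)^{3\times N}\cong \R^{3\times N}/\sZ^{3\times N}$ that underlies the fractional coordinates. Specifically, I would invoke the standard invariant-prior plus equivariant-transition argument~\citep{xu2021geodiff}: if (i) the prior $p(\mF_T)$ is periodic translation invariant and (ii) each reverse Markov kernel $p(\mF_{t-1}\mid\gM_t)$ is periodic translation equivariant, then by induction every marginal $p(\mF_t)$, and in particular $p(\mF_0)$, is periodic translation invariant.

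For step (i), the prior is $\gU(0,1)^{3\times N}$ by construction, whose density is constant on the unit torus and hence unchanged by $\mF_T \mapsto w(\mF_T+\vt\vone^\top)$. For step (ii), I would write out one predictor (or Langevin corrector) iteration of the reverse process in the common form
\begin{align*}
\mF_{t-1} = w\!\bigl(\mF_t + \gamma_t\,\hat{\vepsilon}_\mF(\gM_t,t) + \eta_t\,\vepsilon\bigr),
\end{align*}
where $\gamma_t,\eta_t$ depend only on the fixed noise schedule and $\vepsilon\sim\gN(0,\mI)$ is independent of everything else. Substituting $\mF_t\mapsto w(\mF_t+\vt\vone^\top)$ leaves $\hat{\vepsilon}_\mF(\gM_t,t)$ fixed by the hypothesis of the proposition, and the elementary torus identity $w(w(\va)+\vb)=w(\va+\vb)$ gives $w\bigl(w(\mF_t+\vt\vone^\top)+\gamma_t\hat{\vepsilon}_\mF+\eta_t\vepsilon\bigr)=w(\mF_{t-1}+\vt\vone^\top)$. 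Coupling the two chains by the same noise $\vepsilon$ therefore yields the equivariance $p(w(\mF_{t-1}+\vt\vone^\top)\mid\mL_t, w(\mF_t+\vt\vone^\top),\mA)=p(\mF_{t-1}\mid\gM_t)$.

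Once (i) and (ii) are in hand, the induction is routine: assuming $p(\mF_t)$ is periodic translation invariant, pushing it through the equivariant kernel of (ii) gives the same property for $p(\mF_{t-1})$, and iterating from $t=T$ down to $t=0$ closes the argument. The part worth writing carefully, and the main technical obstacle, will be verifying (ii) as a genuine identity of Markov kernels on the quotient space $[0,1)^{3\times N}$ rather than a merely formal manipulation on $\R^{3\times N}$: the ancestral predictor, the Langevin corrector, and the wrapped-normal transition in Eq.~(\ref{eq:WN}) all need to descend cleanly to the torus, and the shift by $\vt\vone^\top$ has to commute with $w(\cdot)$ so that no spurious integer translation is introduced that would break the coupling between the two chains.
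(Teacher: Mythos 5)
Your proposal follows essentially the same route as the paper's proof: an invariant uniform prior on the torus, periodic-translation equivariance of each reverse Markov kernel (for both the ancestral predictor and the Langevin corrector), and the standard invariant-prior/equivariant-transition induction of~\citet{xu2021geodiff} (Lemma~\ref{lm:mrkv}). The only difference is presentational: you establish the kernel equivariance by a pathwise coupling through the identity $w(w(\va)+\vb)=w(\va+\vb)$, whereas the paper (Lemma~\ref{lm:tenv}) verifies the same fact directly on the wrapped-normal densities via the identities $\gN_w(x|w(\mu),\sigma^2)=\gN_w(x|\mu,\sigma^2)$ and $\gN_w(w(x+t)|w(\mu+t),\sigma^2)=\gN_w(x|\mu,\sigma^2)$ --- the two verifications are equivalent, and your closing caveat about the kernels descending cleanly to the quotient space is exactly what those density identities make precise.
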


The training objective for score matching is: 
\begin{align}
\nonumber
\gL_\mF = \E_{\mF_t\sim q(\mF_t | \mF_0), t\sim\gU(1,T)} \big[\lambda_t\|\nabla_{\mF_t}\log q(\mF_t | \mF_0)-\hat{\vepsilon}_\mF(\gM_t,t)\|_2^2\big],
\end{align}
where $\lambda_t=\E^{-1}_{\mF_t}\big[\|\nabla_{\mF_t}\log q(\mF_t | \mF_0)\|_2^2\big]$ is approximated via Monte-Carlo sampling. More details are deferred to Appendix~\ref{apd:wn}.

\textbf{Extension to ab initio crystal generation}
Although our method is proposed to address CSP where the composition $\mA$ is fixed, our method is able to be extended for the ab initio generation task by further generating $\mA$. We achieve this by additionally optimizing the one-hot representation $\mA$ with a DDPM-based approach. We provide more details in Appendix~\ref{apd:ext}.

\subsection{The Architecture of the Denoising Model}
\label{sec:model}

This subsection designs the denoising model $\phi(\mL, \mF, \mA, t)$ that outputs $\hat{\epsilon}_\mL$ and $\hat{\epsilon}_\mF$ satisfying the properties stated in Proposition~\ref{prop:renv} and~\ref{prop:tinv}.

 Let $\mH^{(s)}=[\vh_1^{(s)},\cdots, \vh_N^{(s)}]$ denote the node representations of the $s$-th layer. The input feature is given by $\vh_i^{(0)} = \rho(f_{\text{atom}}(\va_i), f_{\text{pos}}(t))$, where $f_{\text{atom}}$ and $f_{\text{pos}}$ are the atomic embedding and sinusoidal positional encoding~\citep{ho2020denoising, vaswani2017attention}, respectively; $\rho$ is a multi-layer perception (MLP).

 Built upon EGNN~\citep{satorras2021n}, the $s$-th layer message-passing is unfolded as follows:

 \begin{wrapfigure}[7]{r}{.58\textwidth}
 \vskip -0.1in
 \vspace{-3ex}
 \begin{align}
 \label{eq:mp1}
     \vm_{ij}^{(s)} &= \varphi_m(\vh_i^{(s-1)}, \vh_j^{(s-1)}, \mL^\top\mL, \psi_{\text{FT}}(\vf_j - \vf_i)), \\
\label{eq:mp2}
     \vm_{i}^{(s)} &= \sum_{j=1}^N \vm_{ij}^{(s)}, \\
\label{eq:mp3}
     \vh_{i}^{(s)} &= \vh_i^{(s-1)} + \varphi_h(\vh_i^{(s-1)}, \vm_{i}^{(s)}).
 \end{align}
  \vspace{-3ex}
 \end{wrapfigure}
Here $\varphi_m$ and $\varphi_h$ are MLPs. The function $\psi_{\text{FT}}:(-1,1)^{3}\rightarrow [-1,1]^{3\times K}$ is Fourier Transformation of the relative fractional coordinate $\vf_j-\vf_i$. Specifically, suppose the input to be $\vf=[f_1,f_2,f_3]^\top$, then the $c$-th row and $k$-th column of the output is calculated by $\psi_{\text{FT}}(\vf)[c,k]=\sin(2\pi mf_c)$, if $k=2m$ (even); and $\psi_{\text{FT}}(\vf)[c,k]=\cos(2\pi mf_c)$, if $k=2m+1$ (odd). $\psi_{\text{FT}}$ extracts various frequencies of all relative fractional distances that are helpful for crystal structure modeling, and more importantly, $\psi_{\text{FT}}$ is periodic translation invariant, namely, $\psi_{\text{FT}}(w(\vf_j+\vt)-w(\vf_i+\vt))=\psi_{\text{FT}}(\vf_j-\vf_i)$ for any translation $\vt$, which is proved in Appendix~\ref{apd:pe}.

After $S$ layers of  message passing conducted on the fully connected graph, the lattice noise $\hat{\vepsilon}_\mL$ is acquired by a linear combination of  $\mL$, with the weights given by the final layer:
\begin{align}
\label{eq:ln}
    \hat{\vepsilon}_\mL &= \mL \varphi_L\Big(\frac{1}{N}\sum_{i=1}^N \vh_i^{(S)}\Big), 
\end{align}
where $\varphi_L$ is an MLP with output shape as $3\times 3$. The fractional coordinate score $\hat{\vepsilon}_\mF$ is output by:
\begin{align}
\label{eq:fs}
    \hat{\vepsilon}_\mF[:,i] = \varphi_F(\vh_i^{(S)}),
\end{align}
where $\hat{\vepsilon}_\mF[:,i]$ defines the $i$-th column of $\hat{\vepsilon}_\mF$, and $\varphi_F$ is an MLP  on the final representation.

We apply the inner product term $\mL^\top\mL$ in Eq.~(\ref{eq:mp1}) to achieve O(3)-invariance, as $(\mQ\mL)^\top(\mQ\mL)=\mL^\top\mL$ for any orthogonal matrix $\mQ\in\sR^{3\times 3}$. This leads to the O(3)-invariance of $\varphi_L$ in Eq.~(\ref{eq:fs}), and we further left-multiply $\mL$ with $\varphi_L$ to ensure the O(3)-equivariance of $\hat{\vepsilon}_\mL$.  Therefore, the above formulation of the denoising model $\phi(\mL, \mF, \mA, t)$ ensures the following property.
\begin{restatable}{proposition}{pe}
\label{prop:pE(3)}
The score $\hat{\vepsilon}_\mL$ by Eq.~(\ref{eq:ln}) is O(3)-equivariant, and the score $\hat{\vepsilon}_\mF$ from Eq.~(\ref{eq:fs}) is periodic translation invariant. Hence, the generated distribution by DiffCSP is periodic E(3) invariant.
\end{restatable}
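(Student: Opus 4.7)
The plan is to establish the two equivariance/invariance properties of the denoising outputs separately, by tracing each group action through the message-passing block layer by layer, and then to invoke Propositions~\ref{prop:renv} and~\ref{prop:tinv} to conclude that the joint generative distribution carries periodic E(3) invariance.

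For the O(3)-equivariance of $\hat{\vepsilon}_\mL$, I would first note that the initial features $\vh_i^{(0)}$ depend only on $\va_i$ and $t$, hence are O(3)-invariant. I then argue by induction on the layer index $s$ that every $\vh_i^{(s)}$ produced by Eqs.~(\ref{eq:mp1})--(\ref{eq:mp3}) is O(3)-invariant. The only lattice-dependent term entering $\varphi_m$ is the Gram matrix $\mL^\top\mL$, and $(\mQ\mL)^\top(\mQ\mL)=\mL^\top\mL$ for every orthogonal $\mQ$; the Fourier feature $\psi_{\text{FT}}(\vf_j-\vf_i)$ does not depend on $\mL$ at all, and $\vh_j^{(s-1)}$ is invariant by the inductive hypothesis. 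Hence the pooled summary $\frac{1}{N}\sum_i \vh_i^{(S)}$ and its image $\varphi_L(\cdot)$ are invariant, and the left multiplication by $\mL$ in Eq.~(\ref{eq:ln}) converts this invariant factor into the desired equivariant output, so that $\hat{\vepsilon}_\mL(\mQ\mL,\mF,\mA,t)=\mQ\mL\,\varphi_L(\cdot)=\mQ\,\hat{\vepsilon}_\mL(\mL,\mF,\mA,t)$.

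For the periodic translation invariance of $\hat{\vepsilon}_\mF$, the crucial step is the invariance of $\psi_{\text{FT}}$ under common fractional shifts. I would argue that although the wrapped differences $w(\vf_j+\vt)-w(\vf_i+\vt)$ and $\vf_j-\vf_i$ may differ by an integer vector, the entries of $\psi_{\text{FT}}$ are $\sin(2\pi m f_c)$ and $\cos(2\pi m f_c)$ with integer $m$, so they are insensitive to such integer shifts; this is the identity proved in Appendix~\ref{apd:pe}. Since $\mL$, $\mA$ and $t$ are untouched by the periodic translation, the messages $\vm_{ij}^{(s)}$ are invariant, and an induction on $s$ propagates the invariance to every $\vh_i^{(s)}$. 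Applying the coordinate-wise map $\varphi_F$ of Eq.~(\ref{eq:fs}) preserves invariance, yielding $\hat{\vepsilon}_\mF[:,i](\mL,w(\mF+\vt\vone^\top),\mA,t)=\hat{\vepsilon}_\mF[:,i](\mL,\mF,\mA,t)$ as required.

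Finally, to conclude the global periodic E(3) invariance of $p(\mL_0,\mF_0\mid\mA)$, I combine these two properties with the abstract results already established in the paper: Proposition~\ref{prop:renv} promotes O(3)-equivariance of $\hat{\vepsilon}_\mL$ to O(3)-invariance of the marginal over $\mL_0$, and Proposition~\ref{prop:tinv} promotes periodic translation invariance of $\hat{\vepsilon}_\mF$ to periodic translation invariance of the marginal over $\mF_0$; permutation invariance is automatic from the symmetric aggregation in Eq.~(\ref{eq:mp2}). The only delicate point I anticipate is ensuring that the two symmetry arguments apply truly jointly in the Markov chain, i.e.\ that the group actions on $\mL$ and on $\mF$ commute with the factorized forward/backward transitions of \textsection~\ref{sec:diff}; this is immediate because the orthogonal action touches only the lattice block while the periodic translation touches only the fractional block, and the prior $\gN(0,\mI)\otimes\gU(0,1)$ is a product of an O(3)-invariant and a periodic-translation-invariant measure, so invariance of each marginal lifts to invariance of the joint generated distribution.
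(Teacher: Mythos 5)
Your proposal is correct and follows essentially the same route as the paper's own proof: the Gram matrix identity $(\mQ\mL)^\top(\mQ\mL)=\mL^\top\mL$ for O(3)-invariance of the features, the insensitivity of $\sin(2\pi m f_c)$ and $\cos(2\pi m f_c)$ to the integer offset between $w(\vf_j+\vt)-w(\vf_i+\vt)$ and $\vf_j-\vf_i$ for periodic translation invariance of $\psi_{\text{FT}}$, left-multiplication by $\mL$ to upgrade the invariant $\varphi_L$ output to an equivariant $\hat{\vepsilon}_\mL$, and finally an appeal to Propositions~\ref{prop:renv} and~\ref{prop:tinv}. Your explicit layer-wise induction and the closing remark on why the two block-wise group actions combine over the product prior are slightly more detailed than the paper's write-up but do not constitute a different argument.
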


\textbf{Comparison with multi-graph representation}
Previous methods~\citep{xie2021crystal, PhysRevLett.120.145301, schutt2018schnet, chen2019graph} utilize Cartesian coordinates, and usually describe crystals with multi-graph representation to encode the periodic structures. They create multiple edges to connect each pair of nodes where different edges refer to different integral cell translations. Here, we no longer require multi-graph representation, since we employ fractional coordinates that naturally encode periodicity and the Fourier transform $\psi_{\text{FT}}$ in our message passing is already periodic translation invariant. We will ablate the benefit in Table~\ref{tab:abl}.

\section{Experiments}
In this section, we evaluate the efficacy of DiffCSP on a diverse range of tasks, by showing that it can generate high-quality structures of different crystals in~\textsection~\ref{sec:stable}, with lower time cost comparing with DFT-based optimization method in~\textsection~\ref{sec:dft}. Ablations in~\textsection~\ref{sec:abl} exhibit the necessity of each designed component. We further showcase the capability of DiffCSP in the ab initio generation task in~\textsection~\ref{sec:gen}.

\begin{table*}[tbp]
\vskip -0.1in
  \centering
  \caption{Results on stable structure prediction task. }
  \resizebox{0.95\linewidth}{!}{
  \small
    \setlength{\tabcolsep}{2.5pt}
    \begin{tabular}{p{1.2cm}ccccccccc}
    \toprule
    \multirow{2}[2]{*}{} & \# of & \multicolumn{2}{c}{Perov-5 } & \multicolumn{2}{c}{Carbon-24} & \multicolumn{2}{c}{MP-20} & \multicolumn{2}{c}{MPTS-52} \\
          &  samples     & Match rate$\uparrow$ & RMSE$\downarrow$ & Match rate$\uparrow$ & RMSE$\downarrow$  & Match rate$\uparrow$ & RMSE$\downarrow$ & Match rate$\uparrow$ & RMSE$\downarrow$ \\
    \midrule
    \midrule
        \multirow{2}[2]{*}{RS~\citep{cheng2022crystal}} & 20    & 29.22  & 0.2924 & 14.63 & 0.4041  & 8.73  & 0.2501 & 2.05 &	0.3329 \\
          & 5,000 & 36.56  & 0.0886 & 14.63 & 0.4041 & 11.49  & 0.2822 & 2.68 &	0.3444  \\
    \midrule
    \multirow{2}[2]{*}{BO~\citep{cheng2022crystal}} & 20    & 21.03  & 0.2830 & 0.44 & 0.3653 & 8.11  & 0.2402  & 2.05 &	0.3024 \\
          & 5,000 & 55.09  & 0.2037 & 12.17 & 0.4089 & 12.68  & 0.2816 & 6.69 &	0.3444 \\
    \midrule

    \multirow{2}[2]{*}{PSO~\citep{cheng2022crystal}} & 20    & 20.90  & 0.0836 & 6.40 & 0.4204  & 4.05  & 0.1567 & 1.06 &	0.2339 \\
          & 5,000 & 21.88  & 0.0844 & 6.50 & 0.4211 & 4.35  & 0.1670 & 1.09 &	0.2390  \\
    \midrule
    \midrule
    {P-cG-} & 1     & 48.22  & 0.4179 & 17.29 & 0.3846   & 15.39 & 0.3762 & 3.67 & 0.4115   \\
      SchNet~\citep{gebauer2022inverse}    & 20    &   97.94  &  0.3463 & 55.91 &	0.3551 & 32.64   & 0.3018 & 12.96 & 0.3942 \\    
    \midrule
    \multirow{2}[2]{*}{CDVAE~\citep{xie2021crystal}} & 1     & 45.31  & 0.1138 & 17.09 &	0.2969 & 33.90  & 0.1045 & 5.34 & 0.2106 \\
          & 20    & 88.51  & 0.0464 & 88.37 &	0.2286 & 66.95  & 0.1026 & 20.79 & 0.2085 \\
    \midrule
    \midrule
    \multirow{2}[2]{*}{DiffCSP} & 1     & 52.02  & 0.0760 & 17.54 &	0.2759  & 51.49 &	0.0631 & 12.19 & 0.1786 \\
          & 20    & \textbf{98.60} & \textbf{0.0128} & \textbf{88.47} & \textbf{0.2192} & \textbf{77.93} & \textbf{0.0492} & \textbf{34.02} & \textbf{0.1749} \\
    \bottomrule
    \end{tabular}%
    }
\vskip -0.15in
  \label{tab:oto}%
\end{table*}%

\begin{figure*}[t]
    \centering
    \includegraphics[width=0.9\textwidth]{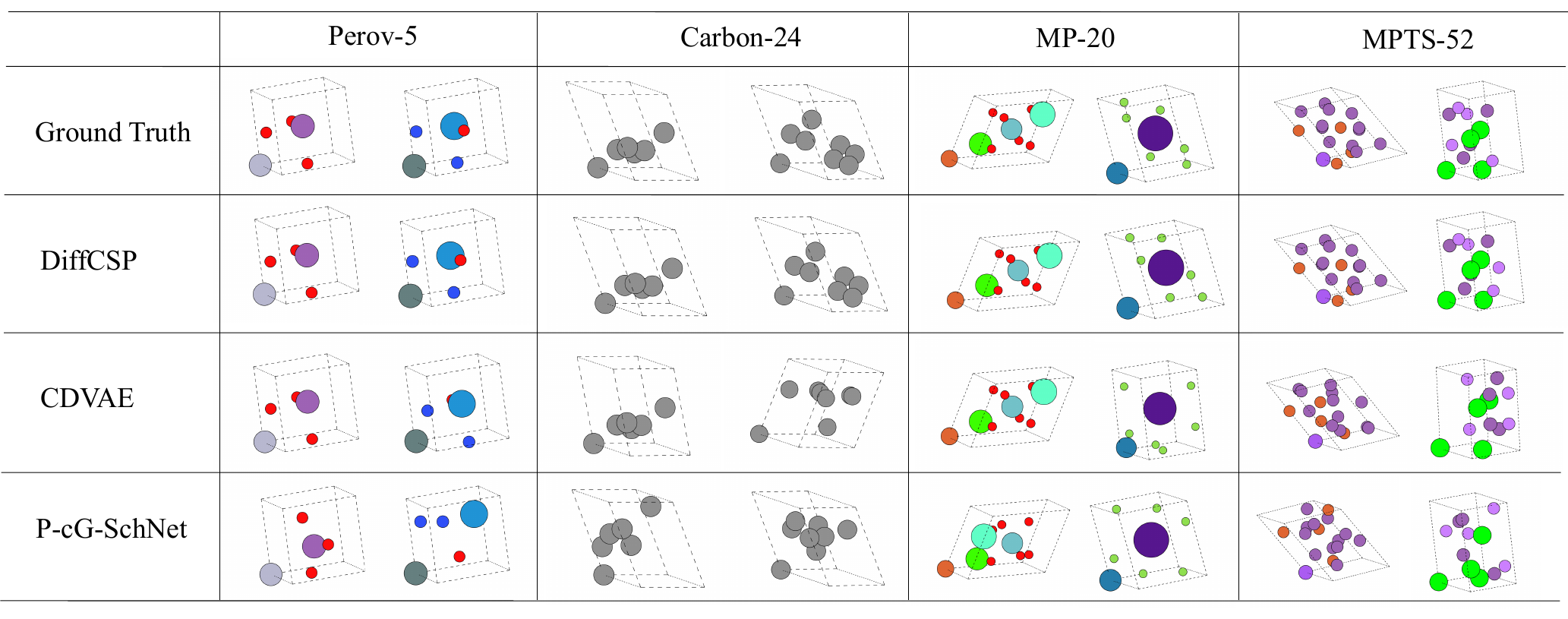}
    \vskip -0.1in
    \caption{Visualization of the predicted structures from different methods. We select the structure of the lowest RMSE over 20 candidates. We translate the same predicted atom by all methods to the origin for better comparison. Our DiffCSP accurately delivers high quality structure predictions.}
    \label{fig:vis}
    \vskip -0.2in
\end{figure*}

\subsection{Stable Structure Prediction}
\label{sec:stable}
\textbf{Dataset} We conduct experiments on four datasets with distinct levels of difficulty. \textbf{Perov-5}~\citep{castelli2012new, castelli2012computational} contains 18,928 perovskite materials with similar structures. Each structure has 5 atoms in a unit cell. 
\textbf{Carbon-24}~\citep{carbon2020data} includes 10,153 carbon materials with 6$\sim$24 atoms in a cell.
\textbf{MP-20}~\citep{jain2013commentary} selects 45,231 stable inorganic materials from Material Projects~\citep{jain2013commentary}, which includes the majority of experimentally-generated materials with at most 20 atoms in a unit cell. \textbf{MPTS-52} is a more challenging extension of MP-20, consisting of 40,476 structures up to 52 atoms per cell, sorted according to the earliest published year in literature. For Perov-5, Carbon-24 and MP-20, we apply the 60-20-20 split in line with~\citet{xie2021crystal}. For MPTS-52, we split 27,380/5,000/8,096 for training/validation/testing in chronological order.

\textbf{Baselines} We contrast two types of previous works. The first type follows the predict-optimize paradigm, which first trains a predictor of the target property and then utilizes certain optimization algorithms to search for optimal structures. Following~\citet{cheng2022crystal}, we apply MEGNet~\citep{chen2019graph} as the predictor of the formation energy. For the optimization algorithms, we choose Random Search (\textbf{RS}), Bayesian Optimization (\textbf{BO}), and Particle Swarm Optimization (\textbf{PSO}), all iterated over 5,000 steps. The second type is based on deep generative models. We follow the modification in~\citet{xie2021crystal}, and leverage cG-SchNet~\citep{gebauer2022inverse} that utilizes SchNet~\citep{schutt2018schnet} as the backbone and additionally considers the ground-truth lattice initialization for encoding periodicity, yielding a final model named \textbf{P-cG-SchNet}. Another baseline \textbf{CDVAE}~\citep{xie2021crystal} is a VAE-based framework for pure crystal generation, by first predicting the lattice and the initial composition and then optimizing the atom types and coordinates via annealed Langevin dynamics~\citep{song2020improved}. To adapt CDVAE into the CSP task, we replace the original normal prior for generation with a parametric prior conditional on the encoding of the given composition. More details are provided in Appendix~\ref{apd:cdvae}. 

\textbf{Evaluation metrics} Following the common practice~\citep{xie2021crystal}, we evaluate by matching the predicted candidates with the ground-truth structure. Specifically, for each structure in the test set, we first generate $k$ samples of the same composition and then identify the matching if at least one of the samples matches the ground truth structure, under the metric by the StructureMatcher class in pymatgen~\citep{ong2013python} with thresholds stol=0.5, angle\_tol=10, ltol=0.3. The \textbf{Match rate} is the proportion of the matched structures over the test set. \textbf{RMSE} is calculated between the ground truth and the best matching candidate, normalized by $\sqrt[3]{V/N}$ where $V$ is the volume of the lattice, and averaged over the matched structures. For optimization methods, we select 20 structures of the lowest energy of all 5,000 structures from all iterations during testing as candidates. For generative baselines and our DiffCSP, we let $k=1$ and $k=20$ for evaluation. We provide more details in Appendix~\ref{apd:details},~\ref{apd:sample} and~\ref{apd:bar}.

\textbf{Results} Table~\ref{tab:oto} conveys the following observations. \textbf{1.} The optimization methods encounter low Match rates, signifying the difficulty of locating the optimal structures within the vast search space. 
\textbf{2.} In comparison to other generative methods that construct structures atom by atom or predict the lattice and atom coordinates in two stages, our method demonstrates superior performance, highlighting the effectiveness of jointly refining the lattice and coordinates during generation.
\textbf{3.} All methods struggle with performance degradation as the number of atoms per cell increases, on the datasets from Perov-5 to MPTS-52. For example, the Match rates of the optimization methods are less than 10\% in MPTS-52. Even so, our method consistently outperforms all other methods.

\textbf{Visualization} Figure~\ref{fig:vis} provides qualitative comparisons.DiffCSP clearly makes the best predictions.

\subsection{Comparison with DFT-based Methods}
\label{sec:dft}

\begin{wraptable}{r}{0.6\textwidth}
    \centering
    \vskip -0.2in
\caption{Overall results over the 15 selected compounds.}
\label{tab:sum}
\resizebox{\linewidth}{!}{
    \begin{tabular}{lccc}
    \toprule
        ~ & Match rate (\%)$\uparrow$ & Avg. RMSD$\downarrow$ & Avg. Time$\downarrow$ \\ \midrule
        USPEX~\cite{glass2006uspex} & 53.33 & \textbf{0.0159} & 12.5h \\ 
        DiffCSP & \textbf{73.33} & 0.0172 & \textbf{10s} \\ 
        \bottomrule
    \end{tabular}
    }
    \vskip -0.1in
\end{wraptable}

We further select 10 binary and 5 ternary compounds in MP-20 testing set and compare our model with USPEX~\cite{glass2006uspex}, a DFT-based software equipped with the evolutionary algorithm to search for stable structures. For our method, we sample 20 candidates for each compound following the setting in Table~\ref{tab:oto}. We select the model trained on MP-20 for inference, with a training duration of 5.2 hours. For USPEX, we apply 20 generations, 20 populations for each compound, and select the best sample in each generation, leading to 20 candidates as well. We summarize the \textbf{Match rate} over the 15 compounds, the \textbf{Averaged RMSD} over the matched structures, and the \textbf{Averaged Inference Time} to generate 20 candidates for each compound in Table~\ref{tab:sum}. The detailed results for each compound are listed in Appendix~\ref{apd:dft}. DiffCSP correctly predicts more structures with higher match rate, and more importantly, its time cost is much less than USPEX, allowing more potential for real applications.

\subsection{Ablation Studies} 
\label{sec:abl}
We ablate each component of DiffCSP in Table~\ref{tab:abl}, and probe the following aspects. \textbf{1.} To verify the necessity of jointly updating the lattice $\mL$ and fractional coordinates $\mF$, we construct two variants that separate the joint optimization into two stages, denoted as $\mL\rightarrow\mF$ and $\mF\rightarrow\mL$. Particularly, $\mL\rightarrow\mF$ applies two networks to learn the reverse processes $p_{\theta_1}(\mL_{0:T-1}|\mA,\mF_T,\mL_T)$ and $p_{\theta_2}(\mF_{0:T-1}|\mA,\mF_T,\mL_0)$. During inference, we first sample $\mL_T, \mF_T$ from their prior distributions, acquiring $\mL_0$ via $p_{\theta_1}$, and then $\mF_0$ by $p_{\theta_2}$ based on $\mL_0$. $\mF\rightarrow\mL$ is similarly executed but with the generation order of $\mL_0$ and $\mF_0$ exchanged. Results indicate that $\mL\rightarrow\mF$ performs better than the $\mF\rightarrow\mL$, but both are inferior to the joint update in DiffCSP, which endorses our design. We conjecture that the joint diffusion fashion enables $\mL$ and $\mF$ to update synergistically, which makes the generation process more tractable to learn and thus leads to better performance.
\textbf{2.} We explore the necessity of preserving the O(3) invariance when generating $\mL$, which is ensured by the inner product $\mL^\top\mL$ in Eq.~(\ref{eq:mp1}). 
\begin{wraptable}{r}{0.50\textwidth}
 \vskip -0.1in
\small
  \centering
  \caption{Ablation studies on MP-20. MG: \textbf{M}ulti-\textbf{G}raph edge construction~\citep{PhysRevLett.120.145301}, FT: \textbf{F}ourier-\textbf{T}ransformation.}
    \begin{tabular}{lcc}
    \toprule
          & Match rate (\%) $\uparrow$ & RMSE $\downarrow$  \\
    \midrule
    DiffCSP & \textbf{51.49} & \textbf{0.0631} \\
    \midrule
    \addlinespace[-0.1pt]
    \rowcolor{gray!30}\multicolumn{3}{c}{\textit{w/o Joint Diffusion}} \\
    \addlinespace[-2pt]
    \midrule
    $\mL\rightarrow\mF$ & 50.03 & 0.0921 \\
    $\mF\rightarrow\mL$ & 36.73  & 0.0838 \\
    \midrule
    \addlinespace[-0.1pt]
    \rowcolor{gray!30}\multicolumn{3}{c}{\textit{w/o $O(3)$ Equivariance}} \\
    \addlinespace[-2pt]
    \midrule
    \textit{w/o} inner product & 1.66  & 0.4002  \\
    \textit{w/} chirality & 49.68 & 0.0637 \\
    \midrule
    \addlinespace[-0.1pt]
    \rowcolor{gray!30}\multicolumn{3}{c}{\textit{w/o Periodic Translation Invariance}} \\
    \addlinespace[-2pt]
    \midrule
    \textit{w/o} WN & 34.09 & 0.2350 \\
    \textit{w/o} FT & 29.15  & 0.0926  \\
    \midrule
    \addlinespace[-0.1pt]
    \rowcolor{gray!30}\multicolumn{3}{c}{\textit{MG Edge Construction}} \\
    \addlinespace[-2pt]
    \midrule    
    MG \textit{w/} FT & 25.85  & 0.1079  \\
    MG \textit{w/o} FT & 28.05  & 0.1314  \\
    \bottomrule
    \end{tabular}%
  \label{tab:abl}%
   \vskip -0.2in
\end{wraptable}%
When we replace it with $\mL$ and change the final output as $\hat{\vepsilon}_\mL = \varphi_L\big(\frac{1}{N}\sum_{i=1}^N \vh_i^{(S)}\big)$ in Eq.~(\ref{eq:ln}) to break the equivariance, the model suffers from extreme performance detriment. Only 1.66\% structures are successfully matched, which obviously implies the importance of incorporating O(3) equivariance. Furthermore, we introduce the chirality into the denoising model by adding $\text{sign}(|\mL|)$, the sign of the determinant of the lattice matrix, as an additional input in Eq.~\ref{eq:mp1}. The adapted model is $SO(3)$-invariant, but breaks the reflection symmetry and hence is NOT $O(3)$-invariant. There is no essential performance change, indicating the chirality is not quite crucial in distinguishing different crystal structures for the datasets used in this paper.
\textbf{3.} We further assess the importance of periodic translation invariance from two perspectives. For the generation process, we generate  $\mF$ via the score-based model with the Wrapped Normal (WN) distribution. We replace this module with DDPM under standard Gaussian as $q(\mF_t | \gM_0) = \gN\big(\mF_t | \sqrt{\bar{\alpha}_t}\mF_{0}, (1 - \bar{\alpha}_t)\mI\big)$ similarly defined as Eq.~(\ref{eq:ddpm1}). A lower match rate and higher RMSE are observed for this variant. For the model architecture, we adopt Fourier Transformation(FT) in Eq.~(\ref{eq:mp1}) to capture periodicity. To investigate its effect, we replace $\psi_{\text{FT}}(\vf_j - \vf_i)$ with $\vf_j - \vf_i$, and the match rate drops from 51.49\% to 29.15\%. Both of the two observations verify the importance of retaining the periodic translation invariance.
\textbf{4.} 
We further change the fully-connected graph into the multi-graph approach adopted in~\citet{PhysRevLett.120.145301}. The multi-graph approach decreases the match rate, since the multi-graphs constructed under different intermediate structures may differ vibrantly during generation, leading to substantially higher training difficulty and lower sampling stability. We will provide more discussions in Appendix~\ref{apd:curves}.


\subsection{Ab Initio Crystal Generation}
\label{sec:gen}

DiffCSP is extendable to ab initio crystal generation by further conducting discrete diffusion on atom types $\mA$. We contrast DiffCSP against five generative methods following~\cite{xie2021crystal}: \textbf{FTCP}~\citep{REN2021}, \textbf{Cond-DFC-VAE}~\citep{court20203}, \textbf{G-SchNet}~\citep{NIPS2019_8974} and its periodic variant \textbf{P-G-SchNet}, and the orginal version of \textbf{CDVAE}~\citep{xie2021crystal}. Specifically for our DiffCSP, we gather the statistics of the atom numbers from the training set, then sample the number based on the pre-computed distribution similar to~\citet{hoogeboom2022equivariant}, which allows DiffCSP to generate structures of variable size.  Following~\citep{xie2021crystal}, we evaluate the generation performance in terms of there metrics: \textbf{Validity}, \textbf{Coverage}, and 
\textbf{Property statistics}, which respectively return the validity of the predicted crystals, the similarity between the test set and the generated samples, and the property calculation regarding density, formation energy, and the number of elements. The detalied definitions of the above metrics are provided in Appendix~\ref{apd:ext}.

\textbf{Results} Table~\ref{tab:gen} show that our method achieves comparable validity and coverage rate with previous methods, and significantly outperforms the baselines on the similarity of property statistics, which indicates the high reliability of the generated samples.

\begin{table}[t]
\caption{Results on ab initio generation task. The results of baseline methods are from~\citet{xie2021crystal}.}
\vskip -0.1in
\label{tab:gen}
\centering
\resizebox{0.9\linewidth}{!}{
\begin{tabular}{ll|ccccccc}
\toprule
\multirow{2}{*}{\bf Data} & \multirow{2}{*}{\bf Method}  & \multicolumn{2}{c}{\bf Validity (\%)  $\uparrow$}  & \multicolumn{2}{c}{\bf Coverage (\%) $\uparrow$}  & \multicolumn{3}{c}{\bf Property $\downarrow$}  \\
& & Struc. & Comp. & COV-R & COV-P & $d_\rho$ & $d_E$ & $d_{\text{elem}}$ \\
\midrule
\multirow[t]{6}{*}{Perov-5} & FTCP~\citep{REN2021} & 0.24 & 54.24 & 0.00 & 0.00 & 10.27 & 156.0 & 0.6297 \\
& Cond-DFC-VAE~\citep{court20203} & 73.60 & 82.95 & 73.92 & 10.13 & 2.268 & 4.111 & 0.8373 \\
& G-SchNet~\citep{NIPS2019_8974}  & 99.92 & 98.79 & 0.18 & 0.23 & 1.625 &	4.746 &	0.0368 \\
& P-G-SchNet~\citep{NIPS2019_8974} & 79.63 & \textbf{99.13} & 0.37 & 0.25 & 0.2755 & 1.388 & 0.4552 \\
& CDVAE~\citep{xie2021crystal} & \textbf{100.0} & 98.59 & 99.45 & \textbf{98.46} & 0.1258 & 0.0264 & 0.0628 \\
& DiffCSP & \textbf{100.0} & 98.85 &	\textbf{99.74} &	98.27 & \textbf{0.1110} & \textbf{0.0263} & \textbf{0.0128} \\

\midrule

\multirow[t]{5}{*}{Carbon-24\tablefootnote{Composition-based metrics are not meaningful for Carbon-24, as all structures are composed of carbon.}} & FTCP~\citep{REN2021}  & 0.08 & -- & 0.00 & 0.00 & 5.206 & 19.05 & -- \\
& G-SchNet~\citep{NIPS2019_8974} & 99.94	 & --	  & 0.00 & 0.00 & 0.9427 & 1.320 &	-- \\
& P-G-SchNet~\citep{NIPS2019_8974} & 48.39 & -- & 0.00 & 0.00 & 1.533 & 134.7 & --  \\
& CDVAE~\citep{xie2021crystal} & \textbf{100.0} & -- & 99.80 & 83.08 & 0.1407 & 0.2850 & -- \\
& DiffCSP & \textbf{100.0} & -- & \textbf{99.90} & \textbf{97.27} & \textbf{0.0805} & \textbf{0.0820} & -- \\

\midrule

\multirow[t]{5}{*}{MP-20} & FTCP~\citep{REN2021} & 1.55 & 48.37 & 4.72 & 0.09 & 23.71 & 160.9 & 0.7363 \\
& G-SchNet~\citep{NIPS2019_8974} & 99.65 &	75.96 & 38.33 & 99.57 &	3.034 &	42.09 &	0.6411	\\
& P-G-SchNet~\citep{NIPS2019_8974} & 77.51 & 76.40 & 41.93 & 99.74 & 4.04 & 2.448 & 0.6234 \\
& CDVAE~\citep{xie2021crystal} & \textbf{100.0} & \textbf{86.70} & 99.15 & 99.49 & 0.6875 & 0.2778 & 1.432  \\
& DiffCSP & \textbf{100.0} & 83.25 & \textbf{99.71} & \textbf{99.76} & \textbf{0.3502} & \textbf{0.1247} & \textbf{0.3398} \\




\bottomrule
\end{tabular}
}

\vskip -0.2in
\end{table}

\section{Discussions}

\textbf{Limitation}  \textbf{1.} Composition generation. Our model yields slightly lower compositional validity in Table~\ref{tab:gen}. We provide more discussion in Appendix~\ref{apd:ext}, and it is promising to propose more powerful generation methods on atom types. \textbf{2.} Experimental evaluation. Further wet-lab experiments can better verify the effectiveness of the model in real applications.


\textbf{Conclusion} In this work, we present DiffCSP, a diffusion-based learning framework for crystal structure prediction, which is particularly curated with the vital symmetries existing in crystals. The diffusion is highly flexible by jointly optimizing the lattice and fractional coordinates, where the intermediate distributions are guaranteed to be invariant under necessary transformations. We demonstrate the efficacy of our approach on a wide range of crystal datasets, verifying the strong applicability of DiffCSP towards predicting high-quality crystal structures.

\section{Acknowledgement}

This work was jointly supported by the following projects: the National Natural Science Foundation of China (61925601 \& 62006137); Beijing Nova Program (20230484278); the Fundamental Research Funds for the Central Universities, and the Research Funds of Renmin University of China (23XNKJ19); Alibaba Damo Research Fund; CCF-Ant Research Fund (CCF-AFSG RF20220204); National Key R\&D Program of China (2022ZD0117805).

\bibliography{example_paper}

\begin{thebibliography}{74}
\providecommand{\natexlab}[1]{#1}
\providecommand{\url}[1]{\texttt{#1}}
\expandafter\ifx\csname urlstyle\endcsname\relax
  \providecommand{\doi}[1]{doi: #1}\else
  \providecommand{\doi}{doi: \begingroup \urlstyle{rm}\Url}\fi

\bibitem[Desiraju(2002)]{desiraju2002cryptic}
Gautam~R Desiraju.
\newblock Cryptic crystallography.
\newblock \emph{Nature materials}, 1\penalty0 (2):\penalty0 77--79, 2002.

\bibitem[Butler et~al.(2018)Butler, Davies, Cartwright, Isayev, and
  Walsh]{butler2018machine}
Keith~T Butler, Daniel~W Davies, Hugh Cartwright, Olexandr Isayev, and Aron
  Walsh.
\newblock Machine learning for molecular and materials science.
\newblock \emph{Nature}, 559\penalty0 (7715):\penalty0 547--555, 2018.

\bibitem[Kohn and Sham(1965)]{kohn1965self}
Walter Kohn and Lu~Jeu Sham.
\newblock Self-consistent equations including exchange and correlation effects.
\newblock \emph{Physical review}, 140\penalty0 (4A):\penalty0 A1133, 1965.

\bibitem[Pickard and Needs(2011)]{pickard2011ab}
Chris~J Pickard and RJ~Needs.
\newblock Ab initio random structure searching.
\newblock \emph{Journal of Physics: Condensed Matter}, 23\penalty0
  (5):\penalty0 053201, 2011.

\bibitem[Yamashita et~al.(2018)Yamashita, Sato, Kino, Miyake, Tsuda, and
  Oguchi]{yamashita2018crystal}
Tomoki Yamashita, Nobuya Sato, Hiori Kino, Takashi Miyake, Koji Tsuda, and
  Tamio Oguchi.
\newblock Crystal structure prediction accelerated by bayesian optimization.
\newblock \emph{Physical Review Materials}, 2\penalty0 (1):\penalty0 013803,
  2018.

\bibitem[Oganov et~al.(2019)Oganov, Pickard, Zhu, and
  Needs]{oganov2019structure}
Artem~R Oganov, Chris~J Pickard, Qiang Zhu, and Richard~J Needs.
\newblock Structure prediction drives materials discovery.
\newblock \emph{Nature Reviews Materials}, 4\penalty0 (5):\penalty0 331--348,
  2019.

\bibitem[Court et~al.(2020)Court, Yildirim, Jain, and Cole]{court20203}
Callum~J Court, Batuhan Yildirim, Apoorv Jain, and Jacqueline~M Cole.
\newblock 3-d inorganic crystal structure generation and property prediction
  via representation learning.
\newblock \emph{Journal of chemical information and modeling}, 60\penalty0
  (10):\penalty0 4518--4535, 2020.

\bibitem[Yang et~al.(2021)Yang, Siriwardane, Dong, Li, and Hu]{yang2021crystal}
Wenhui Yang, Edirisuriya M~Dilanga Siriwardane, Rongzhi Dong, Yuxin Li, and
  Jianjun Hu.
\newblock Crystal structure prediction of materials with high symmetry using
  differential evolution.
\newblock \emph{Journal of Physics: Condensed Matter}, 33\penalty0
  (45):\penalty0 455902, 2021.

\bibitem[Xie et~al.(2021)Xie, Fu, Ganea, Barzilay, and
  Jaakkola]{xie2021crystal}
Tian Xie, Xiang Fu, Octavian-Eugen Ganea, Regina Barzilay, and Tommi~S
  Jaakkola.
\newblock Crystal diffusion variational autoencoder for periodic material
  generation.
\newblock In \emph{International Conference on Learning Representations}, 2021.

\bibitem[Xu et~al.(2021)Xu, Yu, Song, Shi, Ermon, and Tang]{xu2021geodiff}
Minkai Xu, Lantao Yu, Yang Song, Chence Shi, Stefano Ermon, and Jian Tang.
\newblock Geodiff: A geometric diffusion model for molecular conformation
  generation.
\newblock In \emph{International Conference on Learning Representations}, 2021.

\bibitem[Trippe et~al.(2023)Trippe, Yim, Tischer, Baker, Broderick, Barzilay,
  and Jaakkola]{trippe2022diffusion}
Brian~L. Trippe, Jason Yim, Doug Tischer, David Baker, Tamara Broderick, Regina
  Barzilay, and Tommi~S. Jaakkola.
\newblock Diffusion probabilistic modeling of protein backbones in 3d for the
  motif-scaffolding problem.
\newblock In \emph{The Eleventh International Conference on Learning
  Representations}, 2023.

\bibitem[Corso et~al.(2023)Corso, St{\"a}rk, Jing, Barzilay, and
  Jaakkola]{corso2022diffdock}
Gabriele Corso, Hannes St{\"a}rk, Bowen Jing, Regina Barzilay, and Tommi~S.
  Jaakkola.
\newblock Diffdock: Diffusion steps, twists, and turns for molecular docking.
\newblock In \emph{The Eleventh International Conference on Learning
  Representations}, 2023.

\bibitem[Shi et~al.(2021)Shi, Luo, Xu, and Tang]{shi2021learning}
Chence Shi, Shitong Luo, Minkai Xu, and Jian Tang.
\newblock Learning gradient fields for molecular conformation generation.
\newblock In \emph{International Conference on Machine Learning}, pages
  9558--9568. PMLR, 2021.

\bibitem[Jumper et~al.(2021)Jumper, Evans, Pritzel, Green, Figurnov,
  Ronneberger, Tunyasuvunakool, Bates, {\v{Z}}{\'\i}dek, Potapenko, Bridgland,
  Meyer, Kohl, Ballard, Cowie, Romera-Paredes, Nikolov, Jain, Adler, Back,
  Petersen, Reiman, Clancy, Zielinski, Steinegger, Pacholska, Berghammer,
  Bodenstein, Silver, Vinyals, Senior, Kavukcuoglu, Kohli, and
  Hassabis]{AlphaFold2021}
John Jumper, Richard Evans, Alexander Pritzel, Tim Green, Michael Figurnov,
  Olaf Ronneberger, Kathryn Tunyasuvunakool, Russ Bates, Augustin
  {\v{Z}}{\'\i}dek, Anna Potapenko, Alex Bridgland, Clemens Meyer, Simon A~A
  Kohl, Andrew~J Ballard, Andrew Cowie, Bernardino Romera-Paredes, Stanislav
  Nikolov, Rishub Jain, Jonas Adler, Trevor Back, Stig Petersen, David Reiman,
  Ellen Clancy, Michal Zielinski, Martin Steinegger, Michalina Pacholska, Tamas
  Berghammer, Sebastian Bodenstein, David Silver, Oriol Vinyals, Andrew~W
  Senior, Koray Kavukcuoglu, Pushmeet Kohli, and Demis Hassabis.
\newblock Highly accurate protein structure prediction with {AlphaFold}.
\newblock \emph{Nature}, 596\penalty0 (7873):\penalty0 583--589, 2021.
\newblock \doi{10.1038/s41586-021-03819-2}.

\bibitem[Xie and Grossman(2018)]{PhysRevLett.120.145301}
Tian Xie and Jeffrey~C. Grossman.
\newblock Crystal graph convolutional neural networks for an accurate and
  interpretable prediction of material properties.
\newblock \emph{Phys. Rev. Lett.}, 120:\penalty0 145301, Apr 2018.
\newblock \doi{10.1103/PhysRevLett.120.145301}.

\bibitem[Jing et~al.(2022)Jing, Corso, Chang, Barzilay, and
  Jaakkola]{jing2022torsional}
Bowen Jing, Gabriele Corso, Jeffrey Chang, Regina Barzilay, and Tommi Jaakkola.
\newblock Torsional diffusion for molecular conformer generation.
\newblock \emph{arXiv preprint arXiv:2206.01729}, 2022.

\bibitem[Wang et~al.(2010)Wang, Lv, Zhu, and Ma]{wang2010crystal}
Yanchao Wang, Jian Lv, Li~Zhu, and Yanming Ma.
\newblock Crystal structure prediction via particle-swarm optimization.
\newblock \emph{Physical Review B}, 82\penalty0 (9):\penalty0 094116, 2010.

\bibitem[Zhang et~al.(2017)Zhang, Wang, Wang, Zhang, and Ma]{zhang2017computer}
Yunwei Zhang, Hui Wang, Yanchao Wang, Lijun Zhang, and Yanming Ma.
\newblock Computer-assisted inverse design of inorganic electrides.
\newblock \emph{Physical Review X}, 7\penalty0 (1):\penalty0 011017, 2017.

\bibitem[Jacobsen et~al.(2018)Jacobsen, J{\o}rgensen, and
  Hammer]{jacobsen2018fly}
TL~Jacobsen, MS~J{\o}rgensen, and B~Hammer.
\newblock On-the-fly machine learning of atomic potential in density functional
  theory structure optimization.
\newblock \emph{Physical review letters}, 120\penalty0 (2):\penalty0 026102,
  2018.

\bibitem[Podryabinkin et~al.(2019)Podryabinkin, Tikhonov, Shapeev, and
  Oganov]{podryabinkin2019accelerating}
Evgeny~V Podryabinkin, Evgeny~V Tikhonov, Alexander~V Shapeev, and Artem~R
  Oganov.
\newblock Accelerating crystal structure prediction by machine-learning
  interatomic potentials with active learning.
\newblock \emph{Physical Review B}, 99\penalty0 (6):\penalty0 064114, 2019.

\bibitem[Cheng et~al.(2022)Cheng, Gong, and Yin]{cheng2022crystal}
Guanjian Cheng, Xin-Gao Gong, and Wan-Jian Yin.
\newblock Crystal structure prediction by combining graph network and
  optimization algorithm.
\newblock \emph{Nature communications}, 13\penalty0 (1):\penalty0 1--8, 2022.

\bibitem[Hoffmann et~al.(2019)Hoffmann, Maestrati, Sawada, Tang, Sellier, and
  Bengio]{hoffmann2019data}
Jordan Hoffmann, Louis Maestrati, Yoshihide Sawada, Jian Tang, Jean~Michel
  Sellier, and Yoshua Bengio.
\newblock Data-driven approach to encoding and decoding 3-d crystal structures.
\newblock \emph{arXiv preprint arXiv:1909.00949}, 2019.

\bibitem[Noh et~al.(2019)Noh, Kim, Stein, Sanchez-Lengeling, Gregoire,
  Aspuru-Guzik, and Jung]{noh2019inverse}
Juhwan Noh, Jaehoon Kim, Helge~S Stein, Benjamin Sanchez-Lengeling, John~M
  Gregoire, Alan Aspuru-Guzik, and Yousung Jung.
\newblock Inverse design of solid-state materials via a continuous
  representation.
\newblock \emph{Matter}, 1\penalty0 (5):\penalty0 1370--1384, 2019.

\bibitem[Hu et~al.(2020)Hu, Yang, and Dilanga~Siriwardane]{hu2020distance}
Jianjun Hu, Wenhui Yang, and Edirisuriya~M Dilanga~Siriwardane.
\newblock Distance matrix-based crystal structure prediction using evolutionary
  algorithms.
\newblock \emph{The Journal of Physical Chemistry A}, 124\penalty0
  (51):\penalty0 10909--10919, 2020.

\bibitem[Hu et~al.(2021)Hu, Yang, Dong, Li, Li, Li, and
  Siriwardane]{hu2021contact}
Jianjun Hu, Wenhui Yang, Rongzhi Dong, Yuxin Li, Xiang Li, Shaobo Li, and
  Edirisuriya~MD Siriwardane.
\newblock Contact map based crystal structure prediction using global
  optimization.
\newblock \emph{CrystEngComm}, 23\penalty0 (8):\penalty0 1765--1776, 2021.

\bibitem[Nouira et~al.(2018)Nouira, Sokolovska, and
  Crivello]{nouira2018crystalgan}
Asma Nouira, Nataliya Sokolovska, and Jean-Claude Crivello.
\newblock Crystalgan: learning to discover crystallographic structures with
  generative adversarial networks.
\newblock \emph{arXiv preprint arXiv:1810.11203}, 2018.

\bibitem[Kim et~al.(2020)Kim, Noh, Gu, Aspuru-Guzik, and
  Jung]{kim2020generative}
Sungwon Kim, Juhwan Noh, Geun~Ho Gu, Alan Aspuru-Guzik, and Yousung Jung.
\newblock Generative adversarial networks for crystal structure prediction.
\newblock \emph{ACS central science}, 6\penalty0 (8):\penalty0 1412--1420,
  2020.

\bibitem[Ren et~al.(2021)Ren, Tian, Noh, Oviedo, Xing, Li, Liang, Zhu, Aberle,
  Sun, Wang, Liu, Li, Jayavelu, Hippalgaonkar, Jung, and Buonassisi]{REN2021}
Zekun Ren, Siyu Isaac~Parker Tian, Juhwan Noh, Felipe Oviedo, Guangzong Xing,
  Jiali Li, Qiaohao Liang, Ruiming Zhu, Armin~G. Aberle, Shijing Sun, Xiaonan
  Wang, Yi~Liu, Qianxiao Li, Senthilnath Jayavelu, Kedar Hippalgaonkar, Yousung
  Jung, and Tonio Buonassisi.
\newblock An invertible crystallographic representation for general inverse
  design of inorganic crystals with targeted properties.
\newblock \emph{Matter}, 2021.
\newblock ISSN 2590-2385.
\newblock \doi{https://doi.org/10.1016/j.matt.2021.11.032}.

\bibitem[Sch{\"u}tt et~al.(2018)Sch{\"u}tt, Sauceda, Kindermans, Tkatchenko,
  and M{\"u}ller]{schutt2018schnet}
Kristof~T Sch{\"u}tt, Huziel~E Sauceda, P-J Kindermans, Alexandre Tkatchenko,
  and K-R M{\"u}ller.
\newblock Schnet--a deep learning architecture for molecules and materials.
\newblock \emph{The Journal of Chemical Physics}, 148\penalty0 (24):\penalty0
  241722, 2018.

\bibitem[Thomas et~al.(2018)Thomas, Smidt, Kearnes, Yang, Li, Kohlhoff, and
  Riley]{thomas2018tensor}
Nathaniel Thomas, Tess Smidt, Steven Kearnes, Lusann Yang, Li~Li, Kai Kohlhoff,
  and Patrick Riley.
\newblock Tensor field networks: Rotation-and translation-equivariant neural
  networks for 3d point clouds.
\newblock \emph{arXiv preprint arXiv:1802.08219}, 2018.

\bibitem[Fuchs et~al.(2020)Fuchs, Worrall, Fischer, and
  Welling]{DBLP:conf/nips/FuchsW0W20}
Fabian Fuchs, Daniel~E. Worrall, Volker Fischer, and Max Welling.
\newblock Se(3)-transformers: 3d roto-translation equivariant attention
  networks.
\newblock In Hugo Larochelle, Marc'Aurelio Ranzato, Raia Hadsell,
  Maria{-}Florina Balcan, and Hsuan{-}Tien Lin, editors, \emph{Advances in
  Neural Information Processing Systems 33: Annual Conference on Neural
  Information Processing Systems 2020, NeurIPS 2020, December 6-12, 2020,
  virtual}, 2020.

\bibitem[Satorras et~al.(2021)Satorras, Hoogeboom, and Welling]{satorras2021n}
V{\i}ctor~Garcia Satorras, Emiel Hoogeboom, and Max Welling.
\newblock E (n) equivariant graph neural networks.
\newblock In \emph{International Conference on Machine Learning}, pages
  9323--9332. PMLR, 2021.

\bibitem[Th{\"o}lke and De~Fabritiis(2021)]{tholke2021equivariant}
Philipp Th{\"o}lke and Gianni De~Fabritiis.
\newblock Equivariant transformers for neural network based molecular
  potentials.
\newblock In \emph{International Conference on Learning Representations}, 2021.

\bibitem[Chanussot et~al.(2021)Chanussot, Das, Goyal, Lavril, Shuaibi, Riviere,
  Tran, Heras-Domingo, Ho, Hu, Palizhati, Sriram, Wood, Yoon, Parikh, Zitnick,
  and Ulissi]{ocp_dataset}
Lowik Chanussot, Abhishek Das, Siddharth Goyal, Thibaut Lavril, Muhammed
  Shuaibi, Morgane Riviere, Kevin Tran, Javier Heras-Domingo, Caleb Ho, Weihua
  Hu, Aini Palizhati, Anuroop Sriram, Brandon Wood, Junwoong Yoon, Devi Parikh,
  C.~Lawrence Zitnick, and Zachary Ulissi.
\newblock Open catalyst 2020 (oc20) dataset and community challenges.
\newblock \emph{ACS Catalysis}, 2021.
\newblock \doi{10.1021/acscatal.0c04525}.

\bibitem[Tran et~al.(2022)Tran, Lan, Shuaibi, Goyal, Wood, Das, Heras-Domingo,
  Kolluru, Rizvi, Shoghi, et~al.]{tran2022open}
Richard Tran, Janice Lan, Muhammed Shuaibi, Siddharth Goyal, Brandon~M Wood,
  Abhishek Das, Javier Heras-Domingo, Adeesh Kolluru, Ammar Rizvi, Nima Shoghi,
  et~al.
\newblock The open catalyst 2022 (oc22) dataset and challenges for oxide
  electrocatalysis.
\newblock \emph{arXiv preprint arXiv:2206.08917}, 2022.

\bibitem[Yan et~al.(2022)Yan, Liu, Lin, and Ji]{yan2022periodic}
Keqiang Yan, Yi~Liu, Yuchao Lin, and Shuiwang Ji.
\newblock Periodic graph transformers for crystal material property prediction.
\newblock In \emph{The 36th Annual Conference on Neural Information Processing
  Systems}, 2022.

\bibitem[Sohl-Dickstein et~al.(2015)Sohl-Dickstein, Weiss, Maheswaranathan, and
  Ganguli]{sohl2015deep}
Jascha Sohl-Dickstein, Eric Weiss, Niru Maheswaranathan, and Surya Ganguli.
\newblock Deep unsupervised learning using nonequilibrium thermodynamics.
\newblock In \emph{International Conference on Machine Learning}, pages
  2256--2265. PMLR, 2015.

\bibitem[Ho et~al.(2020)Ho, Jain, and Abbeel]{ho2020denoising}
Jonathan Ho, Ajay Jain, and Pieter Abbeel.
\newblock Denoising diffusion probabilistic models.
\newblock \emph{Advances in Neural Information Processing Systems},
  33:\penalty0 6840--6851, 2020.

\bibitem[Rombach et~al.(2021)Rombach, Blattmann, Lorenz, Esser, and
  Ommer]{rombach2021highresolution}
Robin Rombach, Andreas Blattmann, Dominik Lorenz, Patrick Esser, and Björn
  Ommer.
\newblock High-resolution image synthesis with latent diffusion models, 2021.

\bibitem[Ramesh et~al.(2022)Ramesh, Dhariwal, Nichol, Chu, and
  Chen]{ramesh2022}
Aditya Ramesh, Prafulla Dhariwal, Alex Nichol, Casey Chu, and Mark Chen.
\newblock Hierarchical text-conditional image generation with clip latents.
\newblock \emph{arXiv preprint arXiv:2204.06125}, 2022.

\bibitem[Hoogeboom et~al.(2022)Hoogeboom, Satorras, Vignac, and
  Welling]{hoogeboom2022equivariant}
Emiel Hoogeboom, V{\i}ctor~Garcia Satorras, Cl{\'e}ment Vignac, and Max
  Welling.
\newblock Equivariant diffusion for molecule generation in 3d.
\newblock In \emph{International Conference on Machine Learning}, pages
  8867--8887. PMLR, 2022.

\bibitem[Luo et~al.(2022)Luo, Su, Peng, Wang, Peng, and Ma]{luo2022antigen}
Shitong Luo, Yufeng Su, Xingang Peng, Sheng Wang, Jian Peng, and Jianzhu Ma.
\newblock Antigen-specific antibody design and optimization with
  diffusion-based generative models for protein structures.
\newblock In Alice~H. Oh, Alekh Agarwal, Danielle Belgrave, and Kyunghyun Cho,
  editors, \emph{Advances in Neural Information Processing Systems}, 2022.

\bibitem[Bortoli et~al.(2022)Bortoli, Mathieu, Hutchinson, Thornton, Teh, and
  Doucet]{de2022riemannian}
Valentin~De Bortoli, Emile Mathieu, Michael~John Hutchinson, James Thornton,
  Yee~Whye Teh, and Arnaud Doucet.
\newblock Riemannian score-based generative modelling.
\newblock In Alice~H. Oh, Alekh Agarwal, Danielle Belgrave, and Kyunghyun Cho,
  editors, \emph{Advances in Neural Information Processing Systems}, 2022.

\bibitem[Huang et~al.(2022)Huang, Aghajohari, Bose, Panangaden, and
  Courville]{huang2022riemannian}
Chin-Wei Huang, Milad Aghajohari, Joey Bose, Prakash Panangaden, and Aaron
  Courville.
\newblock Riemannian diffusion models.
\newblock In Alice~H. Oh, Alekh Agarwal, Danielle Belgrave, and Kyunghyun Cho,
  editors, \emph{Advances in Neural Information Processing Systems}, 2022.

\bibitem[Hofmann and Apostolakis(2003)]{hofmann2003crystal}
Detlef~WM Hofmann and Joannis Apostolakis.
\newblock Crystal structure prediction by data mining.
\newblock \emph{Journal of Molecular Structure}, 647\penalty0 (1-3):\penalty0
  17--39, 2003.

\bibitem[Grosse-Kunstleve et~al.(2004)Grosse-Kunstleve, Sauter, and
  Adams]{grosse2004numerically}
Ralf~W Grosse-Kunstleve, Nicholas~K Sauter, and Paul~D Adams.
\newblock Numerically stable algorithms for the computation of reduced unit
  cells.
\newblock \emph{Acta Crystallographica Section A: Foundations of
  Crystallography}, 60\penalty0 (1):\penalty0 1--6, 2004.

\bibitem[Kipf and Welling(2016)]{kipf2016variational}
Thomas~N Kipf and Max Welling.
\newblock Variational graph auto-encoders.
\newblock \emph{arXiv preprint arXiv:1611.07308}, 2016.

\bibitem[Nichol and Dhariwal(2021)]{nichol2021improved}
Alexander~Quinn Nichol and Prafulla Dhariwal.
\newblock Improved denoising diffusion probabilistic models.
\newblock In \emph{International Conference on Machine Learning}, pages
  8162--8171. PMLR, 2021.

\bibitem[Song and Ermon(2020)]{song2020improved}
Yang Song and Stefano Ermon.
\newblock Improved techniques for training score-based generative models.
\newblock \emph{Advances in neural information processing systems},
  33:\penalty0 12438--12448, 2020.

\bibitem[Song et~al.(2020)Song, Sohl-Dickstein, Kingma, Kumar, Ermon, and
  Poole]{song2020score}
Yang Song, Jascha Sohl-Dickstein, Diederik~P Kingma, Abhishek Kumar, Stefano
  Ermon, and Ben Poole.
\newblock Score-based generative modeling through stochastic differential
  equations.
\newblock \emph{arXiv preprint arXiv:2011.13456}, 2020.

\bibitem[Vaswani et~al.(2017)Vaswani, Shazeer, Parmar, Uszkoreit, Jones, Gomez,
  Kaiser, and Polosukhin]{vaswani2017attention}
Ashish Vaswani, Noam Shazeer, Niki Parmar, Jakob Uszkoreit, Llion Jones,
  Aidan~N Gomez, {\L}ukasz Kaiser, and Illia Polosukhin.
\newblock Attention is all you need.
\newblock \emph{Advances in neural information processing systems}, 30, 2017.

\bibitem[Chen et~al.(2019)Chen, Ye, Zuo, Zheng, and Ong]{chen2019graph}
Chi Chen, Weike Ye, Yunxing Zuo, Chen Zheng, and Shyue~Ping Ong.
\newblock Graph networks as a universal machine learning framework for
  molecules and crystals.
\newblock \emph{Chemistry of Materials}, 31\penalty0 (9):\penalty0 3564--3572,
  2019.

\bibitem[Gebauer et~al.(2022)Gebauer, Gastegger, Hessmann, M{\"u}ller, and
  Sch{\"u}tt]{gebauer2022inverse}
Niklas~WA Gebauer, Michael Gastegger, Stefaan~SP Hessmann, Klaus-Robert
  M{\"u}ller, and Kristof~T Sch{\"u}tt.
\newblock Inverse design of 3d molecular structures with conditional generative
  neural networks.
\newblock \emph{Nature communications}, 13\penalty0 (1):\penalty0 1--11, 2022.

\bibitem[Castelli et~al.(2012{\natexlab{a}})Castelli, Landis, Thygesen, Dahl,
  Chorkendorff, Jaramillo, and Jacobsen]{castelli2012new}
Ivano~E Castelli, David~D Landis, Kristian~S Thygesen, S{\o}ren Dahl,
  Ib~Chorkendorff, Thomas~F Jaramillo, and Karsten~W Jacobsen.
\newblock New cubic perovskites for one-and two-photon water splitting using
  the computational materials repository.
\newblock \emph{Energy \& Environmental Science}, 5\penalty0 (10):\penalty0
  9034--9043, 2012{\natexlab{a}}.

\bibitem[Castelli et~al.(2012{\natexlab{b}})Castelli, Olsen, Datta, Landis,
  Dahl, Thygesen, and Jacobsen]{castelli2012computational}
Ivano~E Castelli, Thomas Olsen, Soumendu Datta, David~D Landis, S{\o}ren Dahl,
  Kristian~S Thygesen, and Karsten~W Jacobsen.
\newblock Computational screening of perovskite metal oxides for optimal solar
  light capture.
\newblock \emph{Energy \& Environmental Science}, 5\penalty0 (2):\penalty0
  5814--5819, 2012{\natexlab{b}}.

\bibitem[Pickard(2020)]{carbon2020data}
Chris~J. Pickard.
\newblock Airss data for carbon at 10gpa and the c+n+h+o system at 1gpa, 2020.

\bibitem[Jain et~al.(2013)Jain, Ong, Hautier, Chen, Richards, Dacek, Cholia,
  Gunter, Skinner, Ceder, et~al.]{jain2013commentary}
Anubhav Jain, Shyue~Ping Ong, Geoffroy Hautier, Wei Chen, William~Davidson
  Richards, Stephen Dacek, Shreyas Cholia, Dan Gunter, David Skinner, Gerbrand
  Ceder, et~al.
\newblock Commentary: The materials project: A materials genome approach to
  accelerating materials innovation.
\newblock \emph{APL materials}, 1\penalty0 (1):\penalty0 011002, 2013.

\bibitem[Ong et~al.(2013)Ong, Richards, Jain, Hautier, Kocher, Cholia, Gunter,
  Chevrier, Persson, and Ceder]{ong2013python}
Shyue~Ping Ong, William~Davidson Richards, Anubhav Jain, Geoffroy Hautier,
  Michael Kocher, Shreyas Cholia, Dan Gunter, Vincent~L Chevrier, Kristin~A
  Persson, and Gerbrand Ceder.
\newblock Python materials genomics (pymatgen): A robust, open-source python
  library for materials analysis.
\newblock \emph{Computational Materials Science}, 68:\penalty0 314--319, 2013.

\bibitem[Glass et~al.(2006)Glass, Oganov, and Hansen]{glass2006uspex}
Colin~W Glass, Artem~R Oganov, and Nikolaus Hansen.
\newblock Uspex—evolutionary crystal structure prediction.
\newblock \emph{Computer physics communications}, 175\penalty0
  (11-12):\penalty0 713--720, 2006.

\bibitem[Gebauer et~al.(2019)Gebauer, Gastegger, and Sch\"{u}tt]{NIPS2019_8974}
Niklas Gebauer, Michael Gastegger, and Kristof Sch\"{u}tt.
\newblock Symmetry-adapted generation of 3d point sets for the targeted
  discovery of molecules.
\newblock In H.~Wallach, H.~Larochelle, A.~Beygelzimer, F.~d\textquotesingle
  Alch\'{e}-Buc, E.~Fox, and R.~Garnett, editors, \emph{Advances in Neural
  Information Processing Systems 32}, pages 7566--7578. Curran Associates,
  Inc., 2019.

\bibitem[Kurz et~al.(2014)Kurz, Gilitschenski, and Hanebeck]{kurz2014efficient}
Gerhard Kurz, Igor Gilitschenski, and Uwe~D Hanebeck.
\newblock Efficient evaluation of the probability density function of a wrapped
  normal distribution.
\newblock In \emph{2014 Sensor Data Fusion: Trends, Solutions, Applications
  (SDF)}, pages 1--5. IEEE, 2014.

\bibitem[Bergstra et~al.(2013)Bergstra, Yamins, and Cox]{bergstra2013making}
James Bergstra, Daniel Yamins, and David Cox.
\newblock Making a science of model search: Hyperparameter optimization in
  hundreds of dimensions for vision architectures.
\newblock In \emph{International conference on machine learning}, pages
  115--123. PMLR, 2013.

\bibitem[Gasteiger et~al.(2020)Gasteiger, Giri, Margraf, and
  G{\"u}nnemann]{gasteiger_dimenetpp_2020}
Johannes Gasteiger, Shankari Giri, Johannes~T. Margraf, and Stephan
  G{\"u}nnemann.
\newblock Fast and uncertainty-aware directional message passing for
  non-equilibrium molecules.
\newblock In \emph{Machine Learning for Molecules Workshop, NeurIPS}, 2020.

\bibitem[Gasteiger et~al.(2021)Gasteiger, Becker, and
  G{\"u}nnemann]{gasteiger_gemnet_2021}
Johannes Gasteiger, Florian Becker, and Stephan G{\"u}nnemann.
\newblock Gemnet: Universal directional graph neural networks for molecules.
\newblock In \emph{Conference on Neural Information Processing Systems
  (NeurIPS)}, 2021.

\bibitem[Zimmermann and Jain(2020)]{zimmermann2020local}
Nils~ER Zimmermann and Anubhav Jain.
\newblock Local structure order parameters and site fingerprints for
  quantification of coordination environment and crystal structure similarity.
\newblock \emph{RSC advances}, 10\penalty0 (10):\penalty0 6063--6081, 2020.

\bibitem[Bl{\"o}chl(1994)]{blochl1994projector}
Peter~E Bl{\"o}chl.
\newblock Projector augmented-wave method.
\newblock \emph{Physical review B}, 50\penalty0 (24):\penalty0 17953, 1994.

\bibitem[Kresse and Furthm{\"u}ller(1996)]{kresse1996efficiency}
Georg Kresse and J{\"u}rgen Furthm{\"u}ller.
\newblock Efficiency of ab-initio total energy calculations for metals and
  semiconductors using a plane-wave basis set.
\newblock \emph{Computational materials science}, 6\penalty0 (1):\penalty0
  15--50, 1996.

\bibitem[Perdew et~al.(1996)Perdew, Burke, and
  Ernzerhof]{perdew1996generalized}
John~P Perdew, Kieron Burke, and Matthias Ernzerhof.
\newblock Generalized gradient approximation made simple.
\newblock \emph{Physical review letters}, 77\penalty0 (18):\penalty0 3865,
  1996.

\bibitem[Hoogeboom et~al.(2021)Hoogeboom, Nielsen, Jaini, Forr{\'e}, and
  Welling]{hoogeboom2021argmax}
Emiel Hoogeboom, Didrik Nielsen, Priyank Jaini, Patrick Forr{\'e}, and Max
  Welling.
\newblock Argmax flows and multinomial diffusion: Learning categorical
  distributions.
\newblock \emph{Advances in Neural Information Processing Systems},
  34:\penalty0 12454--12465, 2021.

\bibitem[Vignac et~al.(2023)Vignac, Krawczuk, Siraudin, Wang, Cevher, and
  Frossard]{vignac2023digress}
Clement Vignac, Igor Krawczuk, Antoine Siraudin, Bohan Wang, Volkan Cevher, and
  Pascal Frossard.
\newblock Digress: Discrete denoising diffusion for graph generation.
\newblock In \emph{The Eleventh International Conference on Learning
  Representations}, 2023.

\bibitem[Davies et~al.(2019)Davies, Butler, Jackson, Skelton, Morita, and
  Walsh]{davies2019smact}
Daniel~W Davies, Keith~T Butler, Adam~J Jackson, Jonathan~M Skelton, Kazuki
  Morita, and Aron Walsh.
\newblock Smact: Semiconducting materials by analogy and chemical theory.
\newblock \emph{Journal of Open Source Software}, 4\penalty0 (38):\penalty0
  1361, 2019.

\bibitem[Ward et~al.(2016)Ward, Agrawal, Choudhary, and
  Wolverton]{ward2016general}
Logan Ward, Ankit Agrawal, Alok Choudhary, and Christopher Wolverton.
\newblock A general-purpose machine learning framework for predicting
  properties of inorganic materials.
\newblock \emph{npj Computational Materials}, 2\penalty0 (1):\penalty0 1--7,
  2016.

\bibitem[Zhao et~al.(2022)Zhao, Bao, Li, and Zhu]{zhao2022egsde}
Min Zhao, Fan Bao, Chongxuan Li, and Jun Zhu.
\newblock {EGSDE}: Unpaired image-to-image translation via energy-guided
  stochastic differential equations.
\newblock In Alice~H. Oh, Alekh Agarwal, Danielle Belgrave, and Kyunghyun Cho,
  editors, \emph{Advances in Neural Information Processing Systems}, 2022.

\bibitem[Bao et~al.(2023)Bao, Zhao, Hao, Li, Li, and Zhu]{bao2023equivariant}
Fan Bao, Min Zhao, Zhongkai Hao, Peiyao Li, Chongxuan Li, and Jun Zhu.
\newblock Equivariant energy-guided {SDE} for inverse molecular design.
\newblock In \emph{The Eleventh International Conference on Learning
  Representations}, 2023.

\end{thebibliography}
\bibliographystyle{unsrtnat}

\newpage

\tableofcontents
\newpage

\appendix
\onecolumn

\section{Theoretical Analysis}
\subsection{Proof of Proposition 1}

We first introduce the following definition to describe the equivariance and invariance from the perspective of distributions.
\begin{definition}
\label{def:ei}
We call a distribution $p(x)$ is $G$-invariant if for any transformation $g$ in the group $G$, $p(g\cdot x) = p(x)$, and a conditional distribution $p(x|c)$ is G-equivariant if $p(g\cdot x|g\cdot c) = p(x|c), \forall g\in G$.
\end{definition}
We then provide and prove the following lemma to capture the symmetry of the generation process.
\begin{lemma}[\citet{xu2021geodiff}]
\label{lm:mrkv}
Consider the generation Markov process $p(x_0) = p(x_T)\int p(x_{0:T-1}|x_t)dx_{1:T}$. If the prior distribution $p(x_T)$ is G-invariant and the Markov transitions $p(x_{t-1}|x_t), 0 < t\leq T$ are G-equivariant, the marginal distribution $p(x_0)$ is also G-invariant.
\end{lemma}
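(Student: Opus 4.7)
The plan is to prove the lemma by downward induction on the time index $t$, showing that if $p(x_t)$ is $G$-invariant, then so is $p(x_{t-1})$; the base case is the assumed $G$-invariance of $p(x_T)$, and iterating $T$ times yields the desired $G$-invariance of $p(x_0)$.

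First I would set up the inductive hypothesis cleanly: assume $p(x_t)$ is $G$-invariant, meaning $p(g\cdot x_t) = p(x_t)$ for all $g\in G$. The one-step marginalization identity gives
\begin{equation*}
p(x_{t-1}) \;=\; \int p(x_{t-1}\mid x_t)\,p(x_t)\,dx_t.
\end{equation*}
To verify that $p(g\cdot x_{t-1}) = p(x_{t-1})$, I would substitute $g\cdot x_{t-1}$ on the left-hand side, obtaining
\begin{equation*}
p(g\cdot x_{t-1}) \;=\; \int p(g\cdot x_{t-1}\mid x_t)\,p(x_t)\,dx_t,
\end{equation*}
and then perform the change of variables $x_t = g\cdot y_t$ inside the integral. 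The equivariance assumption $p(g\cdot x_{t-1}\mid g\cdot y_t) = p(x_{t-1}\mid y_t)$ converts the conditional back to its original form, while the inductive hypothesis $p(g\cdot y_t) = p(y_t)$ handles the prior factor. What remains is the Jacobian of the substitution.

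The main obstacle, and the only nontrivial subtlety, is justifying that the change of variables contributes Jacobian $1$ so that $dx_t = dy_t$. This is where one must invoke that the relevant groups act by volume-preserving transformations: orthogonal transformations ($|\det \mQ|=1$), translations, and permutations all satisfy this, so the transformation is a measure-preserving diffeomorphism of the sample space (or a quotient thereof, in the periodic case). I would state this as a standing assumption on the $G$-action — exactly the setting used in Definitions~\ref{De:pi}–\ref{De:PTI} — and cite it as the reason the pushforward measure agrees with the base measure. With this in hand, combining the three ingredients gives
\begin{equation*}
p(g\cdot x_{t-1}) \;=\; \int p(x_{t-1}\mid y_t)\,p(y_t)\,dy_t \;=\; p(x_{t-1}),
\end{equation*}
completing the inductive step.

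Finally, I would close the argument by noting that the induction covers $t = T, T-1, \ldots, 1$, and the final step yields that $p(x_0)$ is $G$-invariant, which is the claim. I expect the proof to be short (half a page) once the measure-preservation remark is in place; the bulk of the writing effort will go into stating the invariance/equivariance definitions precisely enough that the change-of-variables step is unambiguous, rather than into any genuinely hard calculation.
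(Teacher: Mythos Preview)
Your proposal is correct and uses essentially the same ingredients as the paper's proof: invariance of the prior, equivariance of each transition, and a change of variables with unit Jacobian. The only organizational difference is that the paper does the whole chain in one shot---writing $p(g\cdot x_0)=p(g\cdot x_T)\int\prod_{t} p(g\cdot x_{t-1}\mid g\cdot x_t)\,dx_{1:T}$ and then peeling off the $g$'s simultaneously---whereas you induct one step at a time; your version is slightly more explicit about the measure-preservation of the $G$-action, which the paper leaves implicit.
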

\begin{proof}
For any $g\in G$, we have
\begin{align*}
    p(g\cdot x_0) & = p(g\cdot x_T)\int p(g\cdot x_{0:T-1}|g\cdot x_t)dx_{1:T}\\
    & = p(g\cdot x_T)\int\prod_{t=1}^T p(g\cdot x_{t-1}|g\cdot x_t)dx_{1:T}\\
    & = p(x_T)\int\prod_{t=1}^T p(g\cdot x_{t-1}|g\cdot x_t)dx_{1:T}\\
    & = p(x_T)\int\prod_{t=1}^T p(x_{t-1}|x_t)dx_{1:T}\\
    & = p(x_T)\int p(x_{0:T-1}|x_t)dx_{1:T} \\
    & = p(x_0).
\end{align*}
Hence, the marginal distribution $p(x_0)$ is G-invariant.
\end{proof}

\setcounter{section}{4}
\setcounter{theorem}{0}
The proposition~\cref{prop:renv} is rewritten and proved as follows.
\renv*
\renewcommand\thesection{\Alph{section}}
\setcounter{section}{1}
\setcounter{theorem}{2}
\begin{proof}
Consider the transition probability in Eq.~(\ref{eq:ddpm1}), we have
\begin{align*}
    p(\mL_{t-1}|\mL_t,\mF_t,\mA) = \gN(\mL_{t-1}|a_t (\mL_t - b_t\hat{\vepsilon}_\mL(\mL_t,\mF_t,\mA, t)), \sigma^2_t\mI),
\end{align*}
where $a_t = \frac{1}{\sqrt{\alpha_t}}, b_t = \frac{\beta_t} {\sqrt{1 -\bar{\alpha}_t}}, \sigma^2_t=\beta_t\cdot\frac{1-\bar{\alpha}_{t-1}}{1-\bar{\alpha}_t}$ for simplicity, and 
 $\hat{\vepsilon}_\mL(\gM_t, t)$ is completed as $\hat{\vepsilon}_\mL(\mL_t,\mF_t,\mA, t)$.
 
As the denoising term $\hat{\vepsilon}_\mL(\mL_t,\mF_t,\mA, t)$ is O(3)-equivariant, we have $\hat{\vepsilon}_\mL(\mQ\mL_t,\mF_t,\mA, t) = \mQ\hat{\vepsilon}_\mL(\mL_t,\mF_t,\mA, t)$ for any orthogonal transformation $\mQ\in\sR^{3\times 3}, \mQ^\top\mQ=\mI$.

For the variable $\mL\sim\gN(\bar{\mL},\sigma^2\mI)$, we have $\mQ\mL\sim\gN(\mQ\bar{\mL},\mQ(\sigma^2\mI)\mQ^\top)=\gN(\mQ\bar{\mL},\sigma^2\mI)$. That is,
\begin{align}
\label{s1}
    \gN(\mL|\bar{\mL},\sigma^2\mI) = \gN(\mQ\mL|\mQ\bar{\mL},\sigma^2\mI).
\end{align}

For the transition probability $p(\mL_{t-1}|\mL_t,\mF_t,\mA)$, we have
 \begin{align*}
     p(\mQ\mL_{t-1}|\mQ\mL_t,\mF_t,\mA) &= \gN(\mQ\mL_{t-1}|a_t (\mQ\mL_t - b_t\hat{\vepsilon}_\mL(\mQ\mL_t,\mF_t,\mA, t)), \sigma^2_t\mI) \\
      &= \gN(\mQ\mL_{t-1}|a_t (\mQ\mL_t - b_t\mQ\hat{\vepsilon}_\mL(\mL_t,\mF_t,\mA, t)), \sigma^2_t\mI) \tag{O(3)-equivariant $\hat{\vepsilon}_\mL$}\\
     &= \gN(\mQ\mL_{t-1}|\mQ\Big(a_t (\mL_t - b_t\hat{\vepsilon}_\mL(\mL_t,\mF_t,\mA, t))\Big), \sigma^2_t\mI) \\
     & =\gN(\mL_{t-1}|a_t (\mL_t - b_t\hat{\vepsilon}_\mL(\mL_t,\mF_t,\mA, t)), \sigma^2_t\mI) \tag{Eq.~(\ref{s1})}\\
     & = p(\mL_{t-1}|\mL_t,\mF_t,\mA).
 \end{align*}
 As the transition is $O(3)$-equivariant and the prior distribution $\gN(0,\mI)$ is $O(3)$-invariant, we prove that the the marginal distribution $p(\mL_0)$ is $O(3)$-invariant based on lemma~\ref{lm:mrkv}.
\end{proof}

\subsection{Proof of Proposition 2}
Let $\gN_w(\mu,\sigma^2\mI)$ denote the wrapped normal distribution with mean $\mu$, variance $\sigma^2$ and period 1. We first provide the following lemma.
\begin{lemma}
\label{lm:tenv}
    If the denoising term $\hat{\vepsilon}_\mF(\mL_t,\mF_t,\mA,t)$ is periodic translation invariant, and the transition probabilty can be formulated as $p(\mF_{t-1}|\mL_t, \mF_t, \mA) = \gN_w(\mF_{t-1}|\mF_{t} + u_t\hat{\vepsilon}_\mF(\mL_t,\mF_t,\mA,t), v^2_t\mI)$, where $u_t,v_t$ are functions of $t$, the transition is periodic translation equivariant.
\end{lemma}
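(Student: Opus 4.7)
The claim to establish is that for every $\vt\in\sR^3$,
\[
p\big(w(\mF_{t-1}+\vt\vone^\top)\mid \mL_t, w(\mF_t+\vt\vone^\top), \mA\big) = p(\mF_{t-1}\mid \mL_t, \mF_t, \mA).
\]
The plan is to substitute the assumed wrapped-normal form of the transition on the left-hand side, and then peel off the translation $\vt\vone^\top$ by applying two elementary invariances of $\gN_w$ in sequence, namely periodicity in both arguments and invariance under simultaneous real translations.

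First I would invoke the hypothesis that $\hat{\vepsilon}_\mF$ is periodic translation invariant to rewrite $\hat{\vepsilon}_\mF(\mL_t, w(\mF_t+\vt\vone^\top), \mA, t) = \hat{\vepsilon}_\mF(\mL_t, \mF_t, \mA, t)$, so the translated transition density becomes
\[
\gN_w\big(w(\mF_{t-1}+\vt\vone^\top)\,\big|\, w(\mF_t+\vt\vone^\top) + u_t\hat{\vepsilon}_\mF(\mL_t,\mF_t,\mA,t),\, v_t^2\mI\big).
\]
Next I would use the defining property of the wrapped normal: since $\gN_w(\cdot\mid \mu, \sigma^2\mI)$ is a density on the torus $(\sR/\sZ)^{3\times N}$, its value is unchanged if either the argument or the mean is shifted by an element of $\sZ^{3\times N}$. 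This lets me strip both occurrences of $w$, yielding
\[
\gN_w\big(\mF_{t-1}+\vt\vone^\top\,\big|\, \mF_t+\vt\vone^\top + u_t\hat{\vepsilon}_\mF(\mL_t,\mF_t,\mA,t),\, v_t^2\mI\big).
\]
Finally, writing $\gN_w(x\mid \mu, v_t^2\mI) = \sum_{\mZ\in\sZ^{3\times N}} \gN(x+\mZ\mid \mu, v_t^2\mI)$ makes it clear that the density depends on $x$ and $\mu$ only through the difference $x-\mu$, so a common real translation by $\vt\vone^\top$ cancels and we recover $\gN_w(\mF_{t-1}\mid \mF_t + u_t\hat{\vepsilon}_\mF(\mL_t,\mF_t,\mA,t), v_t^2\mI) = p(\mF_{t-1}\mid \mL_t, \mF_t, \mA)$.

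The only step that is not immediate is justifying the two invariances of $\gN_w$, which I would dispatch up front via the series representation above: (i) shifting $x$ by an integer matrix reindexes the sum, (ii) shifting both $x$ and $\mu$ by the same real $\vt\vone^\top$ leaves every Gaussian summand invariant, and (iii) shifting $\mu$ by an integer matrix again reindexes the sum. This is the only place periodicity genuinely enters; the rest is bookkeeping. The main thing to watch is matching the wrap in the hypothesis on $\hat{\vepsilon}_\mF$ (from Definition~\ref{De:PTI}) with the wrap appearing in the transition, so that the invariance hypothesis is applied in exactly the form produced by the group action. I do not anticipate conceptual difficulty beyond this alignment.
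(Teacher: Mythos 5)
Your proposal is correct and follows essentially the same route as the paper's proof: apply the periodic translation invariance of $\hat{\vepsilon}_\mF$, then use the integer-shift invariance of $\gN_w$ in both arguments together with its invariance under a common real translation, all justified via the series representation. The paper merely organizes these same three facts in a slightly different order (wrapping the mean via $w(a+b)=w(w(a)+b)$ rather than stripping the wraps first), so there is no substantive difference.
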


\begin{proof}
As the denoising term $\hat{\vepsilon}_\mF(\mL_t,\mF_t,\mA,t)$ is periodic translation invariant (for short PTI), we have $\hat{\vepsilon}_\mF(\mL_t,w(\mF_t+\vt\vone^\top),\mA,t)=\hat{\vepsilon}_\mF(\mL_t,\mF_t,\mA,t)$, for any translation $\vt\in\sR^3$.

For the wrapping function $w(\cdot)$, we have
\begin{align}
\label{s2}
    w(a+b) = w(w(a)+b), \forall a,b \in\sR.
\end{align}

For wrapped normal distribution $\gN_w(\mu,\sigma^2)$ with mean $\mu$, variance $\sigma^2$ and period 1, and for any $k', k''\in\sZ$, we have
\begin{align*}
    \gN_w(x+k'|\mu+k'',\sigma^2) &= \frac{1}{\sqrt{2\pi}\sigma}\sum_{k=-\infty}^{\infty}\exp{\Big(-\frac{(x+k'-(\mu+k'')-k)^2}{2\sigma^2}\Big)} \\
    &=\frac{1}{\sqrt{2\pi}\sigma}\sum_{m=-\infty}^{\infty}\exp{\Big(-\frac{(x-\mu-m)^2}{2\sigma^2}\Big)} \tag{$m=k-k'+k''$}\\
    &=\gN_w(x|\mu,\sigma^2)
\end{align*}

Let $k'=0, k''=w(\mu)-\mu$, we directly have
\begin{align}
\label{s3}
    \gN_w(x|w(\mu),\sigma^2) = \gN_w(x|\mu,\sigma^2).
\end{align}

For any $t\in\sR$, we have
\begin{align*}
    \gN_w(x+t|\mu+t,\sigma^2) &= \frac{1}{\sqrt{2\pi}\sigma}\sum_{k=-\infty}^{\infty}\exp{\Big(-\frac{(x+t-(\mu+t)-k)^2}{2\sigma^2}\Big)} \\
    &=\frac{1}{\sqrt{2\pi}\sigma}\sum_{k=-\infty}^{\infty}\exp{\Big(-\frac{(x-\mu-k)^2}{2\sigma^2}\Big)} \\
    &=\gN_w(x|\mu,\sigma^2).
\end{align*}

Let $k'=w(x+t)-(x+t),k''=w(\mu+t)-(\mu+t)$, we have
\begin{align}
\label{s5}
    \gN_w(w(x+t)|w(\mu+t),\sigma^2) = \gN_w(x+t|\mu+t,\sigma^2) = \gN_w(x|\mu,\sigma^2).
\end{align}

For the transition probability $p(\mF_{t-1}|\mL_t, \mF_t, \mA)$, we have
    \begin{align*}
        &p(w(\mF_{t-1}+\vt)|\mL_t, w(\mF_t+\vt\vone^\top), \mA) \\ =& \gN_w(w(\mF_{t-1}+\vt\vone^\top)|w(\mF_t+\vt\vone^\top) + u_t\hat{\vepsilon}_\mF(\mL_t,w(\mF_t+\vt\vone^\top),\mA,t), v^2_t\mI) \\
        =& \gN_w(w(\mF_{t-1}+\vt\vone^\top)|w(\mF_t+\vt\vone^\top) + u_t\hat{\vepsilon}_\mF(\mL_t,\mF_t,\mA,t), v^2_t\mI) \tag{PTI $\hat{\vepsilon}_\mF$} \\
        =& \gN_w(w(\mF_{t-1}+\vt\vone^\top)|w\Big(w(\mF_t+\vt\vone^\top) + u_t\hat{\vepsilon}_\mF(\mL_t,\mF_t,\mA,t)\Big), v^2_t\mI) \tag{Eq.~(\ref{s3})} \\
        =& \gN_w(w(\mF_{t-1}+\vt\vone^\top)|w\Big(\mF_t + u_t\hat{\vepsilon}_\mF(\mL_t,\mF_t,\mA,t) +\vt\Big), v^2_t\mI) \tag{Eq.~(\ref{s2})} \\
        =& \gN_w(\mF_{t-1}|\mF_{t} + u_t\hat{\vepsilon}_\mF(\mL_t,\mF_t,\mA,t), v^2_t\mI)\tag{Eq.~(\ref{s5})} \\
        =& p(\mF_{t-1}|\mL_t, \mF_t, \mA).
    \end{align*}
\end{proof}

 The transition probability of the fractional coordinates during the Predictor-Corrector sampling can be formulated as
\begin{align*}
    p(\mF_{t-1}|\mL_t, \mF_t, \mA) &= p_P(\mF_{t-\frac{1}{2}}|\mL_t, \mF_t, \mA) p_C(\mF_{t-1}|\mL_{t-1}, \mF_{t-\frac{1}{2}}, \mA), \\
    p_P(\mF_{t-\frac{1}{2}}|\mL_t, \mF_t, \mA) &= \gN_w(\mF_{t-\frac{1}{2}}|\mF_t+(\sigma_t^2-\sigma_{t-1}^2)\hat{\vepsilon}_\mF(\mL_t, \mF_t, \mA, t), \frac{\sigma_{t-1}^2(\sigma_t^2-\sigma_{t-1}^2)}{\sigma_t^2}\mI), \\
    p_C(\mF_{t-1}|\mL_{t-1}, \mF_{t-\frac{1}{2}}, \mA) &= \gN_w(\mF_{t-\frac{1}{2}}|\mF_t+\gamma\frac{\sigma_{t-1} }{\sigma_1} \hat{\vepsilon}_\mF(\mL_{t-1}, \mF_{t-\frac{1}{2}}, \mA, t-1), 2\gamma\frac{\sigma_{t-1} }{\sigma_1}\mI),
\end{align*}
where $p_P,p_C$ are the transitions of the predictor and corrector. According to lemma~\ref{lm:tenv}, both of the transitions are periodic translation equivariant. Therefore, the transition $p(\mF_{t-1}|\mL_t, \mF_t, \mA)$ is periodic translation equivariant. As the prior distribution $\gU(0,1)$ is periodic translation invariant, we finally prove that the marginal distribution $p(\mF_0)$ is periodic translation invariant based on lemma~\ref{lm:mrkv}.

\subsection{Proof of Proposition 3}
\label{apd:pe}

We rewrite proposition~\ref{prop:pE(3)} as follows.

\renewcommand\thesection{\arabic{section}}
\setcounter{section}{4}
\setcounter{theorem}{2}
\pe*
\renewcommand\thesection{\Alph{section}}
\setcounter{section}{1}
\setcounter{theorem}{3}

\begin{proof}
    We first prove the orthogonal invariance of the inner product term $\mL^\top\mL$. For any orthogonal transformation $\mQ\in\sR^{3\times 3}, \mQ^\top\mQ=\mI$, we have
\begin{align*}
    (\mQ\mL)^\top(\mQ\mL) = \mL^\top\mQ^\top\mQ\mL = \mL^\top\mI\mL = \mL^\top\mL.
\end{align*}
For the Fourier Transformation, consider $k$ is even, we have
\begin{align*}
    &\psi_{\text{FT}}(w(\vf_j+\vt) - w(\vf_i+\vt))[c,k] \\
    =& \sin\Big(2\pi m\big(w(f_{j,c}+t_c) - w(f_{i,c}+t_c)\big)\Big) \\
    =& \sin\Bigg(2\pi m(f_{j,c} - f_{i,c}) - 2\pi m\Big((f_{j,c} - f_{i,c}) - (w(f_{j,c}+t_c) - w(f_{i,c}+t_c))\Big)\Bigg) \\
    =& \sin(2\pi m(f_{j,c} - f_{i,c})) \\
    =& \psi_{\text{FT}}(\vf_j - \vf_i)[c,k].
\end{align*}
Similar results can be acquired as $k$ is odd. Therefore, we have $\psi_{\text{FT}}(w(\vf_j+\vt) - w(\vf_i+\vt)) = \psi_{\text{FT}}(\vf_j - \vf_i), \forall\vt\in\sR^3$, \emph{i.e.}, the Fourier Transformation $\psi_{\text{FT}}$ is periodic translation invariant. 
According to the above, the message passing layers defined in Eq.~(\ref{eq:mp1})-~(\ref{eq:mp3}) is periodic E(3) invariant. Hence, we can directly prove that the coordinate denoising term $\hat{\vepsilon}_\mF$ is periodic translation invariant. Let $\hat{\vepsilon}_l(\mL,\mF,\mA,t) = \varphi_L\big(\frac{1}{N}\sum_{i=1}^N \vh_i^{(S)}\big)$. For the lattice denoising term $\hat{\vepsilon}_\mL=\mL\hat{\vepsilon}_l$, we have
\begin{align*}
    \hat{\vepsilon}_\mL(\mQ\mL,\mF,\mA,t) &= \mQ\mL\hat{\vepsilon}_l(\mQ\mL,\mF,\mA,t) \\
    &= \mQ\mL\hat{\vepsilon}_l(\mL,\mF,\mA,t) \\
    &= \mQ\hat{\vepsilon}_\mL(\mL,\mF,\mA,t),\forall\mQ\in\sR^{3\times3},\mQ^\top\mQ=\mI.
\end{align*}
Above all, $\hat{\vepsilon}_\mL$ is O(3)-equivariant, and $\hat{\vepsilon}_\mF$ is periodic translation invariant. According to proposition~\ref{prop:renv} and~\ref{prop:tinv}, the generated distribution by DiffCSP in Algorithm~\ref{alg:gen} is periodic E(3) invariant.
\end{proof}

\begin{figure*}[t]
    \centering    \includegraphics[width=0.9\textwidth]{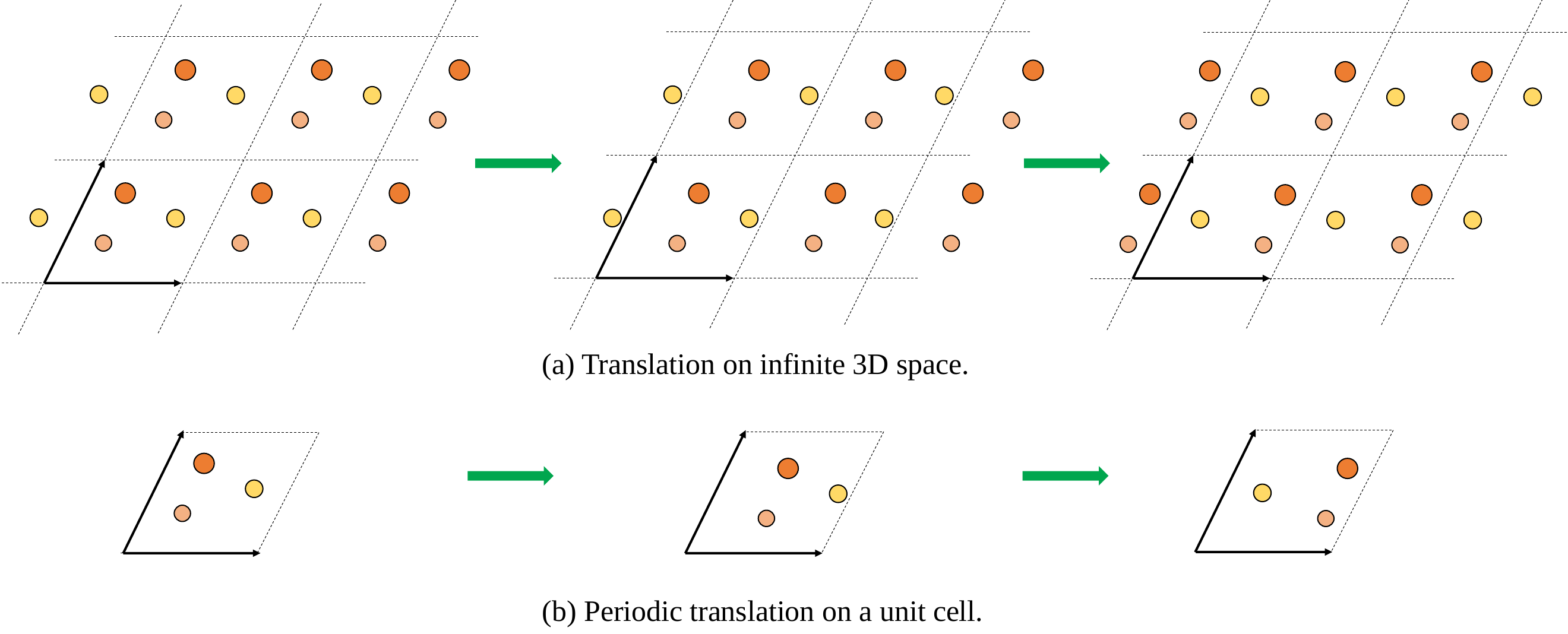}
    \caption{An example of periodic translation invariance. From the view of a unit cell, the atoms translated across the right boundary will be brought back to the left side.}
    \label{fig:ptrans}
\end{figure*}

\subsection{Discussion on Periodic Translation Invariance}
\label{apd:connect}

In Definition~\ref{De:PTI}, we define the periodic translation invariance as a combination of translation invariance and periodicity. To see this, we illustrate an additional example in Figure~\ref{fig:ptrans}. From a global view, when we translate all atom coordinates from left to right, the crystal structure remains unchanged, which indicates translation invariance. At the same time, from the view of a unit cell, the atom translated across the right boundary will be brought back to the left side owing to periodicity. Therefore, for convenience, we define the joint effect of translation invariance and periodicity as periodic translation invariance.

Previous works~\citep{yan2022periodic} have shown that shifting the periodic boundaries will not change the crystal structure. In this section, we further show that such periodic boundary shifting is equivalent to the periodic translation defined in Definition~\ref{De:PTI}. 

Consider two origin points $\vp_1, \vp_2\in\sR^{3\times 1}$ and the lattice matrix $\mL$, the constructed unit cells by $\vp_1, \vp_2$ can be represented as $\gM_1=(\mA_1,\mF_1,\mL)$ and $\gM_2=(\mA_2,\mF_2,\mL)$, where $\mF_1,\mF_2\in\sR^{3\times N}$ are fractional coordinates and 
\begin{align}
     &\{(\va_{1,i}',\vx_{1,i}')|\va_{1,i}'=\va_{1,i},\vx_{1,i}'=\vp_1 + \mL\vf_{1,i} + \mL\vk,\forall \vk\in\sZ^{3\times 1}\} \\= &  \{(\va_{1,j}',\vx_{1,j}')|\va_{1,j}'=\va_{1,j},\vx_{2,j}'=\vp_2 + \mL\vf_{2,j} + \mL\vk,\forall\vk\in\sZ^{3\times 1}\},
\end{align}
which means that the unit cells formed by different origin points actually represent the same infinite crystal structures~\citep{yan2022periodic}. We further construct a bijection $\gT:\gM_1\rightarrow\gM_2$ mapping each atom in $\gM_1$ to the corresponding atom in unit cell $\gM_2$. For the pair $(a_{1,i},f_{1,i})\in\gM_1, (a_{2,j},f_{2,j})\in\gM_2$, we have $\gT(a_{1,i},f_{1,i}) = (a_{2,j},f_{2,j})$ iff $\exists \vk_i\in\sZ^{3\times 1}$, \emph{s.t.}
$$
\left\{
\begin{aligned}
  &\va_{1,i} = \va_{2,j}, \\  
  &\vp_1 + \mL\vf_{1,i} + \mL\vk_i = \vp_2 + \mL\vf_{2,j}.    
\end{aligned}
\right.
$$
After proper transformation, we have
\begin{align}
    \vf_{2,j} = \vf_{1,i} + \mL^{-1}(\vp_1 - \vp_2) + \vk_i.
\end{align}

As $\vf_{1,i},\vf_{2,i}\in [0,1)^{3\times 1}$, we have
$$
\left\{
\begin{aligned}
  &\vk_i = -\lfloor \vf_{1,i} + \mL^{-1}(\vp_1 - \vp_2) \rfloor,\\  
  & \vf_{2,j} = w\Big(\vf_{1,i} + \mL^{-1}(\vp_1 - \vp_2)\Big),
\end{aligned}
\right.
$$
which means shifting the periodic boundaries by changing the origin point $\vp_1$ into $\vp_2$ is equivalent to a periodic translation $\mF_2 = w\Big(\mF_1 + \mL^{-1}(\vp_1 - \vp_2)\vone^\top \Big)$.

\section{Implementation Details}
\label{apd:details}
\subsection{Approximation of the Wrapped Normal Distribution}
\label{apd:wn}

The Probability Density Function (PDF) of the wrapped normal distribution $\gN_w(0,\sigma_t^2)$ is
\begin{align*}
    \gN_w(x|0,\sigma_t^2) = \frac{1}{\sqrt{2\pi}\sigma_t}\sum_{k=-\infty}^{\infty}\exp{\Big(-\frac{(x-k)^2}{2\sigma_t^2}\Big)},
\end{align*}
where $x\in[0,1)$. Because the above series is convergent, it is reasonable to approximate the infinite summation to a finite truncated summation~\citep{kurz2014efficient} as
\begin{align*}
    f_{w,n}(x;0,\sigma_t^2) = \frac{1}{\sqrt{2\pi}\sigma_t}\sum_{k=-n}^{n}\exp{\Big(-\frac{(x-k)^2}{2\sigma_t^2}\Big)}.
\end{align*}
And the logarithmic gradient of $f$ can be formulated as
\begin{align*}
    \nabla_x \log f_{w,n}(x;0,\sigma_t^2) & = \nabla_x \log \Bigg(\frac{1}{\sqrt{2\pi}\sigma_t}\sum_{k=-n}^{n}\exp{\Big(-\frac{(x-k)^2}{2\sigma_t^2}\Big)}\Bigg) \\
    & = \nabla_x \log \Bigg(\sum_{k=-n}^{n}\exp{\Big(-\frac{(x-k)^2}{2\sigma_t^2}\Big)}\Bigg) \\
    & = \frac{\sum_{k=-n}^{n}(k-x)\exp{\big(-\frac{(x-k)^2}{2\sigma_t^2}\big)}}{\sigma_t^2\sum_{k=-n}^{n}\exp{\big(-\frac{(x-k)^2}{2\sigma_t^2}\big)}}
\end{align*}
To estimate $\lambda_t=\E^{-1}_{x\sim \gN_w(0,\sigma_t^2)}\big[\|\nabla_x \log\gN_w(x|0,\sigma_t^2) \|_2^2\big]$, we first sample $m$ points from $\gN_w(0,\sigma_t^2)$, and the expectation is approximated as
\begin{align*}
    \tilde{\lambda}_t &= \Big[\frac{1}{m}\sum_{i=1}^m \|\nabla_x \log f_{w,n}(x_i;0,\sigma_t^2)\|_2^2\Big]^{-1} \\
    & = \Bigg[\frac{1}{m}\sum_{i=1}^m \Big\|\frac{\sum_{k=-n}^{n}(k-x_i)\exp{\big(-\frac{(x_i-k)^2}{2\sigma_t^2}\big)}}{\sigma_t^2\sum_{k=-n}^{n}\exp{\big(-\frac{(x_i-k)^2}{2\sigma_t^2}\big)}}\Big\|_2^2\Bigg]^{-1}.
\end{align*}
For implementation, we select $n=10$ and $m=10000$.

\begin{figure*}[t]
    \centering    \includegraphics[width=0.9\textwidth]{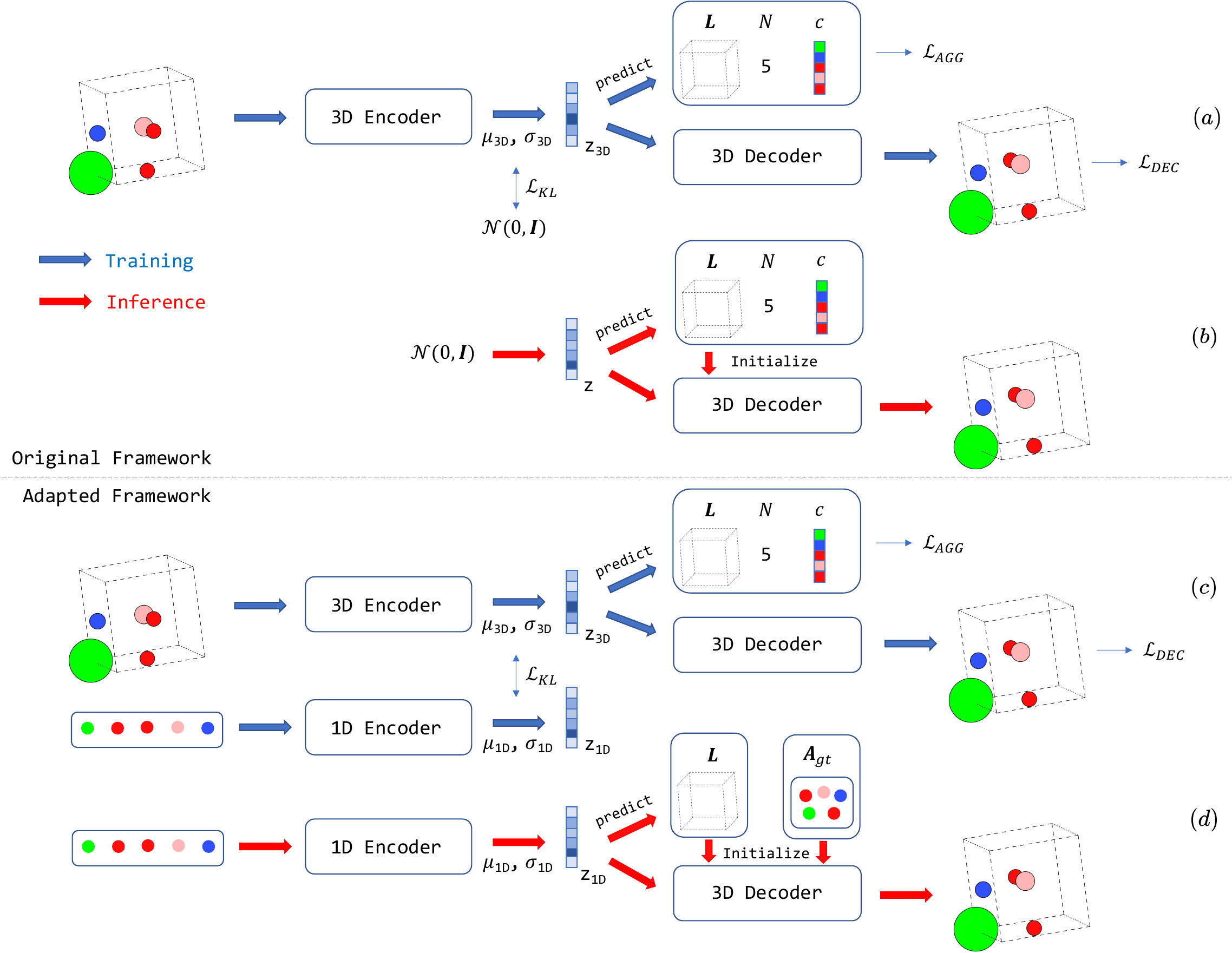}
    \caption{Overview of the original (a,b) and adapted (c,d) CDVAE. The key adaptations lie in two points. (1) We introduce an additional 1D prior encoder to fit the latent distribution of the given composition. (2) We initialize the generation procedure of the 3D decoder with the ground truth composition and keep the atom types unchanged to ensure the generated structure conforms to the given composition.}
    \label{fig:cdvae}
\end{figure*}

\subsection{Adaptation of CDVAE}
As illustrated in Figure~\ref{fig:cdvae}, the original CDVAE~\citep{xie2021crystal} mainly consists of three parts: (1) a 3D encoder to encode the structure into the latent variable $z_{3D}$, (2) a property predictor to predict the lattice $\mL$, the number of nodes in the unit cell $N$, and the proportion of each element in the composition $c$, (3) a 3D decoder to generate the structure from $z_{3D}, \mL, N, c$ via the Score Matching with Langevin Dynamics (SMLD, \citet{song2020improved}) method. The training objective is composed of the loss functions on the three parts, \emph{i.e.} the KL divergence between the encoded distribution and the standard normal distribution $\gL_{KL}$, the aggregated prediction loss $\gL_{AGG}$ and the denoising loss on the decoder $\gL_{DEC}$. Formally, we have
\begin{align*}
    \gL_{ORI} = \gL_{AGG} + \gL_{DEC} + \beta D_{KL}\Big(\gN(\mu_{3D},\sigma_{3D}^2\mI)\|\gN(0,\mI)\Big).
\end{align*}
We formulate $\gL_{KL} =  \beta D_{KL}(\gN(\mu_{3D},\sigma_{3D}^2\mI)\|\gN(0,\mI))$ for better comparison with the adapted method. $\beta$ is the hyper-parameter to balance the scale of the KL divergence and other loss functions.

To adapt the CDVAE framework to the CSP task, we apply two main changes. Firstly, for the encoder side, to take the composition as the condition, we apply an additional 1D prior encoder to encode the composition set into a latent distribution $\gN(\mu_{1D}, \sigma_{1D}^2\mI)$ and minimize the KL divergence between the 3D and 1D distribution. The training objective is modified into
\begin{align*}
    \gL_{ADA} = \gL_{AGG} + \gL_{DEC} + \beta D_{KL}\Big(\gN(\mu_{3D},\sigma_{3D}^2\mI)\|\gN(\mu_{1D}, \sigma_{1D}^2\mI)\Big).
\end{align*}
During the inference procedure, as the composition is given, the latent variable $z_{1D}$ is sampled from $\gN(\mu_{1D}, \sigma_{1D}^2\mI)$. For implementation, we apply a Transformer~\citep{vaswani2017attention} without positional encoding as the 1D encoder to ensure the permutation invariance.
Secondly, for the generation procedure, we apply the ground truth composition for initialization and keep the atom types unchanged during the Langevin dynamics to ensure the generated structure conforms to the given composition.
\label{apd:cdvae}

\subsection{Algorithms for Training and Sampling}
\label{apd:algo}
Algorithm~\ref{alg:train} summarizes the forward diffusion process as well as the training of the denoising model $\phi$, while Algorithm~\ref{alg:gen} illustrates the backward sampling process. They can maintain the symmetries if $\phi$ is delicately constructed. Notably, We apply the predictor-corrector sampler~\citep{song2020score} to sample $\mF_{0}$. In Algorithm~\ref{alg:gen}, Line 7 refers to the predictor while Lines 9-10 correspond to the corrector.

\begin{algorithm}[h]
\small
\caption{Training Procedure of DiffCSP}\label{alg:train}
\begin{algorithmic}[1]
\STATE \textbf{Input:} lattice matrix $\mL_0$, atom types $\mA$, fractional coordinates $\mF_0$, denoising model $\phi$, and the number of sampling steps $T$.
\STATE Sample $\vepsilon_\mL\sim\gN(\vzero,\mI)$,$\vepsilon_\mF\sim\gN(\vzero,\mI)$ and $t\sim \gU(1,T)$.
\STATE $\mL_t \gets \sqrt{\bar{\alpha}_t}\mL_0 + \sqrt{1 - \bar{\alpha}_t} \vepsilon_\mL$
\STATE $\mF_t \gets w(\mF_0 + \sigma_{t} \vepsilon_\mF)$
\STATE $\hat{\vepsilon}_\mL, \hat{\vepsilon}_\mF \gets \phi(\mL_t, \mF_t, \mA, t)$
\STATE $\gL_\mL \gets \|\vepsilon_\mL - \hat{\vepsilon}_\mL\|_2^2$
\STATE $\gL_\mF \gets \lambda_t\|\nabla_{\mF_t}\log q(\mF_t | \mF_0) - \hat{\vepsilon}_\mF\|_2^2$
\STATE Minimize $\gL_\mL+\gL_\mF$
\end{algorithmic}
\end{algorithm}

\vskip -0.1in
\begin{algorithm}[h]
\small
\caption{Sampling Procedure of DiffCSP}\label{alg:gen}
\begin{algorithmic}[1]
\STATE \textbf{Input:} atom types $\mA$, denoising model $\phi$, number of sampling steps $T$, step size of Langevin dynamics $\gamma$.
\STATE Sample $\mL_T\sim\gN(\vzero,\mI)$,$\mF_T\sim \gU(0,1)$.
\FOR{$t \gets T,\cdots, 1$}
    \STATE{Sample $\vepsilon_\mL,\vepsilon_\mF, \vepsilon'_\mF\sim\gN(\vzero,\mI)$}
    \STATE{$\hat{\vepsilon}_\mL, \hat{\vepsilon}_\mF \gets \phi(\mL_t, \mF_t, \mA, t)$.}
    \STATE{$\mL_{t -1} \gets \frac{1}{\sqrt{\alpha_t}} (\mL_t - \frac{\beta_t} {\sqrt{1 -\bar{\alpha}_t}}\hat{\vepsilon}_\mL)  + \sqrt{\beta_t\cdot\frac{1-\bar{\alpha}_{t-1}}{1-\bar{\alpha}_t}}\vepsilon_\mL$.} 
    \STATE{$\mF_{t-\frac{1}{2}} \gets w(\mF_t + (\sigma_t^2-\sigma_{t-1}^2)\hat{\vepsilon}_\mF + \frac{\sigma_{t-1}\sqrt{\sigma_t^2-\sigma_{t-1}^2}}{\sigma_t} \vepsilon_\mF)$}
    \STATE{$\_, \hat{\vepsilon}_\mF \gets \phi(\mL_{t-1}, \mF_{t-\frac{1}{2}}, \mA, t - 1)$.}
    \STATE{$d_t\gets \gamma\sigma_{t-1}^2 / \sigma_1^2$}
    \STATE{$\mF_{t-1} \gets w(\mF_{t-\frac{1}{2}} + d_t \hat{\vepsilon}_\mF + \sqrt{2 d_t}\vepsilon'_\mF)$.}
\ENDFOR
\STATE \textbf{Return} $\mL_0, \mF_0$.
\end{algorithmic}
\end{algorithm}

\subsection{Hyper-parameters and Training Details}

We acquire the origin datasets from CDVAE~\citep{xie2021crystal}\footnote{\url{https://github.com/txie-93/cdvae/tree/main/data}} and MPTS-52~\citep{jain2013commentary}\footnote{\url{https://github.com/sparks-baird/mp-time-split}}. We utilize the codebases from GN-OA~\citep{cheng2022crystal}\footnote{\url{http://www.comates.group/links?software=gn_oa}}, cG-SchNet~\citep{gebauer2022inverse}\footnote{\url{https://github.com/atomistic-machine-learning/cG-SchNet}} and CDVAE~\citep{xie2021crystal}\footnote{\url{https://github.com/txie-93/cdvae}} for baseline implementations.

For the optimization methods, we apply the MEGNet~\citep{chen2019graph} with 3 layers, 32 hidden states as property predictor. The model is trained for 1000 epochs with an Adam optimizer with learning rate $1\times 10^{-3}$. As for the optimization algorithms, we apply RS, PSO, and BO according to~\citet{cheng2022crystal}. For RS and BO, We employ random search and TPE-based BO as implemented in Hyperopt~\cite{bergstra2013making}\footnote{\url{https://github.com/hyperopt/hyperopt}}. Specifically, we choose observation quantile $\gamma$ as 0.25 and the number of initial random points as 200 for BO. For PSO, we used scikit-opt\footnote{\url{https://github.com/guofei9987/scikit-opt}} and choose the momentum parameter $\omega$ as 0.8, the cognitive as 0.5, the social parameters as 0.5 and the size of population as 20.

For P-cG-SchNet, we apply the SchNet~\citep{schutt2018schnet} with 9 layers, 128 hidden states as the backbone model. The model is trained for 500 epochs on each dataset with an Adam optimizer with initial learning rate $1\times 10^{-4}$ and a Plateau scheduler with a decaying factor 0.5 and a patience of 10 epochs. We select the element proportion and the number of atoms in a unit cell as conditions for the CSP task. For CDVAE, we apply the DimeNet++~\citep{gasteiger_dimenetpp_2020} with 4 layers, 256 hidden states as the encoder and the GemNet-T~\citep{gasteiger_gemnet_2021} with 3 layers, 128 hidden states as the decoder. We further apply a Transformer~\citep{vaswani2017attention} model with 2 layers, 128 hidden states as the additional prior encoder as proposed in Appendix~\ref{apd:cdvae}. The model is trained for 3500, 4000, 1000, 1000 epochs for Perov-5, Carbon-24, MP-20 and MPTS-52 respectively with an Adam optimizer with initial learning rate $1\times 10^{-3}$ and a Plateau scheduler with a decaying factor 0.6 and a patience of 30 epochs. For our DiffCSP, we utilize the setting of 4 layer, 256 hidden states for Perov-5 and 6 layer, 512 hidden states for other datasets. The dimension of the Fourier embedding is set to $k=256$. We apply the cosine scheduler with $s=0.008$ to control the variance of the DDPM process on $\mL_t$, and an exponential scheduler with $\sigma_1=0.005,\sigma_T=0.5$ to control the noise scale of the score matching process on $\mF_t$. The diffusion step is set to $T=1000$. Our model is trained for 3500, 4000, 1000, 1000 epochs for Perov-5, Carbon-24, MP-20 and MPTS-52 with the same optimizer and learning rate scheduler as CDVAE. For the step size $\gamma$ in Langevin dynamics for the structure prediction task, we apply $\gamma=5\times 10^{-7}$ for Perov-5, $1\times 10^{-5}$ for MP-20 and MPTS-52, and for Carbon-24, we apply $\gamma=5\times 10^{-6}$ to predict one sample and $\gamma=5\times 10^{-7}$ for multiple samples. For the ab initio generation and optimization task on Perov-5, Carbon-24 and MP-20, we apply $\gamma=1\times 10^{-6}, 1\times 10^{-5}, 5\times 10^{-6}$, respectively. All models are trained on GeForce RTX 3090 GPU.

\section{Exploring the Effects of Sampling and Candidate Ranking}

\subsection{Impact of Sampling Numbers}
\label{apd:sample}
Figure~\ref{fig:nsp} illustrates the impact of sampling numbers on the match rate. The match rate of all methods increases when sampling more candidates, and DiffCSP outperforms the baselines methods under the arbitrary number of samples.

\begin{figure*}[h]
    \centering    \includegraphics[width=0.8\textwidth]{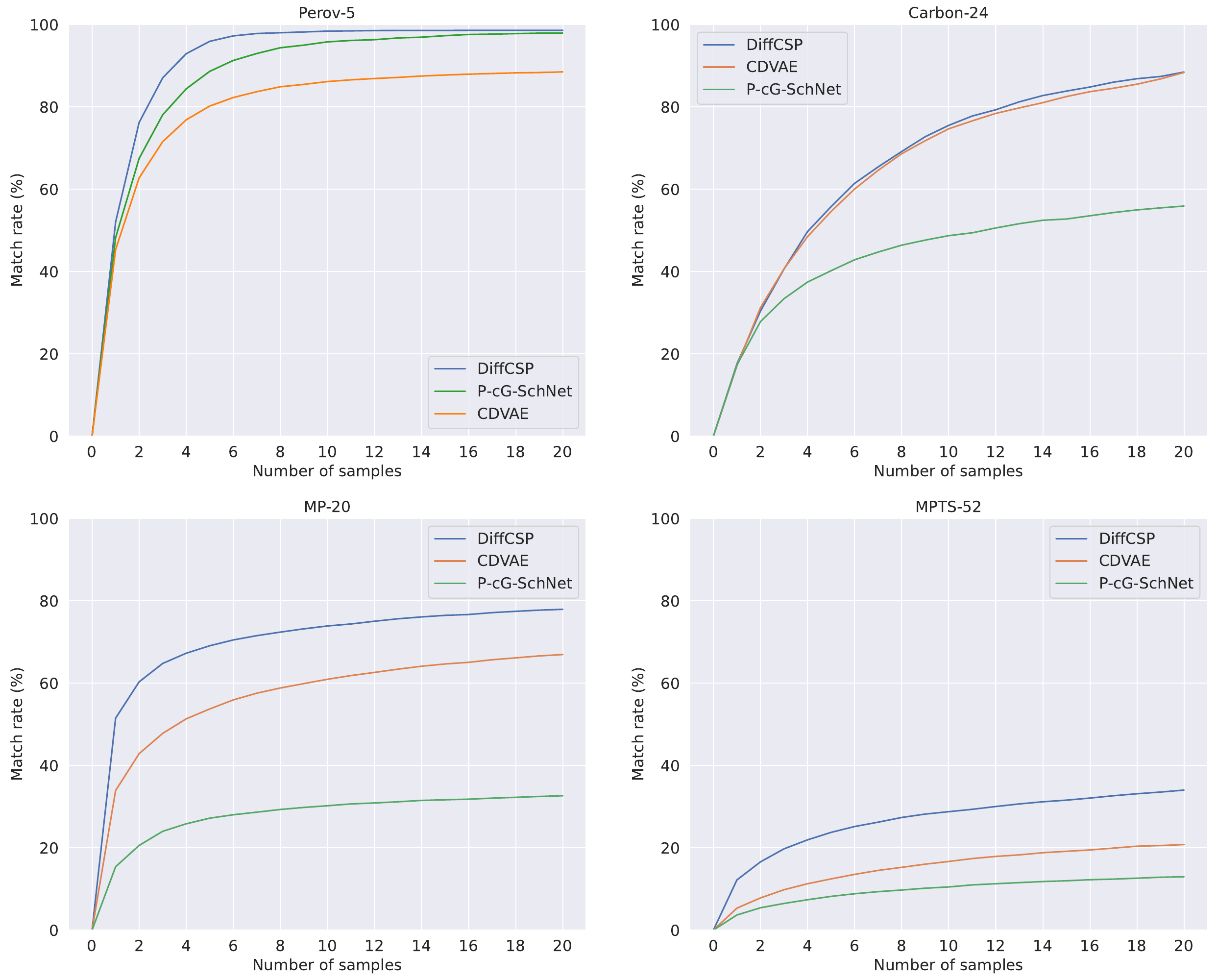}
\vskip -0.1in
    \caption{Comparison on different number of samples.}
\vskip -0.1in
    \label{fig:nsp}
\end{figure*}

\subsection{Diversity}
\begin{wraptable}[7]{r}{0.55\textwidth}
\vspace{-4.5ex}
  \caption{Comparison on the diversity.}
  \label{tab:div}
  \resizebox{\linewidth}{!}{
    \begin{tabular}{lcccc}
    \toprule
    & \multicolumn{2}{c}{Perov-5} & \multicolumn{2}{c}{MP-20} \\ \midrule
         & Mean & Max & Mean & Max \\ \midrule
        CDVAE & 0.3249 & 0.7316 & \textbf{0.3129} & \textbf{0.6979} \\ 
        DiffCSP & \textbf{0.3860} & \textbf{0.8911} & 0.2030 & 0.5292 \\ \bottomrule
    \end{tabular}
    }
\end{wraptable}
We further evaluate the diversity by yielding the CrystalNN~\citep{zimmermann2020local} fingerprint of each generated structure, calculating the L2-distances among all pairs in the 20 samples of each composition, collating the mean and max value of the distances, and finally averaging the results from all testing candidates. We list the diversity of CDVAE and DiffCSP on Perov-5 and MP-20 in Table~\ref{tab:div}.

\subsection{Ranking among Multiple Candidates}

\begin{wraptable}[13]{r}{0.55\textwidth}
\vspace{-4ex}
\caption{Results on different confidence models. \textit{Oracle} means applying the negative RMSD against the ground truth as the ranking score.}
  \label{tab:rank}
    \centering
    \resizebox{0.95\linewidth}{!}{
    \begin{tabular}{lcc}
    \hline
    \toprule
          & Match rate (\%) $\uparrow$ & RMSE $\downarrow$  \\
    \midrule
        EP & 51.96 & 0.0589 \\ 
        MD (d=0.1) & 60.30 & 0.0357 \\ 
        MD (d=0.3) & 60.13 & 0.0382 \\ 
        MD (d=0.5) & 59.20 & 0.0469 \\ 
        CS & 58.81 & 0.0443 \\ \midrule
        Oracle & 77.93 & 0.0492 \\ \bottomrule
    \end{tabular}
    }
\end{wraptable}

In~\textsection~\ref{sec:stable}, we match each candidate with the ground truth structure to pick up the best sample. However, in real CSP scenarios, the ground truth structure is not available, necessitating a confidence model to rank the generated candidates. To address this, we develop three types of confidence models to score each sample for ranking from different perspectives:
\textbf{Energy Predictor (EP).} Since lower formation energy typically leads to more stable structures, we directly train a predictor using energy labels and apply the negative of the predicted energy as the confidence score.
\textbf{Match Discriminator (MD).} Inspired by Diffdock~\citep{corso2022diffdock}, we first generate five samples for each composition in the training/validation set using DiffCSP and calculate their RMSDs with the ground truth. We then train a binary classifier to predict whether the sample matches the ground truth with an RMSD below a threshold $d$. The predicted probability serves as the score.
\textbf{Contrastive Scorer (CS).} Drawing inspiration from CLIP~\citep{radford2021learning}, we train a contrastive model between a 1D and 3D model to align the corresponding compositions and structures. The inner product of the 1D and 3D models is used as the score.
We select the Top-1 result among the 20 candidates ranked by each confidence model on the MP-20 dataset, as shown in Table~\ref{tab:rank}. The results indicate that MD and CS perform relatively better, but there remains a gap between the heuristic ranking models and the oracle ranker. Designing powerful ranking models is an essential problem, which we leave for future research.

\section{Impact of Noise Schedulers}

\begin{wraptable}[22]{r}{0.55\textwidth}
\vspace{-4.5ex}
    \centering
\caption{CSP results on different schedulers. DiffCSP utilizes cosine scheduler on $\mL$ and $\sigma_{max}=0.5$ on $\mF$.}
    \resizebox{\linewidth}{!}{
    \begin{tabular}{lcc}
    \toprule
          & Match rate (\%) $\uparrow$ & RMSE $\downarrow$  \\
    \midrule
    DiffCSP & 51.49 & 0.0631 \\
    \midrule
    \addlinespace[-0.1pt]
    \rowcolor{gray!30}\multicolumn{3}{c}{\textit{Schedulers of $\mL$}} \\
    \addlinespace[-2pt]
    \midrule
    Linear  &     50.06      & 0.0590  \\
    Sigmoid  &     45.24      & 0.0664  \\
    \midrule
    \addlinespace[-0.1pt]
    \rowcolor{gray!30}\multicolumn{3}{c}{\textit{Schedulers of $\mF$}} \\
    \addlinespace[-2pt]
    \midrule
    $\sigma_{max}=0.1$  &     32.56      & 0.0913 \\
    $\sigma_{max}=1.0$  &     47.89      & 0.0675 \\
    \addlinespace[-2pt]
    \bottomrule
    \end{tabular}%
    }
  \label{tab:csp_noise}%

\caption{Ab initio generation results on different type schedulers. DiffCSP utilizes cosine scheduler on $\mA$.}
    \centering
    \resizebox{\linewidth}{!}{
    \begin{tabular}{lcccc}
    \toprule
    & \multicolumn{2}{c}{Validity} & \multicolumn{2}{c}{Coverage} \\
  \midrule
   & Struct.(\%) & Comp.(\%) & Recall & Precision  \\
        \midrule
        DiffCSP & 100.00 & 83.25 & 99.71 & 99.76 \\ \midrule
        Linear & 99.70 & 79.78 & 98.29 & 99.48 \\ 
        Sigmoid & 99.88 & 81.59 & 99.33 & 99.55 \\ \bottomrule
    \end{tabular}
    }

  \label{tab:gen_noise}%
\end{wraptable}

We explore the noise schedulers from three perspectives and list the results in Table~\ref{tab:csp_noise} and~\ref{tab:gen_noise}. \textbf{1.} For lattices, we originally use the cosine scheduler with $s=0.008$, and we change it into the linear and sigmoid schedulers with $\beta_1=0.0001, \beta_T=0.02$. We find that the linear scheduler yields comparable results, while the sigmoid scheduler hinders the performance. \textbf{2.} For fractional coordinates, we use the exponential scheduler with $\sigma_{min}=0.005$, $\sigma_{max}=0.5$, and we change the value of $\sigma_{max}$ into 0.1 and 1.0. Results show that the small-$\sigma_{max}$ variant performs obviously worse, as only sufficiently large $\sigma_{max}$ could approximate the prior uniform distribution. We visualize the PDF curves in Figure~\ref{fig:wn_pdf} for better understanding. \textbf{3.} For atom types, we conduct similar experiments as lattices, and the results indicate that the original cosine scheduler performs better. In conclusion, we suggest applying the proposed noise schedulers.

  \begin{figure}[t]
    \centering    \includegraphics[width=\linewidth]{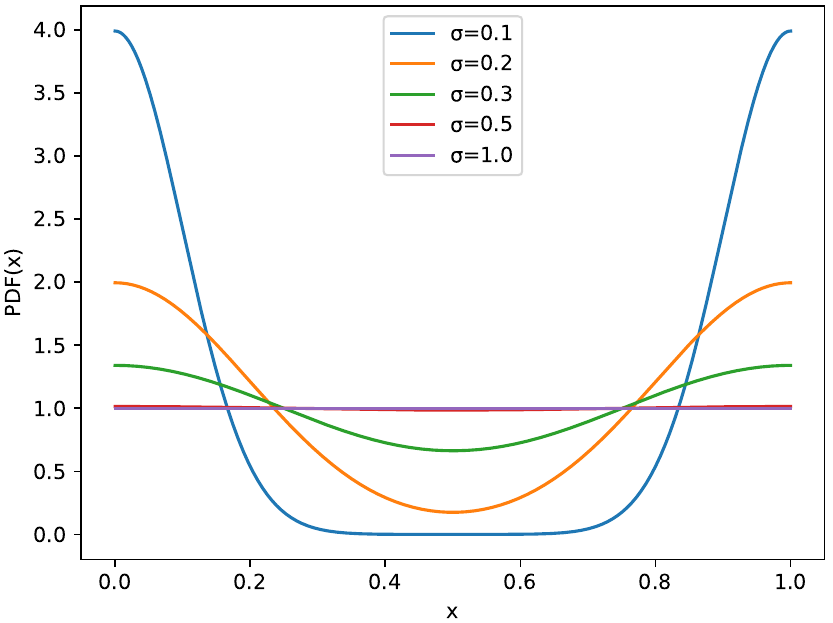}
    \caption{PDF curves of the wrapped normal distribution with periodic as 1 and different noise scales. It can be find that $\sigma=0.5$ is practically large enough to approximate the prior uniform distribution.}
    \vspace{-1ex}
    \label{fig:wn_pdf}
\end{figure}

\section{Learning Curves of Different Variants}
\label{apd:curves}
We plot the curves of training and validation loss of different variants proposed in~\textsection~\ref{sec:abl} in Figure~\ref{fig:curve} with the following observations. \textbf{1.} The multi-graph methods struggle with higher training and validation loss, as the edges constructed under different disturbed lattices vary significantly, complicating the training procedure. \textbf{2.} The Fourier transformation, expanding the relative coordinates and maintaining the periodic translation invariance, helps the model converge faster at the beginning of the training procedure. \textbf{3.} The variant utilizing the fully connected graph without the Fourier transformation (named ``DiffCSP w/o FT'' in Figure~\ref{fig:curve}) encounters obvious overfitting as the periodic translation invariance is violated, highlighting the necessity to leverage the desired invariance into the model.

\begin{figure*}[h]
    \centering    \includegraphics[width=\textwidth]{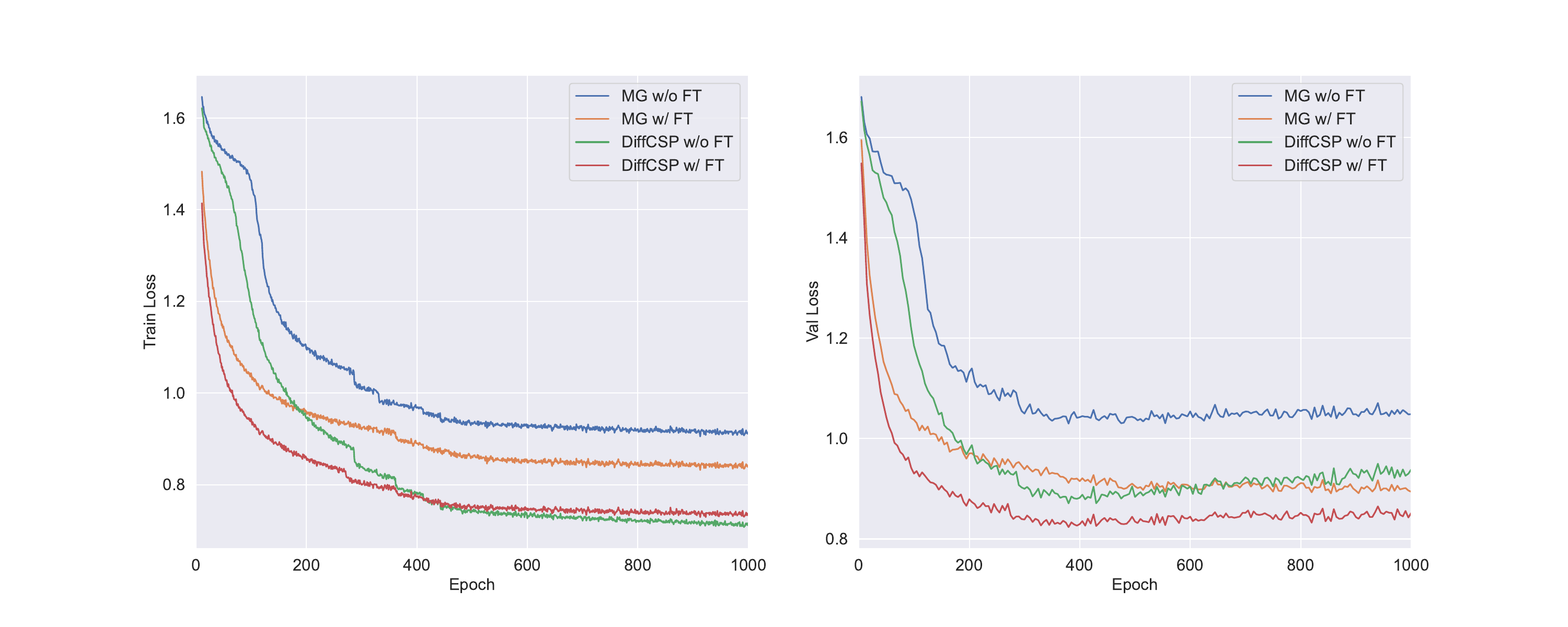}
    \vskip -0.1in
    \caption{Learning curves of different variants proposed in~\textsection~\ref{sec:abl}. MG and FT denote multi-graph edge construction and Fourier transformation, respectively.}
    \label{fig:curve}
\end{figure*}

\section{Comparison with DFT-based Methods}
\label{apd:dft}

\subsection{Implementation Details}

We utilize USPEX~\cite{glass2006uspex}, a DFT-based software equipped with the evolutionary algorithm to search for stable structures. We use 20 populations in each generation, and end up with 20 generations for each compound. We set 60\% of the lowest-enthalpy structures allowed to produce the next generation through heredity (50\%), lattice mutation (10\%), and atomic permutation (20\%). Moreover, two lowest-enthalpy structures are allowed to survive into the next generation.
The structural relaxations are calculated by the frozen-core all-electron projector augmented wave (PAW) method~\cite{blochl1994projector} as implemented in the Vienna ab initio simulation package (VASP)~\cite{kresse1996efficiency}. Each sample's calculation is performed using one node with 48 cores (Intel(R) Xeon(R) CPU E5-2692 v2 @ 2.20GHz), while the populations within the same generation are concurrently computed across 20 nodes in parallel. The exchange-correlation energy is treated within the generalized gradient approximation (GGA), using the Perdew-Burke-Ernzerhof (PBE) function~\cite{perdew1996generalized}.

\subsection{Results}

We select 10 binary and 5 ternary compounds in MP-20 testing set and compare our model with USPEX. For our method, we sample 20 candidates for each compound following the setting in Table~\ref{tab:oto}. For USPEX, we apply 20 generations, 20 populations for each compound, and select the best sample in each generation, leading to 20 candidates as well. We list the minimum RMSD of each compound in Table~\ref{tab:dft}, and additionally summarize the match rate over the 15 compounds, the averaged RMSD over the matched structures, and the averaged inference time to generate 20 candidates for each compound in Table~\ref{tab:sum}. The results show that DiffCSP correctly predicts more structures with higher match rate and significantly lower time cost. 

\begin{table}[!ht]
\caption{The minimum RMSD of 20 candidates of USPEX and DiffCSP, "N/A" means none of the candidates match with the ground truth.}
\label{tab:dft}
    \centering
    \begin{tabular}{l|ccccc}
    \toprule
        Binary & Co$_2$Sb$_2$ & Sr$_2$O$_4$ & AlAg$_4$ & YMg$_3$ & Cr$_4$Si$_4$ \\ \midrule
        USPEX & 0.0008 & 0.0121 & N/A & 0.0000 & N/A \\ 
        DiffCSP & 0.0005 & 0.0133 & 0.0229 & 0.0003 & 0.0066 \\ \midrule
        Binary & Sn$_4$Pd$_4$ & Ag$_6$O$_2$ & Co$_4$B$_2$ & Ba$_2$Cd$_6$ & Bi$_2$F$_8$ \\ \midrule
        USPEX & 0.0184 & 0.0079 & 0.0052 & N/A & N/A \\ 
        DiffCSP & 0.0264 & N/A & N/A & 0.0028 & N/A \\ \midrule
        Ternary & KZnF$_3$ & Cr$_3$CuO$_8$ & Bi$_4$S$_4$Cl$_4$ & Si$_2$(CN$_2$)$_4$ & Hg$_2$S$_2$O$_8$ \\ \midrule
        USPEX & 0.0123 & N/A & N/A & N/A & 0.0705 \\ 
        DiffCSP & 0.0006 & 0.0482 & 0.0203 & N/A & 0.0473 \\
    \bottomrule
    \end{tabular}
\end{table}

\begin{table}[!ht]
\caption{Overall results over the 15 selected compounds.}
\label{tab:sum}
    \centering
    \begin{tabular}{l|ccc}
    \toprule
        ~ & Match Rate (\%)$\uparrow$ & Avg. RMSD$\downarrow$ & Avg. Time$\downarrow$ \\ \midrule
        USPEX & 53.33 & \textbf{0.0159} & 12.5h \\ 
        DiffCSP & \textbf{73.33} & 0.0172 & \textbf{10s} \\ 
        \bottomrule
    \end{tabular}
\end{table}

\section{Extension to More General Tasks}
\label{apd:ext}

\subsection{Overview}

Our method mainly focuses on CSP, aiming at predicting the stable structures from the fixed composition $\mA$. We first enable the generation on atom types and extend to ab initio generation task in~\textsection~\ref{apd:abgen}, and then adopt the energy guidance for property optimization in ~\textsection~\ref{apd:propopt}. Figure~\ref{fig:tasks} illustrated the differences and connections of CSP, ab initio generation, and property optimization. Besides, CDVAE~\citep{xie2021crystal} proposes a reconstruction task specific for VAE-based models, which first encodes the ground truth structure, and require the model to recover the input structure. As our method follows a diffusion-based framework instead of VAE, it is unsuitable to conduct the reconstruction task of our method.

\begin{figure*}[h]
    \centering    \includegraphics[width=\textwidth]{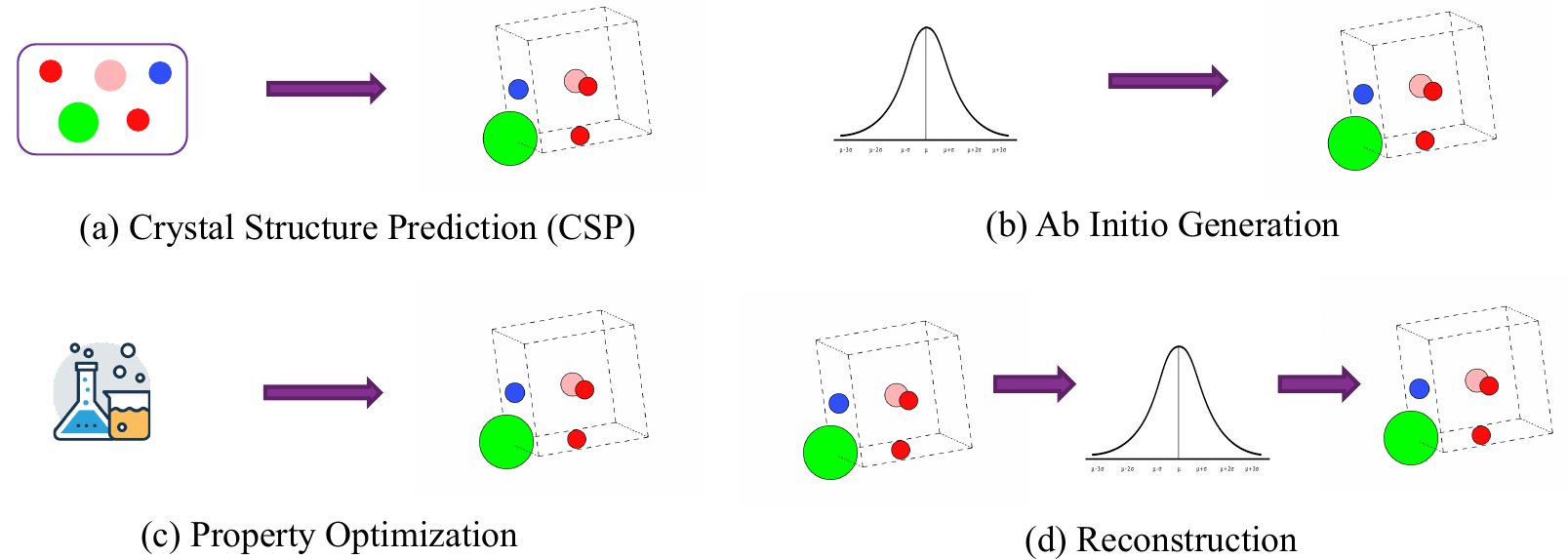}
    \caption{Different tasks for crystal generation. Our approach mainly focuses on the CSP task (a), and is capable to extend into the ab initio generation task (b) by further generating the composition, and the property optimization task (c) via introducing the guidance on the target property. The reconstruction task (d) in \citet{xie2021crystal} is specific for VAE, which is unnecessary for our diffusion-based method.}
    \label{fig:tasks}
\end{figure*}

\subsection{Ab Initio Generation}
\label{apd:abgen}
Apart from $\mL$ and $\mF$, the ab initio generation task additionally requires the generation on $\mA$. As the atom types can be viewed as either $N$ discrete variables in $h$ classes, or the one-hot representation $\mA\in\sR^{h\times N}$ in continuous space. We apply two lines of diffusion processes for the type generation, which are detailedly shown as follows.

\textbf{Multinomial Diffusion for Discrete Generation} Regarding the atom types as discrete features, we apply the multinomial diffusion with the following forward diffusion process~\citep{hoogeboom2021argmax, luo2022antigen, vignac2023digress},
\begin{align}
    q(\mA_t|\mA_0) = \gC(\mA_t | \bar{\alpha}_t\mA_0 + \frac{(1 - \bar{\alpha}_t)}{h}\vone_h\vone_N^\top),
\end{align}
where $\vone_h\in\sR^{h\times 1}, \vone_N\in\sR^{N\times 1}$ are vectors with all elements setting to one, $\mA_0$ is the one-hot representation of the origin composition, and the function $\gC(\cdot)$ samples the multinomial distribution with the conditional probability and returns the one-hot representation of $\mA_t$.

The corresponding backward generation process is defined as
\begin{align}
    p(\mA_{t-1}|\gM_t) = \gC(\mA_{t-1} | \tilde{\theta}_t / \sum_{k=1}^h \tilde{\theta}_{t,k}),
\end{align}
where
\begin{align}
    \tilde{\theta}_t = \Big(\alpha_t\mA_t + \frac{(1 - \alpha_t)}{h}\vone_h\vone_N^\top \Big)\odot\Big(\bar{\alpha}_t\hat{\epsilon}_\mA(\gM_t,t) + \frac{(1 - \bar{\alpha}_t)}{h}\vone_h\vone_N^\top \Big),
\end{align}
and $\hat{\epsilon}_\mA\in\sR^{h\times N}$ is predicted by the denoising model. We further find that specific for $t=1$, $\mA_0=\text{argmax}(\hat{\epsilon}_\mA(\gM_1,1))$ works better.

The training objective for multinomial diffusion is
\begin{align}
    \gL_{\mA,\text{discrete}} = \E_{\mA_t\sim q(\mA_t|\mA_0), t\sim\gU(1,T)}\Big[\text{KL}(q(\mA_{t-1}|\mA_t)||p(\mA_{t-1}|\mA_t))\Big].
\end{align}

\textbf{One-hot Diffusion for Continuous Generation} Another approach is to simply consider the composition $\mA$ as a continuous variable in real space $\sR^{h\times N}$, which enables the application of standard DDPM-based method~\citep{hoogeboom2022equivariant}. Similar to Eq.~(\ref{eq:lt0})-(\ref{eq:ddpmloss}), the forward diffusion process is defined as
\begin{align}
\label{eq:la0}
q(\mA_t | \mA_{0}) &= \gN\Big(\mL_t | \sqrt{\bar{\alpha}_t}\mA_{0}, (1 - \bar{\alpha}_t)\mI\Big). 
\end{align}
And the backward generation process is defined as 
\begin{align}
p(\mA_{t-1} | \gM_t) &= \gN(\mA_{t-1} | \mu_\mA(\gM_t),\sigma_\mA^2 (\gM_t)\mI), 
\end{align}
where $\mu_\mA(\gM_t) = \frac{1}{\sqrt{\alpha_t}}\Big(\mA_t - \frac{\beta_t}{\sqrt{1-\bar{\alpha}_t}}\hat{\vepsilon}_\mA(\gM_t,t)\Big),\sigma_\mA^2(\gM_t) = \beta_t \frac{1-\bar{\alpha}_{t-1}}{1-\bar{\alpha}_t}$. The denoising term $\hat{\vepsilon}_\mA(\gM_t,t)\in\R^{h\times N}$ is predicted by the model.

The training objective for one-hot diffusion is 
\begin{align}
    \gL_{\mA,\text{continuous}}=\E_{\vepsilon_\mA\sim\gN(0,\mI), t\sim\gU(1,T)}[\|\vepsilon_\mA - \hat{\vepsilon}_\mA(\gM_t,t)\|_2^2].
\end{align}

The entire objective for training the joint diffusion model on $\mL, \mF, \mA$ is combined as
\begin{align}
    \gL_\gM=\lambda_\mL\gL_\mL + \lambda_\mF\gL_\mF + \lambda_\mA\gL_\mA.
\end{align}
We select $\lambda_\mL = \lambda_\mF = 1, \lambda_\mA = 0$ for the CSP task as $\mA$ is fixed during the generation process, and $\lambda_\mL = \lambda_\mF = 1, \lambda_\mA = 20$ for the ab initio generation task to balance the scale of each loss component. Specifically, we do not optimize $\gL_\mA$ on the Carbon-24 dataset, as all atoms in this dataset are carbon.

\textbf{Sample Structures with Arbitrary Numbers of Atoms} As the number of atoms in a unit cell (\emph{i.e.} $N$) is unchanged during the generation process, we first sample $N$ according to the distribution of $N$ in the training set, which is similar to~\citet{hoogeboom2022equivariant}. The sampled distribution $p(\gM)$ can be more concisely described as $p(\gM,N)=p(N)p(\gM|N)$. The former term $p(N)$ is sampled from pre-computed data distribution, and the latter conditional distribution $p(\gM|N)$ is modeled by DiffCSP.

\textbf{Evaulation Metrics} The results are evaluated from three perspectives. \textbf{Validity}: We consider both the structural validity and the compositional validity. The structural valid rate is calculated as the percentage of the generated structures with all pairwise distances larger than 0.5\r A, and the generated composition is valid if the entire charge is neutral as determined by SMACT~\cite{davies2019smact}. \textbf{Coverage}: It measures the structural and compositional similarity between the testing set $\gS_t$ and the generated samples $\gS_g$. Specifically, letting $d_S(\gM_1,\gM_2), d_C(\gM_1,\gM_2)$ denote the $L2$ distances of the CrystalNN structural fingerprints~\citep{zimmermann2020local} and the normalized Magpie compositional fingerprints~\citep{ward2016general}, the COVerage Recall (COV-R) is determined as $\text{COV-R}=\frac{1}{|\gS_t|}|\{\gM_i| \gM_i\in \gS_t, \exists \gM_j\in\gS_g, d_S(\gM_i,\gM_j)<\delta_S, d_C(\gM_i,\gM_j)<\delta_C\}|$ where $\delta_S, \delta_C$ are pre-defined thresholds. The COVerage Precision (COV-P) is defined similarly by swapping $\gS_g, \gS_t$. \textbf{Property statistics}: We calculate three kinds of Wasserstein distances between the generated and testing structures, in terms of density, formation energy, and the number of elements~\citep{xie2021crystal}, denoted as $d_\rho$, $d_E$ and $d_{\text{elem}}$, individually. The validity and coverage metrics are calculated on 10,000 generated samples, and the property metrics are evaluated on a subset with 1,000 samples passing the validity test.

\begin{table}[b]
\caption{Results on MP-20 ab initio generation task. }
\label{tab:comparecd}
\centering
\resizebox{0.9\linewidth}{!}{
\begin{tabular}{ll|ccccccc}
\toprule
\multirow{2}{*}{\bf Data} & \multirow{2}{*}{\bf Method}  & \multicolumn{2}{c}{\bf Validity (\%)  $\uparrow$}  & \multicolumn{2}{c}{\bf Coverage (\%) $\uparrow$}  & \multicolumn{3}{c}{\bf Property $\downarrow$}  \\
& & Struc. & Comp. & COV-R & COV-P & $d_\rho$ & $d_E$ & $d_{\text{elem}}$ \\
\midrule

\multirow[t]{5}{*}{MP-20}
& CDVAE~\citep{xie2021crystal} & \textbf{100.0} & \textbf{86.70} & 99.15 & 99.49 & 0.6875 & 0.2778 & 1.432  \\
& DiffCSP-D & 99.70 & 83.11 & 99.68 & 99.53 & \textbf{0.1730} & 0.1366 & 0.9959 \\
& DiffCSP-C & \textbf{100.0} & 83.25 & \textbf{99.71} & \textbf{99.76} & 0.3502 & \textbf{0.1247} & \textbf{0.3398} \\
\bottomrule
\end{tabular}
}
\end{table}

\textbf{Results} We denote the abovementioned discrete and continuous generation methods as DiffCSP-D, DiffCSP-C, respectively. The results of the two variants and the strongest baseline CDVAE on MP-20 are provided in Table~\ref{tab:comparecd}.  We observe that DiffCSP-D yields slightly lower validity and coverage rates than DiffCSP-C. Moreover, DiffCSP-D tends to generate structures with more types of elements, which is far from the data distribution. Hence, we select DiffCSP-C for the other experiments in Table~\ref{tab:gen} (abbreviated as DiffCSP). We further find that DiffCSP-C supports the property optimization task in the next section. Besides, both variants have lower composition validity than CDVAE, implying that more powerful methods for composition generation are required. As our paper mainly focuses on the CSP task, we leave this for future studies.

\subsection{Property Optimization}
\label{apd:propopt}

On top of DiffCSP-C, we further equip our method with energy guidance~\citep{zhao2022egsde, bao2023equivariant} for property optimization. Specifically, we train a time-dependent property predictor $E(\mL_t,\mF_t, \mA_t, t)$ with the same message passing blocks as Eq.~(\ref{eq:mp1})-(\ref{eq:mp3}). And the final prediction is acquired by the final layer as
\begin{align}
    E=\varphi_E(\frac{1}{N}\sum_{i=1}^N \vh_i^{(S)}).
\end{align}
And the gradients \emph{w.r.t.} $\mL,\mF,\mA$ additionally guide the generation process. As the inner product term $\mL^\top\mL$ is O(3) invariant, and the Fourier transformation term $\phi_\text{FT}(\vf_j-\vf_i)$ is periodic translation invariant, the predicted energy $E$ is periodic E(3) invariant. That is,
\begin{align}
    E(\mQ\mL_t,w(\mF_t+\vt\vone^\top, \mA_t, t) = E(\mL_t,\mF_t, \mA_t, t).
\end{align}
Taking gradient to both sides \emph{w.r.t.} $\mL,\mF,\mA$, respectively, we have
\begin{align}
    \mQ^\top\nabla_{\mL'_t} E(\mL'_t,w(\mF_t+\vt\vone^\top, \mA_t, t)|_{\mL'_t = \mQ\mL_t} &= \nabla_{\mL_t}E(\mL_t,\mF_t, \mA_t, t),\\
    \nabla_{\mF'_t} E(\mQ\mL_t,\mF'_t, \mA_t, t)|_{\mF'_t = w(\mF_t+\vt\vone^\top)} &= \nabla_{\mF_t}E(\mL_t,\mF_t, \mA_t, t),\\
    \nabla_{\mA_t} E(\mQ\mL_t,w(\mF_t+\vt\vone^\top, \mA_t, t) &= \nabla_{\mA_t}E(\mL_t,\mF_t, \mA_t, t),
\end{align}
which implies that the gradient to $\mL_t$ is O(3) equivariant, and the gradient to $\mF_t$ and $\mA_t$ is periodic E(3) invariant. Such symmetries maintain that the introduction of energy guidance does not violate the periodic E(3) invariance of the marginal distribution.

\begin{algorithm}[t]
\small
\caption{Energy-Guided Sampling Procedure of DiffCSP(-C)}\label{alg:guide}
\begin{algorithmic}[1]
\STATE \textbf{Input:} denoising model $\phi$, energy predictor $E$, step size of Langevin dynamics $\gamma$, guidance magnitude $s$, input structure before optimization $\gM=(\mL,\mF,\mA)$, number of sampling steps $T'$, maximum number of sampling steps $T$.
\IF{T'=T}
    \STATE{Sample $\mL_{T'}\sim\gN(\vzero,\mI)$,$\mF_{T'}\sim \gU(0,1)$, $\mA_{T'}\sim\gN(\vzero,\mI)$}.
\ELSE
    \STATE{Sample $\mL_{T'}\sim q(\mL_{T'}|\mL)$,$\mF_{T'}\sim q(\mF_{T'}|\mF)$, $\mA_{T'}\sim q(\mA_{T'}|\mA)$} according to Eq.~(\ref{eq:lt0}),(\ref{eq:WN}) and (\ref{eq:la0}).
\ENDIF
\FOR{$t \gets T',\cdots, 1$}
    \STATE{Sample $\vepsilon_\mL,\vepsilon_\mF, \vepsilon_\mA, \vepsilon'_\mF\sim\gN(\vzero,\mI)$}
    \STATE{$\hat{\vepsilon}_\mL, \hat{\vepsilon}_\mF, \hat{\vepsilon}_\mA \gets \phi(\mL_t, \mF_t, \mA_t, t)$.}
    \STATE{Acquire $\nabla_\mL E, \nabla_\mF E, \nabla_\mA E$ from $E(\mL_t,\mF_t, \mA_t, t)$.}
    \STATE{$\mL_{t-1} \gets \frac{1}{\sqrt{\alpha_t}} (\mL_t - \frac{\beta_t} {\sqrt{1 -\bar{\alpha}_t}}\hat{\vepsilon}_\mL) - s\beta_t\cdot\frac{1-\bar{\alpha}_{t-1}}{1-\bar{\alpha}_t} \nabla_\mL E  + \sqrt{\beta_t\cdot\frac{1-\bar{\alpha}_{t-1}}{1-\bar{\alpha}_t}}\vepsilon_\mL$.} 
    \STATE{$\mA_{t-1} \gets \frac{1}{\sqrt{\alpha_t}} (\mA_t - \frac{\beta_t} {\sqrt{1 -\bar{\alpha}_t}}\hat{\vepsilon}_\mA) - s\beta_t\cdot\frac{1-\bar{\alpha}_{t-1}}{1-\bar{\alpha}_t} \nabla_\mA E + \sqrt{\beta_t\cdot\frac{1-\bar{\alpha}_{t-1}}{1-\bar{\alpha}_t}}\vepsilon_\mA$.} 
    \STATE{$\mF_{t-\frac{1}{2}} \gets w(\mF_t + (\sigma_t^2-\sigma_{t-1}^2)\hat{\vepsilon}_\mF - s\frac{\sigma_{t-1}^2 (\sigma_t^2-\sigma_{t-1}^2)}{\sigma_t^2} \nabla_\mF E + \frac{\sigma_{t-1}\sqrt{\sigma_t^2-\sigma_{t-1}^2}}{\sigma_t} \vepsilon_\mF)$}
    \STATE{$\_, \hat{\vepsilon}_\mF, \_ \gets \phi(\mL_{t-1}, \mF_{t-\frac{1}{2}}, \mA_{t-1}, t - 1)$.}
    \STATE{$d_t\gets \gamma\sigma_{t-1} / \sigma_1$}
    \STATE{$\mF_{t-1} \gets w(\mF_{t-\frac{1}{2}} + d_t \hat{\vepsilon}_\mF + \sqrt{2 d_t}\vepsilon'_\mF)$.}
\ENDFOR
\STATE \textbf{Return} $\mL_0, \mF_0, \text{argmax}(\mA_0)$.
\end{algorithmic}
\end{algorithm}

\begin{table}[t]
\caption{Results on property optimization task. The results of baselines are from~\citet{xie2021crystal}.}
\label{tab:opt}
\begin{center}
\begin{tabular}{l|ccccccccc}
\toprule
\multirow{2}{*}{\bf Method}  &\multicolumn{3}{c}{\bf Perov-5} & \multicolumn{3}{c}{\bf Carbon-24} & \multicolumn{3}{c}{\bf MP-20}\\
 & \multicolumn{1}{c}{SR5} & \multicolumn{1}{c}{SR10} & \multicolumn{1}{c}{SR15} & \multicolumn{1}{c}{SR5} & \multicolumn{1}{c}{SR10} & \multicolumn{1}{c}{SR15} & \multicolumn{1}{c}{SR5} & \multicolumn{1}{c}{SR10} & \multicolumn{1}{c}{SR15} \\
 \midrule
FTCP & 0.06 & 0.11 & 0.16 & 0.00 & 0.00 & 0.00 & 0.02 & 0.04 & 0.05 \\
Cond-DFC-VAE & 0.55 & 0.64 & 0.69 & -- & -- & -- & -- & -- & -- \\
CDVAE & 0.52 & 0.65 & 0.79 & 0.00 & 0.06 & 0.06 & 0.78 & 0.86 & 0.90 \\
\midrule
DiffCSP & \textbf{1.00} & \textbf{1.00} & \textbf{1.00} & \textbf{0.50} & \textbf{0.69} & \textbf{0.69} & \textbf{0.82} & \textbf{0.98} & \textbf{1.00} \\
\bottomrule
\end{tabular}
\end{center}
\end{table}

The detailed algorithm for energy-guided sampling is provided in Algorithm~\ref{alg:guide}. We find that $s=50$ works well on the three datasets. We evaluate the performance of the energy-guided model with the same metrics as~\citet{xie2021crystal}. We sample 100 structures from the testing set for optimization. For each structure, we apply $T=1,000$ and $T'=100,200,\cdots,1,000$, leading to 10 optimized structures. We use the same independent property predictor as in~\citet{xie2021crystal} to select the best one from the 10 structures. We calculate the success
rate (\textbf{SR}) as the percentage of the 100 optimized structures achieving 5, 10, and 15 percentiles of the property distribution. We select the formation energy per atom as the target property, and provide the results on Perov-5, Carbon-24 and MP-20 in Table~\ref{tab:opt}, showcasing the notable superiority of DiffCSP over the baseline methods. 

Aside from the Carbon-24 dataset, the composition is flexible in the above experiments. We also attempt to follow the CSP setting and optimize the crystal structures for lower energy based on the fixed composition. We visualize eight cases in Figure~\ref{fig:opt_csp} and calculate the energies of the structures before and after optimization by VASP~\citep{kresse1996efficiency}. Results show that our method is capable to search novel structures with lower energies compared to existing ones.

\begin{figure*}[h]
    \centering
    \includegraphics[width=.95\textwidth]{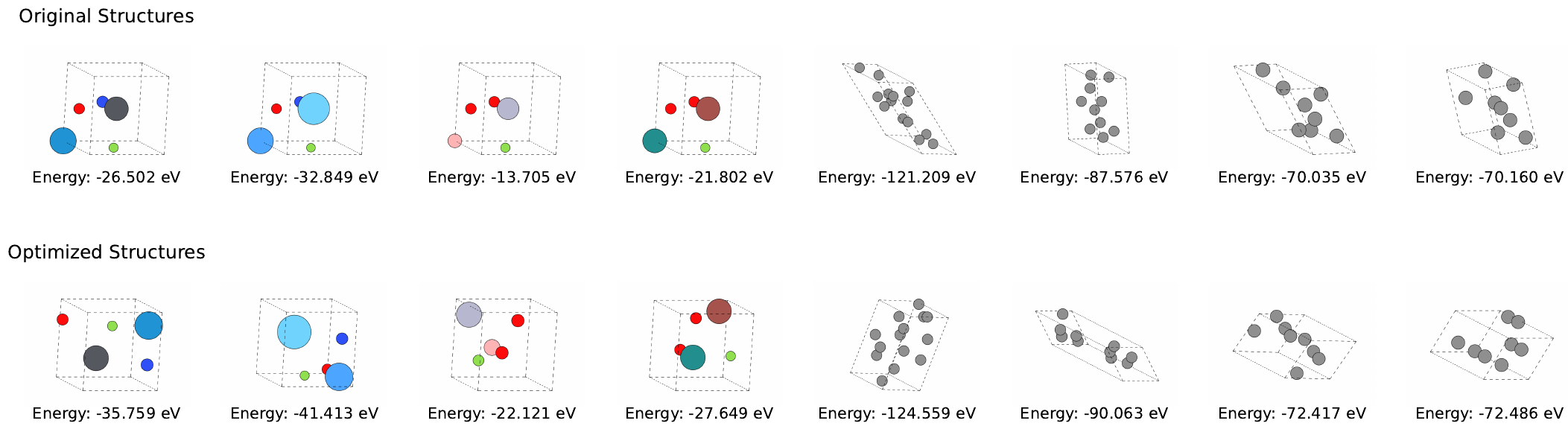}
    \caption{Visualization of 8 pairs of structures before and after optimization.}
    \label{fig:opt_csp}
\end{figure*}

\begin{table}[t]
  \begin{minipage}{0.55\textwidth}
    \centering
    \caption{GPU hours for yielding 20 candidates over the testing set.}
    \label{tab:time}
     \resizebox{\linewidth}{!}{
        \begin{tabular}{lcccc}
        \toprule
        ~ & Perov-5 & Carbon-24 & MP-20 & MPTS-52 \\ \midrule
        P-cG-SchNet & 2.1 & 3.0  & 10.0  & 22.5  \\ 
        CDVAE & 21.9 & 9.8 & 95.2 & 178.0 \\ 
        DiffCSP & 1.5 & 3.2 & 18.4 & 34.0 \\
        \bottomrule
        \end{tabular}
        }
  \end{minipage}%
  \hfill
  \begin{minipage}{0.43\textwidth}
    \centering
    
    \caption{Results on MP-20 with different inference steps.}
    \label{tab:15k}
    \resizebox{\linewidth}{!}{
        \begin{tabular}{lccc}
        \toprule
        ~ & Steps & Match rate (\%)$\uparrow$ & RMSD$\downarrow$ \\ \midrule
        DiffCSP & 1,000 & 51.49 & 0.0631  \\ 
        ~ & 5,000 & 52.95 & 0.0541 \\
        CDVAE & 1,000 & 30.71 &	0.1288 \\ 
        ~ & 5,000 & 33.90 & 0.1045 \\
        \bottomrule
    \end{tabular}
    }
  \end{minipage}
\end{table}

\section{Computational Cost for Inference}

We provide the GPU hours (GeForce RTX 3090) for different generative methods to predict 20 candidates on the 4 datasets. Table~\ref{tab:time} demonstrates the diffusion-based models (CDVAE and our DiffCSP) are slower than P-cG-SchNet. Yet, the computation overhead of our method is still acceptable given its clearly better performance than P-cG-SchNet. Additionally, our DiffCSP is much faster than CDVAE across all datasets mainly due to the fewer generation steps. CDVAE requires 5,000 steps for each generation, whereas our approach only requires 1,000 steps. We further compare the performance of CDVAE and DiffCSP with 1,000 and 5,000 generation steps on MP-20 in Table~\ref{tab:15k}. Our findings indicate that both models exhibit improved performance with an increased number of steps. Notably, DiffCSP with 1,000 steps outperforms CDVAE with 5,000 steps.

\section{Error Bars}
\label{apd:bar}
We provide a single run to generate 20 candidates in Table~\ref{tab:oto}. We apply two more inferences of each generative method on Perov-5 and MP-20. Table~\ref{tab:std} shows similar results as Table~\ref{tab:oto}. 

\begin{table*}[h]
  \centering
  \caption{Results on Perov-5 and MP-20 with error bars. }
  \small
    \setlength{\tabcolsep}{2.5pt}
    \begin{tabular}{p{2cm}ccccc}
    \toprule
    \multirow{2}[2]{*}{} & \multirow{2}[2]{*}{\# of samples} & \multicolumn{2}{c}{Perov-5 } & \multicolumn{2}{c}{MP-20} \\
          &       & Match rate (\%)$\uparrow$ & RMSE$\downarrow$ & Match rate (\%)$\uparrow$ & RMSE$\downarrow$  \\
    \midrule
    \midrule
    \multirow{2}[2]{*}{P-cG-Schnet~\citep{gebauer2022inverse}} & 1     & 47.34$\pm$0.63  & 0.4170$\pm$0.0006   & 15.59$\pm$0.41 & 0.3747$\pm$0.0020   \\
        & 20    &   97.92$\pm$0.02  &  0.3464$\pm$0.0004 & 32.70$\pm$0.12   & 0.3020$\pm$0.0002 \\    
    \midrule
    \multirow{2}[2]{*}{CDVAE~\citep{xie2021crystal}} & 1     & 45.31$\pm$0.49  & 0.1123$\pm$0.0026 &  33.93$\pm$0.15  & 0.1069$\pm$0.0018  \\
          & 20    & 88.20$\pm$0.26  & 0.0473$\pm$0.0007 & 67.20$\pm$0.23  & 0.1012$\pm$0.0016 \\
    \midrule
    \midrule
    \multirow{2}[2]{*}{DiffCSP} & 1     & 52.35$\pm$0.26  & 0.0778$\pm$0.0030 & 51.89$\pm$0.30 &	0.0611$\pm$0.0015 \\
          & 20    & \textbf{98.58$\pm$0.02} & \textbf{0.0129$\pm$0.0003} & \textbf{77.85$\pm$0.23} & \textbf{0.0493$\pm$0.0011} \\
    \bottomrule
    \end{tabular}%
  \label{tab:std}%
\end{table*}%

\section{More Visualizations}

In this section, we first present additional visualizations of the predicted structures from DiffCSP and other generative methods in Figure~\ref{fig:vis_apd}. In line with Figure~\ref{fig:vis}, our DiffCSP provides more accurate predictions compared with the baseline methods. Figure~\ref{fig:vis_gen} illustrates 16 generated structures on Perov-5, Carbon-24 and MP-20. The visualization shows the capability of DiffCSP to generate diverse structures. We further visualize the generation process of 5 structures from MP-20 in Figure~\ref{fig:trace}. We find that the generated structure $\gM_0$ is periodically translated from the ground truth structure, indicating that the marginal distribution $p(\gM_0)$ follows the desired periodic translation invariance. We provide the detailed generation process in the Supplementary Materials.

\begin{figure*}[h]
    \centering    \includegraphics[width=\textwidth]{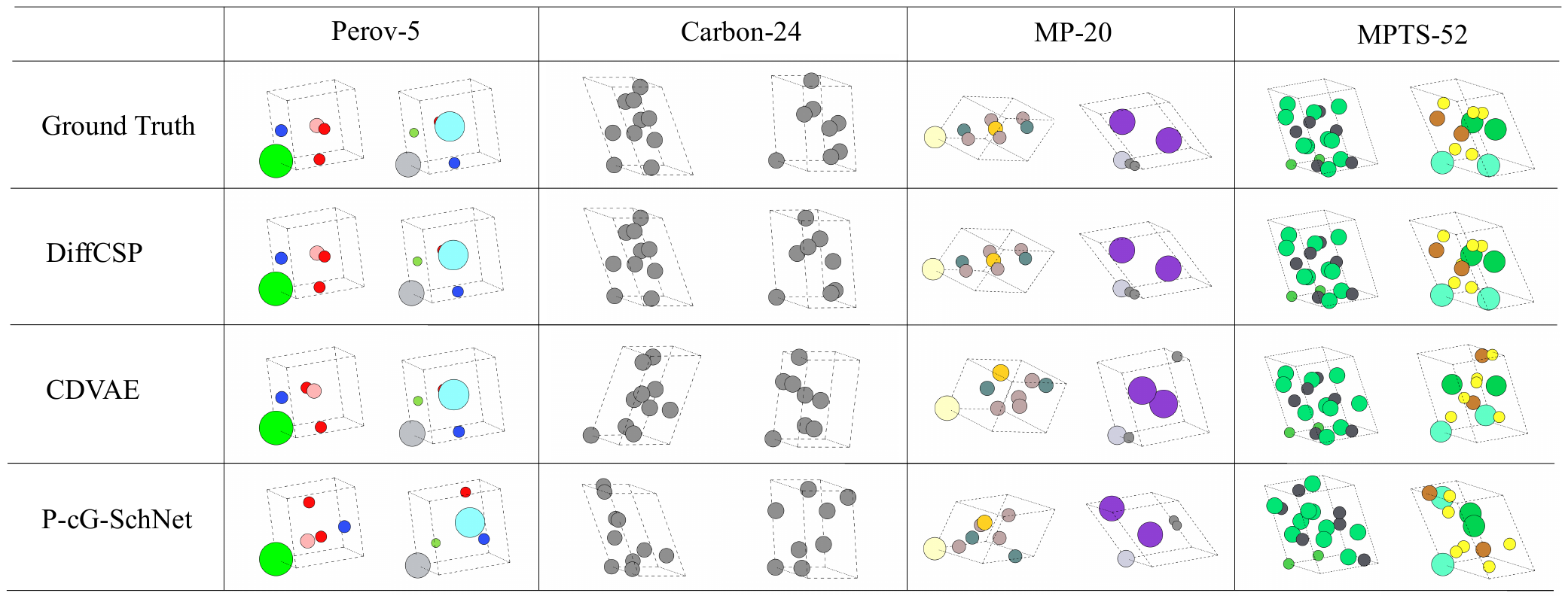}
    \caption{Additional visualizations of the predicted structures from different methods. We translate the same atom to the origin for better visualization and comparison.}
    \label{fig:vis_apd}
\end{figure*}

\begin{figure*}[h]
    \centering    \includegraphics[width=0.7\textwidth]{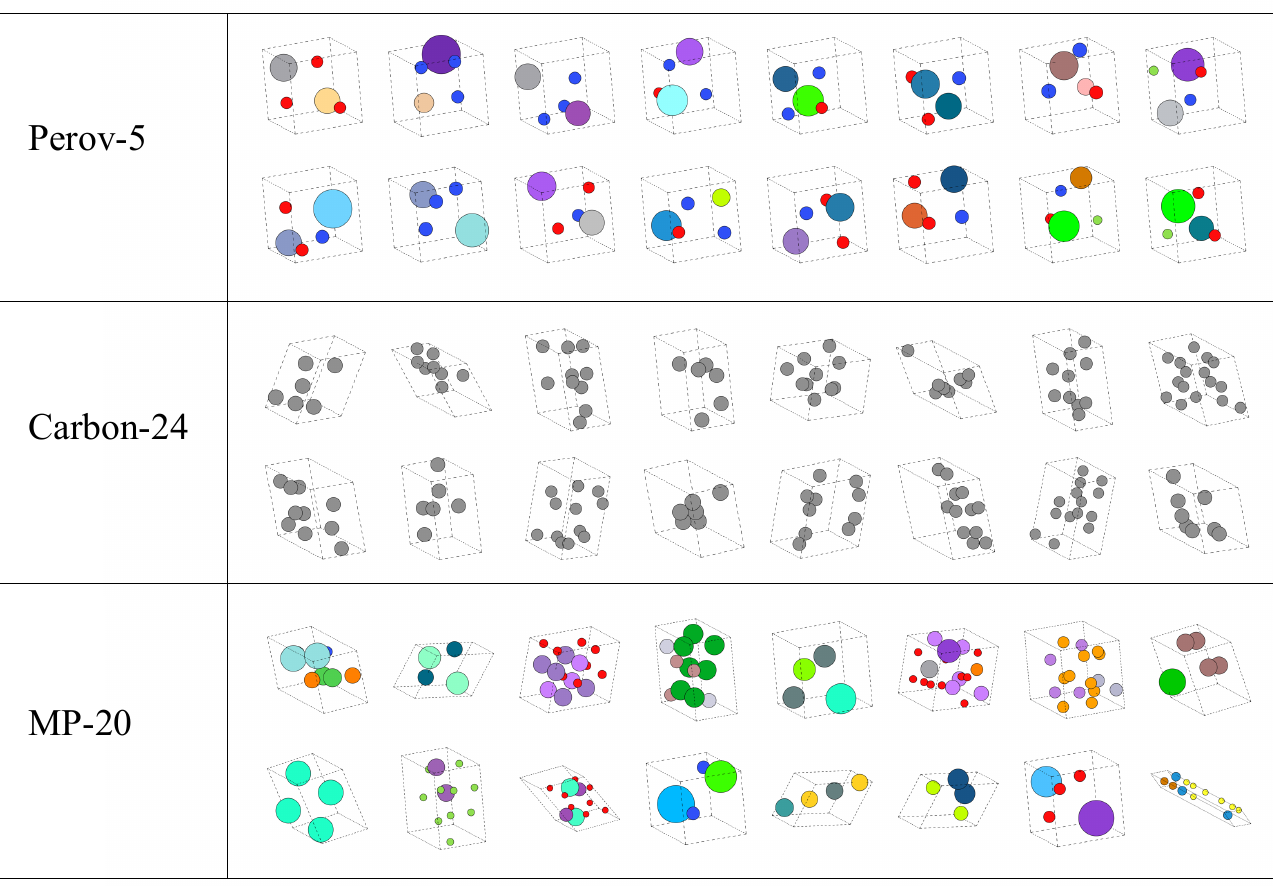}
    \caption{Visualization of the generated structures by DiffCSP.}
    \label{fig:vis_gen}
\end{figure*}

\begin{figure*}[b]
    \centering    \includegraphics[width=0.9\textwidth]{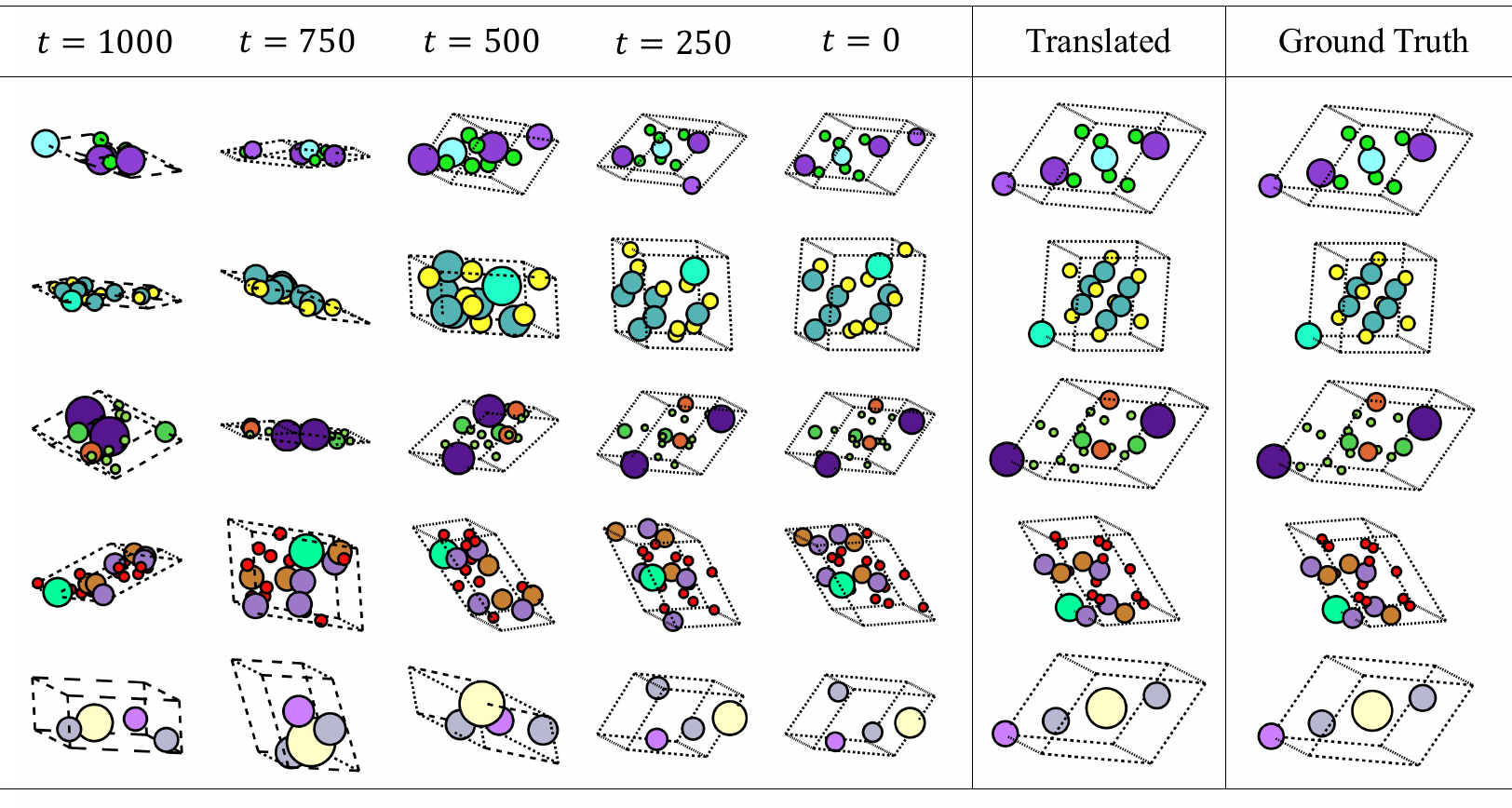}
    \caption{Visualization of the generation process on MP-20. The column ``Translated'' means translating the same atom in the generated structure $\gM_0$ to the origin as the ground truth for better comparison.}
    \label{fig:trace}
\end{figure*}

\end{document}